\documentclass[11pt, leqno, letterpaper]{amsart}

\usepackage{graphicx}    

\usepackage[margin=1.2in]{geometry}

\usepackage{amsmath,amsthm,amsfonts,amssymb}

\newtheorem{theorem}{Theorem}

\newtheorem{proposition}[theorem]{Proposition}

\newtheorem{remark}[theorem]{Remark}

\begin{document}

\title[Discrete sums of GBM, Annuities and Asian Options]
{Discrete Sums of Geometric Brownian Motions, Annuities and Asian Options}

\author{Dan Pirjol}
\email{
dpirjol@gmail.com}

\author{Lingjiong Zhu}
\address
{Department of Mathematics \newline
\indent Florida State University \newline
\indent 1017 Academic Way \newline
\indent Tallahassee, FL-32306 \newline
\indent United States of America}
\email{
zhu@math.fsu.edu}

\date{20 May 2016}

\subjclass[2000]{60G70,60K99}
\keywords{sum of geometric Brownian motions, stochastic recurrence equations, geometric stopping, 
annuities, Asian options, exponential L\'{e}vy processes.}

\begin{abstract}
The discrete sum of geometric Brownian motions plays an important role 
in modeling stochastic annuities in insurance.
It also plays a pivotal role in the pricing of Asian options in mathematical 
finance. 
In this paper, we study the probability distributions of the infinite 
sum of geometric Brownian motions, the sum of geometric Brownian
motions with geometric stopping time, and the finite sum of the 
geometric Brownian motions. These results are extended to the
discrete sum of the exponential L\'evy process.
We derive tail asymptotics and compute numerically the asymptotic 
distribution function. We compare the results against the known 
results for the continuous time integral of the geometric Brownian motion up to an
exponentially distributed time. The results are illustrated with numerical
examples for life annuities with discrete payments, and Asian options.
\end{abstract}

\maketitle

\section{Introduction}

The valuation and risk management of annuities are important topics
in the actuarial science. There is a wide variety of contractual annuity payoffs,
ranging from fixed payouts to variable annuities, possibly with guaranteed 
benefits features. The pricing and valuation of such contracts have been considered in the
actuarial literature, under different choices of equity price, mortality and 
interest rate models \cite{FV,Hardy,VSHZ,DeS,RSST}. Most of the theoretical
work on variable annuities in the literature is in a continuous-time setting,
although in practice these instruments are defined and simulated in discrete
time. We note that discrete time models have been also considered in the 
actuarial literature \cite{GSY}. In this paper we will work in a discrete time
setting, and will compare our results with the continuous time approximation.

We will consider in this paper the case of a fixed payout annuity under the
geometric Brownian motion model for the equity returns. The interest rates will
be assumed to be constant and deterministic, and the mortality will be described
by a geometric distribution. This setting is of interest both from a theoretical
point of view, and as a simple starting point for building more complicated
and realistic models. The geometric Brownian motion model for equities prices
(the Black-Scholes model) is the simplest model available in the literature. 
The lognormal model (Independent Lognormal ILN) is also one of the 
recommended models in the American Academy of Actuaries report, and is used 
for generating pre-packaged economic scenarios (see \cite{AAA}, Appendix 2.). 
This model can be extended by adding stochastic volatility, modeled either 
as a diffusion or using a regime-switching approach.

Under the lognormal returns equity model, the present value of a fixed
coupon annuity can be related to the sum of geometric Brownian motions (GBM)
sampled on a discrete time points. The distributional properties of this
quantity, both in a discrete and continuous time setting, have been widely
studied in the mathematical finance and actuarial literature, see
\cite{DufresneReview,DufresneAnnuity} for overviews. 
Consider for example a stochastic annuity which pays a constant coupon $C$
at regular times with time step size $\tau$. Furthermore, assume that the
payments are made from a portfolio containing an asset with value process 
$U_t$ which is stochastic and follows a geometric Brownian motion:
\begin{equation}
U_t = e^{\sigma W_t + (m - \frac12 \sigma^2) t},
\end{equation}
where $W_t$ is a standard Brownian motion starting from $0$ at time $0$, and 
$\sigma, m$ are real parameters. $U_t$ represents the value at time $t$ of one 
unit of currency invested at time zero into the asset (for example stocks),
which has normally distributed log-returns.

The present value of the annuity, assumed to pay $n$ coupons at times $t_i$, is
\begin{equation}
S_n = \sum_{i=1}^n D_{t_i} \frac{C}{U_{t_i}},
\end{equation}
where $D_{t_i}$ are discount factors. This quantity is a random variable
and represents the liability, or amount of currency required at time zero
in order to be able to pay the annuity cash flows.
We are interested in the shape of the probability density of $S_n$,
and especially in the tails of this distribution, which give the probabilities
of extreme values of the liability represented by the annuity payments.
If the discounting rate $r$ is constant
and deterministic, the discount factors are exponential $D_{t_i} = e^{-r t_i}$
and the annuity value $S_n$ is proportional to the sum of a geometric
Brownian motion sampled on discrete times $t_i$
\begin{equation}\label{Sinv}
\tilde S_n = \sum_{i=1}^n e^{-\sigma W_{t_i} + (\frac12 \sigma^2 - m) t_i},
\end{equation}
where we redefined $m + r \to m$. 

A related quantity appears in the problem of pricing Asian options with
discrete sampling, which are derivatives linked to the average of the price
of an asset $S_t$
\begin{equation}\label{Avdef}
A_n = \frac{1}{n} \sum_{i=1}^n S_{t_i},
\end{equation}
under the assumption that $S_t$ follows a geometric Brownian motion
\begin{equation}
S_t = S_0 e^{\sigma W_t + (m-\frac12\sigma^2) t}.
\end{equation}
Here $W_{t}$ is a standard Brownian motion, and $\sigma,m$ are the volatility
and drift of the asset.

If the averaging times are uniformly distributed and the time step is
sufficiently small, the time average (\ref{Avdef}) can be approximated 
by the continuous time average
\begin{equation}
A_n = \frac{1}{t_n} \int_0^{t_n} S_t dt\,.
\end{equation}
This reduces the problem to the
study of the distributional properties of the time integral of the geometric
Brownian motion, which has been extensively studied in the literature
\cite{Dufresne1990,DufresneReview,Yor1,Yor2}. 

The distributional properties of the time-integral of the geometric
Brownian motion simplify very much in the infinitely large time limit. 
Define
\begin{equation}
Y_T = \int_0^T dt e^{\sigma W_t + (m-\frac12\sigma^2)t }.
\end{equation}
The following result was proven in \cite{Dufresne1990}, see Section 3
in \cite{DufresneLN} for a survey of related results.

\begin{theorem}[\cite{Dufresne1990,DufresneLN}]\label{Thm:Dufresne}
The limit $\lim_{T\to \infty} Y_T = Y_\infty$ exists in distribution if and only if
$m - \frac12\sigma^2 < 0$, and,
\begin{equation}
\frac{2}{\sigma^2 Y_\infty} = 
\mathbf{Gamma}\left( 1 - \frac{2m}{\sigma^2}, 1\right)\,.
\end{equation}
\end{theorem}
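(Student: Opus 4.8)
The plan is to settle the convergence dichotomy first by elementary pathwise arguments, and then to identify the limit law by realizing $Y_\infty$ as the equilibrium distribution of an explicit one-dimensional diffusion. For convergence, I would write the exponent as $\sigma W_t + (m-\tfrac12\sigma^2)t = t\big(\sigma W_t/t + (m-\tfrac12\sigma^2)\big)$ and invoke the strong law of large numbers for Brownian motion, $W_t/t\to 0$ a.s., so that the integrand behaves like $e^{(m-\frac12\sigma^2)t}$ for large $t$. When $m-\tfrac12\sigma^2<0$ the integrand decays exponentially, $Y_T\uparrow Y_\infty<\infty$ a.s., and in particular $Y_T$ converges in distribution; when $m-\tfrac12\sigma^2\ge 0$ (using recurrence, or the law of the iterated logarithm when the drift vanishes) the integrand does not tend to $0$, so $Y_T\to\infty$ a.s.\ and no limiting distribution exists. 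This gives the stated equivalence.

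For the distributional identification I would introduce the linear diffusion
\begin{equation}
dX_t = \sigma X_t\,dW_t + (mX_t+1)\,dt,\qquad X_0=0,
\end{equation}
and solve it using the integrating factor $U_t^{-1}=e^{-\sigma W_t-(m-\frac12\sigma^2)t}$, which yields the closed form $X_t=U_t\int_0^t U_s^{-1}\,ds=\int_0^t e^{\sigma(W_t-W_s)+(m-\frac12\sigma^2)(t-s)}\,ds$. Reversing the Brownian increments on $[0,t]$ (the process $r\mapsto W_t-W_{t-r}$ is again a Brownian motion) shows $X_t\stackrel{d}{=}Y_t$ for every fixed $t$, and since $Y_t\Rightarrow Y_\infty$ the law of $Y_\infty$ is exactly the limiting law of $X_t$.

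The next step is to compute the invariant density. With diffusion coefficient $a(x)=\sigma^2x^2$ and drift $b(x)=mx+1$ on $(0,\infty)$, the zero-flux first integral of the stationary Fokker--Planck equation $\tfrac12(ap)'=bp$ integrates to
\begin{equation}
p(x)\ \propto\ x^{\,2m/\sigma^2-2}\,e^{-2/(\sigma^2 x)},
\end{equation}
which is normalizable precisely when $2m/\sigma^2-2<-1$, i.e.\ $m-\tfrac12\sigma^2<0$, recovering the convergence criterion. The substitution $Z=2/(\sigma^2 x)$ turns $p$ into a density proportional to $z^{\alpha-1}e^{-z}$ with $\alpha=1-2m/\sigma^2$, so that $2/(\sigma^2 Y_\infty)\sim\mathbf{Gamma}(1-2m/\sigma^2,1)$, as claimed.

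The hard part will be rigor in this last identification: I must know that $X_t$ genuinely converges in law to the density above, i.e.\ that the diffusion is positive recurrent with a \emph{unique} invariant measure. This calls for a Feller boundary classification at $0$ and $\infty$; under $m-\tfrac12\sigma^2<0$ the inward drift at $0$ and the subcritical behavior at $\infty$ make both boundaries non-attracting, so the candidate density is the unique stationary law and the equilibrium of $X_t$ coincides with $\lim_t Y_t$. As an independent check one can compute the integer moments directly, obtaining $E[Y_\infty^n]=(2/\sigma^2)^n\prod_{j=1}^n(\alpha-j)^{-1}$ for $n<\alpha$, matching those of the inverse-Gamma law; but since only finitely many moments are finite, moment matching alone cannot pin down the distribution, so I would rely on the invariant-measure argument, or equivalently analytically continue to the full Mellin transform $E[Y_\infty^s]=(2/\sigma^2)^s\,\Gamma(\alpha-s)/\Gamma(\alpha)$ and invert.
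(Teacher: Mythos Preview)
Your proposal is correct, and your approach differs substantially from the paper's. The paper in fact cites Theorem~\ref{Thm:Dufresne} as a known result from \cite{Dufresne1990,DufresneLN} and does not prove the convergence dichotomy at all; what it does supply, later in Section~\ref{InfiniteSection}, is an \emph{alternative} derivation of the distributional identification only. That derivation works by taking the discrete-time infinite sum $X_\infty$, using its stochastic fixed-point equation $X_\infty \stackrel{d}{=}\mathcal{A}(1+X_\infty)$ to write a relation among the moments $\mathbb{E}[(\tau X_\infty)^{-\theta}]$, expanding in $\tau$, and letting $\tau\to 0$ to obtain the recursion $\mathbb{E}[Y_0^{-\theta-1}] = \big(\tfrac{\theta+1}{2}\sigma^2 - m\big)\mathbb{E}[Y_0^{-\theta}]$; summing against $t^j/j!$ turns this into a first-order linear ODE for the m.g.f.\ of $1/Y_0$, whose solution is the Gamma m.g.f.

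Your route---solve the linear SDE $dX_t=\sigma X_t\,dW_t+(mX_t+1)\,dt$, use time reversal to get $X_t\stackrel{d}{=}Y_t$, then read off the stationary density from the zero-flux Fokker--Planck equation---is the classical diffusion argument and is arguably closer in spirit to Dufresne's original proof. It has the advantage of handling the convergence dichotomy and the identification in one self-contained continuous-time framework, and the normalizability condition on the speed density transparently reproduces $m<\tfrac12\sigma^2$. The paper's approach, by contrast, is tailored to its theme: it shows that the discrete-time recursion already encodes the inverse-Gamma law in the $\tau\to 0$ limit, which is conceptually nice but requires the separate $L^1$ convergence result (their Theorem~\ref{InftyThm}) to connect $\tau X_\infty$ with $Y_\infty$. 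Your honest flag about needing positive recurrence / boundary classification to upgrade ``candidate invariant density'' to ``limit law of $X_t$'' is exactly the right caveat; once that is checked (and it is routine here), your argument is complete.
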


The limit probability density of $Y_\infty$ is given by the inverse Gamma 
distribution. The probability density function  $\mathbb{P}(Y_\infty\in
(z,z+dz)) = \phi_\infty(z)dz$, is given by
\begin{equation}\label{invGamma}
\phi_\infty(z;\sigma,m) = \left( \frac{2}{\sigma^2} \right)^{1-\frac{2m}{\sigma^2}}\left(\frac{1}{z^{2}}\right)^{1-\frac{m}{\sigma^{2}}}
\frac{1}{\Gamma\left(1-\frac{2m}{\sigma^2}\right)}
\exp\left( - \frac{2}{\sigma^2 z}\right)
\,.
\end{equation}
The cumulative distribution is
\begin{equation}\label{invGamma2}
\Phi_\infty(x;\sigma,m) = \mathbb{P}(Y_\infty < x) = 
\frac{1}{\Gamma\left(1-\frac{2m}{\sigma^2}\right)}
\Gamma\left(1 - \frac{2m}{\sigma^2}; \frac{2}{\sigma^2 x}\right),
\end{equation}
where $\Gamma(a,x) = \int_x^\infty  t^{a-1} e^{-t}dt$ is the incomplete 
Gamma function. 

We study in this paper the distributional properties of the discrete sum
of the GBM sampled on uniformly spaced times $t_i = i\tau$ 
\begin{equation}\label{Xndiscrete}
X_n = \sum_{i=1}^n e^{\sigma W_{t_i} + (m -\frac12\sigma^2 ) t_i}.
\end{equation}
The sum of the inverse of the GBM (\ref{Sinv})
can be obtained from this by replacing $\sigma \to -\sigma, m \to -m + \sigma^2$.
We will study the distributional properties of $X_n$, of the infinite sum of GBM
$X_\infty = \lim_{n\to \infty} X_n$, and of $X_N$, the sum of GBM with 
geometrically distributed $N$. We study the existence of the $n\to \infty$ 
limit, the tail asymptotics for all three cases, and the moments of the 
corresponding random variables, and also the convergence
of the discrete sums to the continuous time integrals. 
The results will then be applied to study annuities and Asian options.
We also also illustrate our results numerically.

The problem of determining the distribution of the discrete sum of GBM is
closely related to the study of the distribution of the sum of log-normals 
which has a long history, see \cite{reviewLN} for a literature overview. 
The simplest approach is to approximate the distribution with a log-normal 
random variable \cite{Levy,DufresneLN}. The tails asymptotics of the
sum of correlated log-normals has been studied in \cite{ARN,GXY,GT}.
Dufresne \cite{DufresneLN} studied the distribution of the sum 
(\ref{Xndiscrete}), generalized by the introduction of an weight function,
in the small volatility limit $\sigma \to 0$, and derived a limit theorem
for this quantity. 

Milevsky and Posner \cite{MP} studied the pricing of Asian options with discrete
monitoring. They applied the distributional properties of the infinite time 
integral of the GBM to the discrete sum $X_n$, and proposed to
use the inverse Gamma distribution as an approximation for the finite sum,
as a parametric approximation for pricing Asian options. The inverse Gamma
distribution is obtained only for continuous time; in discrete time the 
distribution is different, and we study here the corrections to the 
approximation of the discrete infinite sum with the continuous time integral.

The present authors \cite{PZ} studied the distributional properties of the sum 
(\ref{Xndiscrete})
in the $n\to \infty$ limit at fixed $\beta = \frac12 \sigma^2 \tau n^2$, and found
almost sure limit, fluctuations and large deviations of the average
$A_n = \frac{1}{n} X_n$ and then obtained the asymptotics for the prices
of out-of-the-money, in-the-money and at-the-money Asian options. 
Notice that the fixed $\beta$ and $n\rightarrow\infty$ limit corresponds
to the small maturity or small volatility regimes. For Asian options, 
the typical maturity can be $1$ year or $2$ year and the volatility is 
usually less than $100\%$. Thus the small maturity or small volatility regimes 
are of practical interest in business applications.
The case of fixed $\sigma, \tau$ and $n\rightarrow\infty$ gives 
the infinite sum of geometric Brownian motions $X_{\infty}$. 
This situation corresponds to the large maturity or large time-horizon regime, 
which are of practical interest for stochastic annuities, but less relevant 
for the pricing of Asian options.

The paper is organized as follows. In Section \ref{InfiniteSection}, 
we study the properties of the distribution of the infinite sum of a 
geometric Brownian motion.
We obtain a stochastic recurrence equation and use it to derive
an integral equation for the probability density function of the infinite sum
of the geometric Brownian motion $X_\infty$ sampled on a uniformly spaced
time grid with time step $\tau$. We derive the right and left
tail asymptotics for $X_\infty$. 
As the time step goes to zero $\tau \to 0$, we show that the infinite sum
of geometric Brownian motions converges in $L^1$ norm 
to the infinite integral, which is known to follow an inverse Gamma distribution
\cite{Dufresne1990}, see Theorem~\ref{Thm:Dufresne}.
In Section \ref{GeometricSection}, we repeat the analysis for the sum of 
geometric Brownian motions stopped at a geometric stopping time. 
In the context of the stochastic annuities, this corresponds to a
geometrically distributed mortality time. 
As the time step goes to zero, we show that this sum converges in distribution
to the continuous time integral of geometric Brownian motion with exponentially 
distributed terminal time, whose properties are well understood in the literature.
In Section \ref{LevySection}, we extend the results for the sums of geometric 
Brownian motions to the sums of exponential L\'{e}vy processes,
and in Section \ref{Sec:PositiveMoments}, we derive explicit expressions
for the positive moments of the infinite sum of GBM. 
In Section \ref{ApplicationSection}, we apply our results to study 
annuities with finite mortality, stochastic mortality
and the risk measures of the annuities, including the Value-at-Risk, and also
the pricing of discrete time Asian options.
We conclude the paper with numerical illustrations of the results in 
Section \ref{NumericalSection}.

\section{Infinite Sum of the GBM}\label{InfiniteSection}

We study in this section the infinite sum of GBM $X_\infty = 
\lim_{n\to \infty} X_n$, where $X_n$ is defined in (\ref{Xndiscrete}).
The density of this random variable
satisfies a functional relation, given by the following result.

\begin{proposition} 
The infinite sum of GBM $X_{\infty}$, if exists, satisfies the relation
\begin{equation}\label{Xrel}
X_\infty = \mathcal{A} (1 + X_\infty),
\end{equation}
where the equality is in distribution, and we denoted
\begin{equation}\label{Adef}
\mathcal{A} := e^{\sigma \sqrt{\tau} Z + (m-\frac12 \sigma^2)\tau},
\end{equation}
with $Z\sim N(0,1)$ independent of $X_{\infty}$.
\end{proposition}

\begin{proof}
The infinite sum of GBM, if exists, can be written as the sum of products 
of i.i.d. factors of the form (\ref{Adef})
\begin{equation}
X_\infty = \mathcal{A}_1 + \mathcal{A}_1 \mathcal{A}_2 
         + \mathcal{A}_1 \mathcal{A}_2 \mathcal{A}_3 + \cdots 
    = \mathcal{A}_1 (1 + \mathcal{A}_2 + \mathcal{A}_2 \mathcal{A}_3 + 
\cdots) = \mathcal{A}_1 (1 + X_\infty),
\end{equation}
where the last equality is in distribution. Since $\mathcal{A}_{i}$ are 
i.i.d., 
$\mathcal{A}_{1}$ is independent of $(1+\mathcal{A}_{2}+\mathcal{A}_{2}\mathcal{A}_{3}+\cdots)$ and
hence $\mathcal{A}$ is independent of $X_{\infty}$ in \eqref{Xrel}.
\end{proof}

\begin{remark}
An easier way to understand \eqref{Xrel} is by writing \eqref{Xrel}
as $X_{\infty}= \mathcal{A}(1+\hat{X}_{\infty})$ in distribution, where $X_{\infty}=\hat{X}_{\infty}$
in distribution and $\mathcal{A}$ is independent of $\hat{X}_{\infty}$.
\end{remark}

We study now the existence and uniqueness of the solution of the 
functional relation (\ref{Xrel}). 
This can be related to the existence of a limit distribution for the linear
recursion
\begin{equation}\label{Kesten}
x_{i+1} = a_i x_i + b_i
\end{equation}
where $(a_i,b_i)$ is an i.i.d. pair of random variables with values in
$[0,\infty) \times \mathbb{R}$. 
This reduces to the functional relation (\ref{Xrel}) by taking 
$a_i = b_i = \mathcal{A}$.

Linear random recursions of the form (\ref{Kesten}) have been widely studied 
in the probability literature \cite{Kesten,Vervaat,Dufresne1,MST,PZ2}.
Explicit solutions for the limit distribution in particular cases have been
found in \cite{Alsmeyer,Dufresne1990,Dufresne1,Goldie,GP}. 

The conditions for the existence and uniqueness of the limit distribution of
the recursion (\ref{Kesten}) are well-known 
\cite{Kesten,Vervaat}. The following are sufficient conditions 
for the existence and uniqueness of the limit distribution
\begin{equation}\label{cond12}
\mathbb{E}[\log a_i] <0 \,,\qquad \mathbb{E}[\log(b_i)_+] < \infty\,.
\end{equation}
This gives us the following result.

\begin{proposition}\label{prop:2}
The functional relation (\ref{Xrel}) has a solution, which is furthermore
unique, provided that the following inequality holds
\begin{equation}\label{ineq}
m < \frac12 \sigma^2 .
\end{equation}
\end{proposition}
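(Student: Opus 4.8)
The plan is to invoke directly the criterion for the linear random recursion \eqref{Kesten} quoted immediately above the statement. Setting $a_i = b_i = \mathcal{A}_i$, where the $\mathcal{A}_i$ are i.i.d.\ copies of $\mathcal{A}$ in \eqref{Adef}, the recursion becomes $x_{i+1} = \mathcal{A}_i(1+x_i)$, and the functional relation \eqref{Xrel} is precisely its stationarity (fixed-point) equation: a limit in distribution of the iterates is a fixed point of the map $x \mapsto \mathcal{A}(1+x)$ in distribution, and conversely. Hence existence and uniqueness of the limit distribution for \eqref{Kesten} translate immediately into existence and uniqueness of the solution of \eqref{Xrel}, and it suffices to verify the two sufficient conditions \eqref{cond12}.

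The first condition, $\mathbb{E}[\log a_i] < 0$, is where the hypothesis enters. Since $\log \mathcal{A} = \sigma\sqrt{\tau}\, Z + (m - \frac12\sigma^2)\tau$ with $Z \sim N(0,1)$ and $\mathbb{E}[Z]=0$, I would compute
\begin{equation}
\mathbb{E}[\log \mathcal{A}] = \Bigl(m - \tfrac12\sigma^2\Bigr)\tau,
\end{equation}
which is strictly negative precisely when $m < \frac12\sigma^2$, i.e.\ exactly the inequality \eqref{ineq}. The second condition, $\mathbb{E}[\log(b_i)_+] < \infty$, is automatic: $\log \mathcal{A}$ is Gaussian, hence integrable, so $\mathbb{E}[(\log \mathcal{A})_+] \le \mathbb{E}[|\log\mathcal{A}|] < \infty$ irrespective of the sign of the drift. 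With both conditions in hand, the cited results of Kesten and Vervaat deliver a unique limit distribution, which completes the proof.

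The two moment checks are routine; the only point requiring a little care is the identification that makes the argument legitimate. One must confirm that the pair $(a_i,b_i) = (\mathcal{A}_i,\mathcal{A}_i)$ is a genuine i.i.d.\ sequence of $[0,\infty)\times\mathbb{R}$-valued pairs as required by \eqref{Kesten}, even though the two coordinates of each pair coincide (are perfectly correlated). The criterion \eqref{cond12} imposes no independence of $a_i$ and $b_i$ within a single step, only across steps, so this causes no difficulty. I would also note, as a consistency check, that the condition $m < \frac12\sigma^2$ coincides with the drift condition $m - \frac12\sigma^2 < 0$ appearing in Theorem~\ref{Thm:Dufresne} for the continuous-time integral.
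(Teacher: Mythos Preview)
Your proposal is correct and follows essentially the same approach as the paper: both verify the two sufficient conditions \eqref{cond12} for the recursion \eqref{Kesten} with $a_i=b_i=\mathcal{A}$, computing $\mathbb{E}[\log\mathcal{A}]=(m-\tfrac12\sigma^2)\tau$ for the first and noting the second is automatic, then invoking the Kesten--Vervaat criterion. Your write-up is slightly more detailed (spelling out why the perfectly correlated pair $(a_i,b_i)$ is admissible and why the Gaussian log yields finiteness of the positive part), but the argument is the same, including the closing remark that the condition matches that of Theorem~\ref{Thm:Dufresne}.
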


\begin{proof}
The first condition (\ref{cond12}) gives
\begin{equation}
\mathbb{E}[\log \mathcal{A}] = \left(m-\frac12 \sigma^2\right) \tau < 0 \,,
\end{equation}
which is satisfied for $ m< \frac12\sigma^2$. 
The second condition (\ref{cond12}) is always satisfied. 

We conclude that the functional relation (\ref{Xrel}) has a solution which
is furthermore unique, provided that the inequality (\ref{ineq}) holds. We
note also that the condition under which this result holds is the same
as that in the Theorem~\ref{Thm:Dufresne} for the continuous time case.
\end{proof}

The relation (\ref{Xrel}) gives an integral equation for the 
probability density function of $X_\infty$, defined as 
$\mathbb{P}(X_\infty \in (x,x+dx)) = f(x;\beta,\rho) dx$. 
We will show that the probability density function of $X_{\infty}$ 
depends only on the parameters
\begin{equation}
\beta := \sigma^2 \tau \,,\qquad \rho := m\tau\,.
\end{equation}

\begin{proposition}
The function $f(x;\beta,\rho)$ satisfies the integral equation
\begin{equation}\label{integraleq}
f(x;\beta,\rho) = \frac{1}{x} \int_0^\infty \frac{dy}{\sqrt{2\pi \beta}} 
\exp\left( -\frac{1}{2\beta} 
\left[ \log \left(\frac{x}{1+y}\right) + \frac12 \beta - \rho \right]^2
\right)
f(y;\beta,\rho)\,.
\end{equation}
\end{proposition}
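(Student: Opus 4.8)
The plan is to derive the integral equation directly from the distributional fixed-point relation (\ref{Xrel}) by conditioning. I would work with the form given in the Remark, namely $X_\infty \stackrel{d}{=} \mathcal{A}(1+\hat X_\infty)$, where $\hat X_\infty$ is an independent copy of $X_\infty$ and $\mathcal{A}$ is independent of $\hat X_\infty$. The first observation to record is that $\mathcal{A}$ defined in (\ref{Adef}) is lognormal: $\log \mathcal{A} = \sigma\sqrt{\tau}\,Z + (m-\tfrac12\sigma^2)\tau$ is Gaussian with mean $(m-\tfrac12\sigma^2)\tau = \rho - \tfrac12\beta$ and variance $\sigma^2\tau = \beta$. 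Since $\mathcal{A}$ is absolutely continuous and independent of the strictly positive factor $1+\hat X_\infty \geq 1$, the product $\mathcal{A}(1+\hat X_\infty)$ is automatically absolutely continuous; this guarantees that $X_\infty$ admits a density $f$, so that the equation to be proven is meaningful.

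The core step is a law-of-total-probability computation. Conditioning on $\hat X_\infty = y$, the conditional law of $X_\infty$ is that of $\mathcal{A}(1+y)$, so that $\log X_\infty = \log\mathcal{A} + \log(1+y)$ is Gaussian with mean $\rho - \tfrac12\beta + \log(1+y)$ and variance $\beta$. A one-line change of variables for the lognormal density then gives the conditional density of $X_\infty$ at the point $x$ as $\frac{1}{x\sqrt{2\pi\beta}}\exp\!\big(-\frac{1}{2\beta}[\log x - \log(1+y) + \tfrac12\beta - \rho]^2\big) = \frac{1}{x\sqrt{2\pi\beta}}\exp\!\big(-\frac{1}{2\beta}[\log\frac{x}{1+y} + \tfrac12\beta - \rho]^2\big)$, which is precisely the kernel appearing in (\ref{integraleq}). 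Integrating this conditional density against the law $f(y)\,dy$ of $\hat X_\infty$ over $y\in(0,\infty)$ and pulling the factor $1/x$ out of the integral yields (\ref{integraleq}) exactly.

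To justify the claimed parametrization $f = f(x;\beta,\rho)$, I would note that the kernel just derived depends on $(\sigma,m,\tau)$ only through the combinations $\beta=\sigma^2\tau$ and $\rho=m\tau$. Since Proposition~\ref{prop:2} guarantees that the fixed-point relation has a unique solution whenever $m<\tfrac12\sigma^2$ (equivalently $\rho < \tfrac12\beta$), the density solving the integral equation is uniquely determined by $(\beta,\rho)$, which establishes the dependence asserted just before the statement.

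The only point requiring genuine care is the interchange of integration implicit in the total-probability step: one must confirm that the conditional density is jointly measurable in $(x,y)$ and integrable, so that the product-measure argument is legitimate. This is immediate here, since the integrand is nonnegative and the Gaussian kernel is bounded and continuous, so Tonelli's theorem applies without obstruction. I therefore do not expect any substantive difficulty; the proof is essentially the conditioning argument married to the explicit lognormal density, and the main task is simply to verify that the algebra of the Gaussian exponent reproduces the stated kernel.
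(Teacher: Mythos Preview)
Your proof is correct and follows essentially the same idea as the paper's: both exploit the independence in the distributional identity $X_\infty \stackrel{d}{=} \mathcal{A}(1+\hat X_\infty)$ together with the explicit lognormal law of $\mathcal{A}$. The only difference is which factor you condition on: the paper conditions on $\mathcal{A}$ (via $Z$), writes the CDF, differentiates, and then performs two successive substitutions $w=e^{\sigma\sqrt{\tau}z+(m-\frac12\sigma^2)\tau}$ and $y=x/w-1$ to reach (\ref{integraleq}); you instead condition on $\hat X_\infty=y$, which makes $\mathcal{A}(1+y)$ lognormal for fixed $y$ and delivers the kernel in one step without any change of variables. Your route is marginally cleaner for exactly this reason, while the paper's route has the minor advantage of never needing to argue separately that $X_\infty$ admits a density (it obtains the density by differentiating an explicit integral representation of the CDF). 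Either way the content is the same conditioning argument, and your justification of the $(\beta,\rho)$-parametrization via uniqueness is also in line with the paper.
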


\begin{proof}
For any $x>0$, 
\begin{equation}
\mathbb{P}(X_{\infty}\leq x)=\mathbb{P}(\mathcal{A}(1+X_{\infty})\leq x)
=\int_{-\infty}^{\infty}\mathbb{P}\left(X_{\infty}\leq
\frac{x}{e^{\sigma\sqrt{\tau}z+(m-\frac{1}{2}\sigma^{2})\tau}}-1\right)
\frac{1}{\sqrt{2\pi}}e^{-\frac{z^{2}}{2}}dz.
\end{equation}
Differentiating w.r.t. $x$, we get
\begin{equation}
f(x;\beta,\rho)
=\int_{-\infty}^{\infty}f\left(\frac{x}{e^{\sigma\sqrt{\tau}z+(m-\frac{1}{2}\sigma^{2})\tau}}-1;\beta,\rho\right)
\frac{\frac{1}{\sqrt{2\pi}}e^{-\frac{z^{2}}{2}}dz}{e^{\sigma\sqrt{\tau}z+(m-\frac{1}{2}\sigma^{2})\tau}}.
\end{equation}
Change the variable and let $w=e^{\sigma\sqrt{\tau}z+(m-\frac{1}{2}\sigma^{2})\tau}$, 
we get
\begin{equation}
f(x;\beta,\rho)= \int_0^x \frac{dw}{\sqrt{2\pi\sigma^2\tau} w^2}
e^{-\frac{1}{2\sigma^2\tau}(\log w - (m-\frac12\sigma^2)\tau)^2} f(x/w-1;\beta,\rho). \nonumber
\end{equation}
Finally, make the change of variable $y=x/w-1$ in the last integral, and
use the definition $\beta=\sigma^{2}\tau$ and $\rho=m\tau$, we get (\ref{integraleq}).
\end{proof}

In the next section we study the asymptotic form of the density function
$f(x;\beta,\rho)$, in the limits of very small and very large argument $x$. 
In section~\ref{NumericalSection} we solve the integral equation 
(\ref{integraleq}) numerically, and investigate the dependence of the solution
$f(x;\beta,\rho)$
on the parameters $\beta,\rho$.

\subsection{Asymptotics of the Distribution of $X_\infty$}

We study here the tail asymptotics of the infinite sum of GBM. 

\subsubsection{Small $x$ asymptotics of $\mathbb{P}(X_{\infty}\leq x)$}

The limiting density function $f(x)$ falls off faster than 
any power of $x$ as $x\to 0$. We start by proving that all inverse moments
$\mathbb{E}[X_\infty^{-n}]$ with $n\in \mathbb{N}$ exist and are finite. 

\begin{proposition}\label{prop:3}
The inverse integer moments of $X_\infty$ of all orders
are finite and are bounded from above as
\begin{equation}
\mathbb{E}[X_\infty^{-n}] = e^{\frac12 \sigma^2 \tau n (n+1) - n m\tau}
\mathbb{E}\left[(1+X_\infty)^{-n}\right] \leq 
e^{\frac12 \sigma^2 \tau n (n+1) - n m\tau}\,,\quad
n \in \mathbb{N}\,.
\end{equation}
\end{proposition}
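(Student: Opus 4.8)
The plan is to read off everything from the distributional fixed-point relation (\ref{Xrel}), which is the only structural fact about $X_\infty$ that this proof needs. Applying the map $t \mapsto t^{-n}$ to both sides of $X_\infty \stackrel{d}{=} \mathcal{A}(1+X_\infty)$ gives the distributional identity $X_\infty^{-n} \stackrel{d}{=} \mathcal{A}^{-n}(1+X_\infty)^{-n}$, and since equality in distribution of nonnegative random variables is preserved under expectation (taking values in $[0,\infty]$), I would take expectations of both sides. The crucial input is that in (\ref{Xrel}) the factor $\mathcal{A}$ is independent of $X_\infty$; this lets me factor the right-hand expectation as $\mathbb{E}[X_\infty^{-n}] = \mathbb{E}[\mathcal{A}^{-n}]\,\mathbb{E}[(1+X_\infty)^{-n}]$ by Tonelli's theorem for nonnegative integrands.

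Next I would evaluate the first factor explicitly. Since $\mathcal{A}^{-n} = e^{-n\sigma\sqrt{\tau}Z - n(m-\frac12\sigma^2)\tau}$ with $Z \sim N(0,1)$, the Gaussian moment generating function $\mathbb{E}[e^{aZ}] = e^{a^2/2}$ applied with $a = -n\sigma\sqrt{\tau}$ yields $\mathbb{E}[\mathcal{A}^{-n}] = e^{\frac12 n^2 \sigma^2\tau}\, e^{-n(m-\frac12\sigma^2)\tau} = e^{\frac12\sigma^2\tau\, n(n+1) - nm\tau}$, which is exactly the prefactor appearing in the statement. This establishes the claimed equality $\mathbb{E}[X_\infty^{-n}] = e^{\frac12\sigma^2\tau\,n(n+1) - nm\tau}\,\mathbb{E}[(1+X_\infty)^{-n}]$.

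For the upper bound I would use positivity directly: $X_\infty$ is an almost-sure limit of sums of strictly positive terms, so $X_\infty > 0$ and hence $1 + X_\infty > 1$ almost surely, giving $(1+X_\infty)^{-n} \leq 1$ and therefore $\mathbb{E}[(1+X_\infty)^{-n}] \leq 1$. Substituting this bound into the equality immediately produces the asserted inequality $\mathbb{E}[X_\infty^{-n}] \leq e^{\frac12\sigma^2\tau\,n(n+1) - nm\tau}$.

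The one point that needs care --- and the only place where the argument could fail if handled carelessly --- is finiteness, since a priori $\mathbb{E}[X_\infty^{-n}]$ could be infinite and the factored identity would then read $\infty = \infty$ vacuously. I would resolve this by reading the identity in the opposite direction: the right-hand side is the product of the finite Gaussian moment $\mathbb{E}[\mathcal{A}^{-n}]$ and the quantity $\mathbb{E}[(1+X_\infty)^{-n}] \leq 1$, both of which are finite, so the right-hand side is finite and hence so is $\mathbb{E}[X_\infty^{-n}]$. Thus finiteness is a free byproduct of the factorization rather than something to be argued separately, and no integrability hypothesis beyond the mere existence of $X_\infty$ (guaranteed by (\ref{ineq})) is required.
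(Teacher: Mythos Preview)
Your proof is correct and follows exactly the same approach as the paper's own proof, which tersely says to take the inverse of the functional relation (\ref{Xrel}) raised to power $n$, take expectations, and use positivity of $X_\infty$. You have simply filled in the details (the explicit Gaussian moment computation and the careful observation that finiteness follows automatically from the factored form), but the structure is identical.
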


\begin{proof}
Follows by taking the inverse of the functional relation (\ref{Xrel}) raised 
to power $n$, and taking the expectation of both sides. The last step follows
from the positivity of $X_\infty$.
\end{proof}

>From Proposition~\ref{prop:3} we can prove that the probability that $X_\infty
\leq \epsilon$ is smaller than any power $\epsilon^n$ up to a constant. 
Using the Chebyshev inequality we find
\begin{equation}\label{upperEp}
\mathbb{P}(X_\infty \leq \epsilon) \leq \epsilon^n \mathbb{E}[X_\infty^{-n}]
\leq \epsilon^n e^{\frac12 \sigma^2 \tau n (n+1) - n m\tau}\,,\quad
n \in \mathbb{N}\,.
\end{equation}

\begin{remark}
These properties are similar to those of the infinite time-integral of the 
GBM. The inverse Gamma distribution (\ref{invGamma}) has finite negative
moments of all orders. This follows from the fact that $e^{-1/x}$ falls off
faster than any power of $x$ as $x\to 0$. 
\end{remark}

The result \eqref{upperEp} gives only an upper bound on the left tail of 
$X_{\infty}$. We can indeed precisely determine the leading order asymptotics 
for the left tail of $X_{\infty}$, which is given by the following result:

\begin{proposition}\label{LeftTailProp}
\begin{equation}
\lim_{\epsilon\rightarrow 0}\frac{\log\mathbb{P}(X_{\infty}\leq\epsilon)}{(\log\epsilon)^{2}}
=-\frac{1}{2\sigma^{2}\tau}\,.
\end{equation}
\end{proposition}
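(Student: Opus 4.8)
The plan is to establish matching upper and lower bounds on $\log\mathbb{P}(X_\infty\leq\epsilon)$, each controlled by the Gaussian lower tail of the single factor $\mathcal{A}$. The common analytic input is the elementary estimate that, as $\delta\to 0^+$,
\[
\log\mathbb{P}(\mathcal{A}\leq\delta) = -\frac{(\log\delta)^2}{2\sigma^2\tau}\bigl(1+o(1)\bigr),
\]
which follows by writing $\mathbb{P}(\mathcal{A}\leq\delta)=\mathbb{P}\bigl(Z\leq(\log\delta-(m-\tfrac12\sigma^2)\tau)/(\sigma\sqrt{\tau})\bigr)$ and applying the standard normal tail asymptotics $\log\Phi(-u)\sim -u^2/2$ as $u\to\infty$. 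The drift $(m-\tfrac12\sigma^2)\tau$ is a bounded shift of $\log\delta$ and therefore does not affect the leading $(\log\delta)^2$ order; in particular, replacing $\delta$ by any fixed multiple $c\delta$ leaves this leading order unchanged, a robustness I will use below.

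For the upper bound I would use that every term of the series defining $X_\infty$ is nonnegative, so $X_\infty\geq\mathcal{A}_1$ almost surely, whence $\mathbb{P}(X_\infty\leq\epsilon)\leq\mathbb{P}(\mathcal{A}_1\leq\epsilon)$. Since $(\log\epsilon)^2>0$, dividing and taking the limit superior together with the estimate above yields
\[
\limsup_{\epsilon\to 0}\frac{\log\mathbb{P}(X_\infty\leq\epsilon)}{(\log\epsilon)^2}\leq -\frac{1}{2\sigma^2\tau}.
\]

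For the lower bound I would exploit the functional relation (\ref{Xrel}) in the form $X_\infty=\mathcal{A}(1+\hat{X}_\infty)$ with $\hat{X}_\infty$ an independent copy of $X_\infty$. Since $X_\infty<\infty$ almost surely, choose $M$ with $\mathbb{P}(\hat{X}_\infty\leq M)=:c>0$. On the event $\{\mathcal{A}\leq\epsilon/(1+M)\}\cap\{\hat{X}_\infty\leq M\}$ one has $\mathcal{A}(1+\hat{X}_\infty)\leq\epsilon$, so by independence
\[
\mathbb{P}(X_\infty\leq\epsilon)\geq\mathbb{P}\left(\mathcal{A}\leq\frac{\epsilon}{1+M}\right)\mathbb{P}(\hat{X}_\infty\leq M)=c\,\mathbb{P}\left(\mathcal{A}\leq\frac{\epsilon}{1+M}\right).
\]
Taking logarithms, dividing by $(\log\epsilon)^2$, and noting that $\log c$ is a fixed constant (hence negligible after division) while the remaining factor has leading order $-(\log\epsilon)^2/(2\sigma^2\tau)$, I obtain $\liminf_{\epsilon\to 0}\log\mathbb{P}(X_\infty\leq\epsilon)/(\log\epsilon)^2\geq -1/(2\sigma^2\tau)$. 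Combining the two bounds gives the claim. The main obstacle is precisely the lower bound: one must ensure the extra multiplicative randomness $1+\hat{X}_\infty$ contributes only at lower order, and the decisive step is the truncation to $\{\hat{X}_\infty\leq M\}$, which converts that unbounded factor into the harmless constant $c$ and leaves the leading $(\log\epsilon)^2$ behavior entirely to the event $\{\mathcal{A}\text{ small}\}$.
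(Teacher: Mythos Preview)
Your proof is correct and follows essentially the same strategy as the paper's: both derive the upper bound from $\mathbb{P}(X_\infty\leq\epsilon)\leq\mathbb{P}(\mathcal{A}\leq\epsilon)$ and the lower bound by conditioning on $\{\hat X_\infty\leq M\}$ (the paper writes $\delta$ in place of your $M$) to reduce to the lognormal left tail of $\mathcal{A}$. The only cosmetic difference is that you obtain the upper bound directly from $X_\infty\geq\mathcal{A}_1$, whereas the paper routes it through the distributional identity $X_\infty=\mathcal{A}(1+X_\infty)\geq\mathcal{A}$.
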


\begin{proof}
Recall that $X_{\infty}=\mathcal{A}(1+X_{\infty})$ in distribution and $X_{\infty}$ is independent of 
$\mathcal{A}:=e^{\sigma\sqrt{\tau}Z+(m-\frac{1}{2}\sigma^{2})\tau}$, 
where $Z\sim N(0,1)$. 
For any $\epsilon>0$, since $X_{\infty}\geq 0$ a.s.,
\begin{equation}
\mathbb{P}(X_{\infty}\leq\epsilon)=\mathbb{P}(\mathcal{A}(1+X_{\infty})\leq\epsilon)
\leq\mathbb{P}(\mathcal{A}\leq\epsilon).
\end{equation}
On the other hand, for any $\delta>0$, 
\begin{align}
\mathbb{P}(X_{\infty}\leq\epsilon)&=\mathbb{P}(\mathcal{A}(1+X_{\infty})\leq\epsilon)
\\
&\geq\mathbb{P}(\mathcal{A}(1+X_{\infty})\leq\epsilon|0\leq X_{\infty}\leq\delta)\mathbb{P}(X_{\infty}\leq\delta)
\nonumber
\\
&\geq\mathbb{P}\left(\mathcal{A}\leq\frac{\epsilon}{1+\delta}\bigg|0\leq X_{\infty}\leq\delta\right)\mathbb{P}(X_{\infty}\leq\delta)
\nonumber
\\
&=\mathbb{P}\left(\mathcal{A}\leq\frac{\epsilon}{1+\delta}\right)\mathbb{P}(X_{\infty}\leq\delta),
\nonumber
\end{align}
where the last step above used the independency of $X_{\infty}$ and $\mathcal{A}$. 

Therefore, for any $\epsilon,\delta>0$,
\begin{equation}\label{LowerUpper}
\log\mathbb{P}\left(\mathcal{A}\leq\frac{\epsilon}{1+\delta}\right)
  +\log\mathbb{P}(X_{\infty}\leq\delta)
\leq\log\mathbb{P}(X_{\infty}\leq\epsilon)
\leq\log\mathbb{P}(\mathcal{A}\leq\epsilon).
\end{equation}
Since $\mathcal{A}$ is log-normally distributed, we have
\begin{equation}
\mathbb{P}(\mathcal{A}\leq\epsilon)=
\mathbb{P}\left(\sigma\sqrt{\tau}Z\leq\log\epsilon-\left(m-\frac{1}{2}\sigma^{2}\right)\tau\right)
=\Phi\left(\frac{1}{\sigma\sqrt{\tau}} \left(\log\epsilon -
\left(m-\frac{1}{2}\sigma^{2}\right)\tau\right) \right),
\end{equation}
where $\Phi(x) = \int_{-\infty}^x \frac{dt}{\sqrt{2\pi}} e^{-\frac12 t^2}$ is
the well-known normal cumulative distribution function.
Using the asymptotic expansion for $\Phi(x)$ with $x\to -\infty$ we get
\begin{equation}\label{Atail}
\lim_{\epsilon\rightarrow 0}\frac{\log\mathbb{P}
  (\mathcal{A}\leq\epsilon)}{(\log\epsilon)^{2}}
=-\frac{1}{2\sigma^{2}\tau}\,.
\end{equation}
Taking $\epsilon\to 0$ at fixed $\delta$ in \eqref{LowerUpper}, and using
(\ref{Atail}), we conclude that
\begin{equation}
\lim_{\epsilon\rightarrow 0}\frac{\log\mathbb{P}(X_{\infty}\leq\epsilon)}{(\log\epsilon)^{2}}
=-\frac{1}{2\sigma^{2}\tau}\,.
\end{equation}
\end{proof}

\begin{remark}
In the continuous time case, $Y_{\infty}$ is inverse Gamma distributed and
\begin{equation}
\lim_{\epsilon\rightarrow 0}\epsilon\log\mathbb{P}(Y_{\infty}\leq\epsilon)
=-\frac{2}{\sigma^{2}} \,.
\end{equation}
We see that the left tail asymptotics of $X_{\infty}$ is different 
from that of (inverse Gamma distributed) $Y_{\infty}$. The density of the
discrete time sum $X_\infty$ is less suppressed near the $x\to 0$ point than
the density of the continuous time integral $Y_{\infty}$.
\end{remark}

\subsubsection{Large $x$ asymptotics of $\mathbb{P}(X_{\infty}\geq x)$}

The tail behavior of the limit distribution of the recursion (\ref{Kesten})
has been well studied in the literature. We summarize here the main results,
see \cite{MST} for a recent review of applications. Define the function
\begin{equation}
\varphi(\kappa) = \mathbb{E}[(a_1)^\kappa]\,,\qquad \kappa \in \mathbb{R}\,.
\end{equation}
If there exists a positive $\alpha$ such that $\varphi(\alpha)=1$, 
$\mathbb{E}[(a_1)^\alpha \log a_1], \mathbb{E}[|b_1|^\alpha]$ are both
finite, the law of $\log a_1$ is non-arithmetic\footnote{An arithmetic
distribution is defined as a distribution which has support only on
integer multiples of some real number.} and for every $x$, 
$\mathbb{P}(a_1 x + b_1 = x ) < 1$, then there exists a constant $c>0$
and
\begin{equation}
\mathbb{P}(x_\infty > x) \sim c x^{-\alpha} \,.
\end{equation}

For our case we have
\begin{equation}
\varphi(\kappa) = \mathbb{E}[(\mathcal{A})^\kappa] =
\mathbb{E}\left[e^{\kappa \sigma \sqrt{\tau} z + \kappa(m-\frac12
\sigma^2)\tau}\right] = e^{\frac12 \beta \kappa(\kappa-1) + \kappa \rho}\,.
\end{equation}
The equation $\varphi(\alpha)=1$ has solutions $\alpha=0$ and
\begin{equation}
\alpha = 1 - \frac{2m}{\sigma^2} =1 - \frac{2\rho}{\beta}  \,.
\end{equation}
The remaining technical conditions being satisfied, we find that
the limiting distribution of $X_\infty$ has the following tail behavior:

\begin{proposition}\label{PropXinftyRightTail}
The right tail asymptotic behavior of $X_\infty$ is given by the following
relation
\begin{equation}\label{tail}
\mathbb{P}(X_\infty > x) \sim c x^{-1 + \frac{2\rho}{\beta}}\,.
\end{equation}
\end{proposition}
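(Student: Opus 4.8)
The plan is to derive (\ref{tail}) directly from the Kesten--Goldie tail theorem for the affine recursion (\ref{Kesten}), as quoted immediately above the statement, specialized to the fully dependent choice $a_i = b_i = \mathcal{A}$. Since $X_\infty$ is precisely the stationary solution of $X_\infty = \mathcal{A}(1 + X_\infty)$ in distribution, and since the positive exponent $\alpha = 1 - 2\rho/\beta$ has already been identified as the unique nonzero root of $\varphi(\alpha) = 1$ (the root $\alpha = 0$ being discarded), the entire proof reduces to verifying the hypotheses of that theorem for our lognormal multiplier.

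First I would confirm $\alpha > 0$ under the standing assumption $m < \frac12\sigma^2$ of Proposition \ref{prop:2}: indeed $\alpha = 1 - 2m/\sigma^2 > 0$ is equivalent to $m < \frac12\sigma^2$, the very condition guaranteeing the existence of $X_\infty$, so the tail is a genuine power law. Next I would check the remaining conditions one at a time, all of which follow from the fact that $\log \mathcal{A} = \sigma\sqrt\tau\, Z + (m - \frac12\sigma^2)\tau$ is Gaussian and $\mathcal{A}$ is lognormal: (i) $\mathbb{E}[\mathcal{A}^\alpha |\log \mathcal{A}|] < \infty$ and $\mathbb{E}[\mathcal{A}^\alpha] = \varphi(\alpha) = 1 < \infty$, since a normal variable has all exponential moments; (ii) $\mathbb{E}[|b_1|^\alpha] = \mathbb{E}[\mathcal{A}^\alpha] < \infty$; (iii) the law of $\log \mathcal{A}$ is non-arithmetic because it is absolutely continuous, hence supported on no lattice $d\mathbb{Z}$; and (iv) for every fixed $x$ one has $\mathbb{P}(\mathcal{A} x + \mathcal{A} = x) = \mathbb{P}(\mathcal{A} = x/(1+x)) = 0 < 1$, again because $\mathcal{A}$ has a density. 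Although $a_i$ and $b_i$ coincide here (so they are maximally dependent), the version of the theorem stated above imposes its conditions in a form valid for arbitrary joint laws of $(a_1,b_1)$, so this dependence causes no difficulty.

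With all hypotheses in force the theorem yields a constant $c > 0$ with $\mathbb{P}(X_\infty > x) \sim c\, x^{-\alpha} = c\, x^{-1 + 2\rho/\beta}$, which is exactly (\ref{tail}). The one quantity I do not expect to determine is the prefactor $c$: the Kesten--Goldie theorem produces it only as an implicit expectation over the associated renewal structure, and there is no reason to anticipate a closed form. Thus the argument is entirely a matter of hypothesis-checking, and the only mildly delicate points are condition (iv) and the coincidence $a_i = b_i$, both of which are dispatched by the absolute continuity of the lognormal law.
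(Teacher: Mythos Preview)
Your proposal is correct and follows essentially the same approach as the paper: both apply the Kesten--Goldie tail theorem to the recursion $X_\infty = \mathcal{A}(1+X_\infty)$, identify the exponent $\alpha = 1 - 2\rho/\beta$ as the positive root of $\varphi(\alpha)=1$, and invoke the theorem. If anything, your write-up is more explicit than the paper's, which simply states ``the remaining technical conditions being satisfied'' without spelling out the verification of non-arithmeticity, the moment conditions, or the non-degeneracy condition $\mathbb{P}(a_1 x + b_1 = x) < 1$ that you carefully check.
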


\begin{remark}
The inequality (\ref{ineq}) gives that $\alpha > 0$, such that the cumulative
distribution of $X_\infty$ is guaranteed to decrease to zero as $x\to \infty$.
This ensures that the distribution of $X_\infty$ is normalizable to 1.
\end{remark}

\begin{remark}
The tail asymptotics (\ref{tail}) is identical with the tail asymptotics
of the continuous time integral of the GBM for $T\to \infty$ obtained in 
Theorem~\ref{Thm:Dufresne},
as can be seen from the explicit result for the density of $X_\infty$ given
in (\ref{invGamma}).
\end{remark}

The constant $c$ in \eqref{tail} can be also estimated. For a general 
linear recursion
$x_{i+1}=a_i x_i + b_i$ with i.i.d. $(a_i,b_i)$, we have the following results.
If $a_1,b_1\geq 0$, the constant $c$ is given by \cite{Goldie}
\begin{equation}\label{cdef}
c = \frac{\mathbb{E}[(a_1 x_0+b_1)^\alpha]-\mathbb{E}[(a_1 x_0)^\alpha]}
{\alpha \mathbb{E}[a_1^\alpha \log a_1]}\,,
\end{equation}
where $x_0$ is distributed according to the stationary law of $x_n$, and is 
independent of $(a_1,b_1)$. 

For our case we have  $a_1 = b_1 = \mathcal{A} =
e^{\sigma\sqrt{\tau} Z + (m-\frac12 \sigma^2)\tau}$. The coefficient becomes
\begin{equation}\label{crel}
c = \frac{\mathbb{E}[\mathcal{A}^\alpha [(1+X_\infty)^\alpha-X_\infty^\alpha]]}
{\alpha \mathbb{E}[\mathcal{A}^\alpha \log \mathcal{A}]}.
\end{equation}
Using $\alpha = 1 - \frac{2m}{\sigma^2}$ we have
\begin{eqnarray}
\mathbb{E}[\mathcal{A}^\alpha \log \mathcal{A}] &=& 
\mathbb{E}\left[\left(\sigma\sqrt{\tau} Z + \left(m-\frac12 \sigma^2\right)\tau\right) 
e^{\alpha \sigma\sqrt{\tau} Z + \alpha (m-\frac12\sigma^2) \tau}\right] \\
&=& \frac12 \sigma^2 \tau - m\tau   > 0 \,. \nonumber
\end{eqnarray}
This is always positive under the constraint $m < \frac12 \sigma^2$.

The numerator in (\ref{crel}) factors into two expectations, by independence
of $\mathcal{A}, X_\infty$. The expectation over $X_\infty$ is not easy 
to compute in the general case. It can be expressed as an integral over the 
density of $X_\infty$
\begin{equation}\label{expX}
\mathbb{E}[(1+X_\infty)^\alpha-X_\infty^\alpha] = 
\int_0^\infty dx f(x;\beta,\rho) [(1+x)^\alpha-x^\alpha],
\end{equation}
which in general has to be evaluated numerically, using the solution for 
$f(x;\beta,\rho)$ obtained by solving the integral equation.

The case of positive integer $\alpha\in \mathbb{N}$ is simpler, as the 
expectation (\ref{expX}) is a linear combination of the first $\alpha-1$
positive integer moments of $X_\infty$.
In Section~\ref{Sec:PositiveMoments} it is shown that these moments can be 
evaluated in closed form (whenever they exist). 
The case $m=0$ is particularly simple, as we have $\alpha=1$. 
For this case the expectation (\ref{expX}) is 1, and we get
\begin{equation}
c = \frac{2}{\sigma^2\tau}\,.
\end{equation}
The right tail asymptotics for the density of $X_\infty$ for $m=0$ is
\begin{equation}
f(x;\sigma,0) \sim -\frac{d}{dx} (c/x) = \frac{c}{x^2} = 
\frac{2}{\sigma^2 \tau x^2} \,.
\end{equation}
The corresponding tail asymptotics for $\tau X_\infty$ is
\begin{equation}
g(x;\sigma,0) = \frac{1}{\tau} p(x/\tau;\sigma,0) = \frac{2}{\sigma^2 x^2}\,.
\end{equation}
This is identical with the tail behavior of the exact continuous time 
integral of the GBM which is the given by the inverse Gamma density function
$\phi_\infty(x;\sigma,0)$ defined in (\ref{invGamma}).
However, for general $m$ this simple result does not hold, as can be
shown for example by taking $\alpha=2$ and using the results of 
Section~\ref{Sec:PositiveMoments}.

\subsection{Limiting Distribution for $\tau\rightarrow 0$}

It is natural to ask if the Riemann sum $\tau X_{\infty}$ converges
to $Y_{\infty}$ in distribution as $\tau\rightarrow 0$. We have the following
result.

\begin{theorem}\label{InftyThm}
(i) For any $m,\sigma$ real numbers, we have
\begin{equation}
\tau\sum_{i=1}^{N}e^{\sigma W_{t_{i-1}}+(m-\frac{1}{2}\sigma^{2})t_{i-1}}
\rightarrow\int_{0}^{T}e^{\sigma W_{t}+(m-\frac{1}{2}\sigma^{2})t}dt,
\end{equation}
in $L^{1}$ norm as $\tau\rightarrow 0$ at $N\tau = T$ fixed.

(ii) Furthermore, provided  that $m<0$, we have also 
\begin{equation}
\tau\sum_{i=1}^{\infty}e^{\sigma W_{t_{i-1}}+(m-\frac{1}{2}\sigma^{2})t_{i-1}}
\rightarrow\int_{0}^{\infty}e^{\sigma W_{t}+(m-\frac{1}{2}\sigma^{2})t}dt,
\end{equation}
in $L^{1}$ norm as $\tau\rightarrow 0$.
\end{theorem}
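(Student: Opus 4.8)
The plan is to realize both the Riemann sum and the integral as integrals of the \emph{same} integrand evaluated at different time arguments, and then to bound the $L^1$ distance by a single Lebesgue integral of the $L^1$ modulus of continuity of the geometric Brownian motion. Write $f(t) = e^{\sigma W_t + (m-\frac12\sigma^2)t}$ and let $\eta_\tau(t) = \tau\lfloor t/\tau\rfloor$ denote the largest grid point not exceeding $t$, so that $\eta_\tau(t) = t_{i-1}$ for $t \in [t_{i-1}, t_i)$. Then the left-endpoint Riemann sum is $\tau\sum_{i=1}^N f(t_{i-1}) = \int_0^T f(\eta_\tau(t))\,dt$ with $T = N\tau$, and the difference between sum and integral is $\int_0^T (f(\eta_\tau(t)) - f(t))\,dt$. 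Taking absolute values and moving the expectation inside by Tonelli (the integrand being nonnegative) gives
\begin{equation}
\mathbb{E}\left[\left|\tau\sum_{i=1}^N f(t_{i-1}) - \int_0^T f(t)\,dt\right|\right] \le \int_0^T \mathbb{E}\big[|f(\eta_\tau(t)) - f(t)|\big]\,dt.
\end{equation}

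Next I would compute the integrand using the independence of Brownian increments. For $s \le t$, writing $f(t) - f(s) = f(s)\big(e^{\sigma(W_t - W_s) + (m-\frac12\sigma^2)(t-s)} - 1\big)$ and noting that the increment factor is independent of $f(s)$ and that $f(s)\ge 0$, I obtain $\mathbb{E}[|f(t)-f(s)|] = e^{ms}\,\mathbb{E}[|G_{t-s} - 1|]$, where $G_\delta$ is lognormal with $\mathbb{E}[G_\delta] = e^{m\delta}$ and $\mathrm{Var}(G_\delta) = e^{2m\delta}(e^{\sigma^2\delta}-1)$. The split $\mathbb{E}[|G_\delta - 1|] \le \sqrt{\mathrm{Var}(G_\delta)} + |e^{m\delta}-1|$ then yields the uniform estimate $\mathbb{E}[|G_\delta - 1|] \le C\sqrt{\tau}$ for all $\delta \in [0,\tau]$, with a constant $C = C(\sigma,m)$ that stays bounded as $\tau \to 0$. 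Substituting $s = \eta_\tau(t)$ gives $\mathbb{E}[|f(\eta_\tau(t)) - f(t)|] \le C\sqrt{\tau}\, e^{m\eta_\tau(t)}$.

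For part (i) this finishes the argument at once: on $[0,T]$ one has $e^{m\eta_\tau(t)} \le e^{|m|T}$, so the right-hand side is at most $C\sqrt{\tau}\,T e^{|m|T} \to 0$ as $\tau \to 0$, for \emph{any} real $m,\sigma$. For part (ii) I would replace $[0,T]$ by $[0,\infty)$ and invoke the hypothesis $m < 0$ to control the tail. First I would record that $m<0$ makes both objects genuinely $L^1$ (their means are $\int_0^\infty e^{mt}\,dt = 1/|m|$ and $\tau\sum_{i=1}^\infty e^{m(i-1)\tau} = \tau/(1-e^{m\tau})$), so the quantities in (ii) are well-defined. Then, since $\int_0^\infty e^{m\eta_\tau(t)}\,dt = \tau\sum_{i=0}^\infty e^{mi\tau} = \tau/(1 - e^{m\tau})$ converges (as $e^{m\tau} < 1$) and tends to $1/|m|$ as $\tau \to 0$, the full bound becomes $C\sqrt{\tau}\cdot \tau/(1-e^{m\tau}) \le C'\sqrt{\tau} \to 0$.

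The routine steps are the variance computation and the elementary estimate on $\mathbb{E}[|G_\delta - 1|]$. The only place demanding care is the passage to the infinite horizon in (ii): the bound must be uniform in $\tau$, and this is exactly what $m<0$ secures, since it renders the geometric factor $e^{m\eta_\tau(t)}$ summable with a sum that stays bounded (indeed converges) as the mesh shrinks. The interchange of expectation with the improper integral is justified throughout by Tonelli's theorem applied to the nonnegative integrand.
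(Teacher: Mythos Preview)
Your proof is correct and takes a genuinely different route from the paper. The paper bounds each interval's contribution by $\tau\max_{t_{i-1}\le t\le t_i}|f(t)-f(t_{i-1})|$, then splits this supremum into two pieces (one for the Brownian part, one for the drift), invokes the reflection principle to control $\max_{0\le t\le\tau}e^{\sigma W_t}$, and sums the resulting geometric series in $e^{mt_{i-1}}$. You instead write the Riemann sum as $\int_0^T f(\eta_\tau(t))\,dt$ and bound the $L^1$ distance by $\int_0^T \mathbb{E}\,|f(t)-f(\eta_\tau(t))|\,dt$, computing the integrand exactly via the independent-increments factorization $\mathbb{E}\,|f(t)-f(s)|=e^{ms}\,\mathbb{E}\,|G_{t-s}-1|$ and the elementary moment bound $\mathbb{E}\,|G_\delta-1|\le\sqrt{\operatorname{Var}(G_\delta)}+|e^{m\delta}-1|=O(\sqrt\tau)$. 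This avoids the reflection principle entirely and yields a clean $O(\sqrt\tau)$ rate; the paper's sup-based estimate is slightly heavier machinery for the same conclusion. Both approaches reduce the infinite-horizon case to summability of $e^{m t_{i-1}}$ under $m<0$, so the treatment of part~(ii) is essentially parallel.
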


\begin{proof}
The proof is given in Appendix~\ref{App:proofs}.
\end{proof}

This result implies that $\tau X_{\infty}$ can be approximated by 
$\int_{0}^{\infty}e^{\sigma W_{t}+(m-\frac{1}{2}\sigma^{2})t}dt$ for sufficiently small values of $\tau$,
and we know from the literature that $\int_{0}^{\infty}e^{\sigma W_{t}+(m-\frac{1}{2}\sigma^{2})t}dt$
follows an inverse Gamma distribution. Indeed, our discrete approximation approach
gives an alternative proof that the limiting continuous integral indeed follows an inverse Gamma distribution.

\begin{proposition}
Assuming $m-\frac12\sigma^2<0$ we have the following limit in distribution:
\begin{equation}
\lim_{\tau \to 0} \frac{1}{\tau X_\infty(\sigma^2\tau,m\tau)} = 
\mathbf{Gamma}\left(1 - \frac{2m}{\sigma^2},\frac12\sigma^2\right) \,.
\end{equation}
\end{proposition}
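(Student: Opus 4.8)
The plan is to obtain the limit law of $1/(\tau X_\infty)$ from the weak convergence of the Riemann sum $\tau X_\infty$ to the continuous-time integral $Y_\infty=\int_0^\infty e^{\sigma W_t+(m-\frac12\sigma^2)t}\,dt$, then transport it through the map $x\mapsto 1/x$ and finish with a scaling identity for the Gamma law. First I would record the bookkeeping that makes the stated constant appear. By Theorem~\ref{Thm:Dufresne}, $\frac{2}{\sigma^2 Y_\infty}\sim\mathbf{Gamma}\!\left(1-\frac{2m}{\sigma^2},1\right)$, so $\frac{1}{Y_\infty}=\frac{\sigma^2}{2}\cdot\frac{2}{\sigma^2 Y_\infty}$ is $\frac{\sigma^2}{2}$ times this variable; since multiplying a $\mathbf{Gamma}\!\left(a,1\right)$ variable by $\frac{\sigma^2}{2}$ produces a $\mathbf{Gamma}\!\left(a,\frac{\sigma^2}{2}\right)$ variable, we get $\frac{1}{Y_\infty}\sim\mathbf{Gamma}\!\left(1-\frac{2m}{\sigma^2},\frac{\sigma^2}{2}\right)$, exactly the claimed right-hand side. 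The whole statement thus reduces to proving $\tau X_\infty\to Y_\infty$ in distribution and passing to reciprocals.

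For the reciprocal step I would invoke the continuous mapping theorem: $Y_\infty\in(0,\infty)$ almost surely (positivity is clear and finiteness uses $m-\frac12\sigma^2<0$), so $x\mapsto 1/x$ is continuous on a set of full $Y_\infty$-measure, and $\tau X_\infty\Rightarrow Y_\infty$ then yields $\frac{1}{\tau X_\infty}\Rightarrow\frac{1}{Y_\infty}$.

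The core is therefore the weak convergence $\tau X_\infty\Rightarrow Y_\infty$ as $\tau\to 0$. When $m<0$ this is immediate, since Theorem~\ref{InftyThm}(ii) gives $L^1$-convergence, which implies convergence in distribution. The regime $0\le m<\frac12\sigma^2$ is the main obstacle: there the tail exponent is $\alpha=1-\frac{2m}{\sigma^2}\le 1$, so $\mathbb{E}[Y_\infty]=\infty$, $L^1$-convergence genuinely fails, and Theorem~\ref{InftyThm}(ii) no longer applies. To treat all $m<\frac12\sigma^2$ uniformly I would argue pathwise, realizing $\tau X_\infty=\tau\sum_{i\ge 1}e^{\sigma W_{i\tau}+\mu i\tau}$ and $Y_\infty=\int_0^\infty e^{\sigma W_t+\mu t}\,dt$ (with $\mu:=m-\frac12\sigma^2<0$) on one and the same Brownian path, and proving that $\tau X_\infty\to Y_\infty$ almost surely, which already forces convergence in distribution.

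The a.s.\ Riemann-sum convergence rests on a $\tau$-uniform tail bound. Since $\mu<0$ and $W_t=o(t)$ almost surely, for a.e.\ $\omega$ there are $T_0(\omega)$ and $\delta(\omega)>0$ with $\sigma W_t+\mu t\le-\delta t$ for all $t\ge T_0$. Splitting the sum and the integral at a level $M\ge T_0$, the head $\tau\sum_{i\tau\le M}g(i\tau)\to\int_0^M g\,dt$ by ordinary Riemann-sum convergence of the continuous integrand $g(t)=e^{\sigma W_t+\mu t}$ on the compact $[0,M]$, while the tails are dominated uniformly in small $\tau$ by the geometric estimate $\tau\sum_{i\tau>M}e^{-\delta i\tau}\le\frac{\tau}{1-e^{-\delta\tau}}e^{-\delta M}\to\frac{1}{\delta}e^{-\delta M}$ together with $\int_M^\infty g\,dt\le\frac{1}{\delta}e^{-\delta M}$; sending $\tau\to 0$ first and then $M\to\infty$ eliminates the tails and gives the a.s.\ limit. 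Combining this with the first two paragraphs completes the argument. The delicate point to get right is precisely this interchange of the $\tau\to 0$ and tail limits, i.e.\ securing a tail bound uniform in $\tau$ over the whole range $m<\frac12\sigma^2$.
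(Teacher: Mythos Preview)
Your argument is correct but follows a genuinely different route from the paper's. The paper works directly from the distributional fixed-point relation $Y_\tau=\mathcal{A}(\tau+Y_\tau)$ for $Y_\tau:=\tau X_\infty$: it expands the negative moments $\mathbb{E}[Y_\tau^{-\theta}]$ to first order in $\tau$, lets $\tau\to 0$ to obtain the recursion $\mathbb{E}[Y_0^{-\theta-1}]=\bigl(\tfrac{\theta+1}{2}\sigma^2-m\bigr)\mathbb{E}[Y_0^{-\theta}]$, and then sums this into an ODE for the moment generating function of $W_0=1/Y_0$, whose solution is the Gamma MGF. In particular the paper does \emph{not} invoke Dufresne's Theorem~\ref{Thm:Dufresne}; rather, it advertises this computation as an alternative, discrete-time derivation of that result.

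Your approach, by contrast, takes Theorem~\ref{Thm:Dufresne} as input and reduces everything to showing $\tau X_\infty\Rightarrow Y_\infty$, then applies the continuous mapping theorem to $x\mapsto 1/x$. You correctly observe that Theorem~\ref{InftyThm}(ii) only supplies $L^1$-convergence for $m<0$, and you fill the gap $0\le m<\tfrac12\sigma^2$ with a clean pathwise Riemann-sum argument using the a.s.\ linear decay of $\sigma W_t+\mu t$. This is more elementary and arguably tighter in its logic than the paper's formal moment expansion (which quietly assumes existence and convergence of the relevant moments), and it covers the full parameter range explicitly. The price is that your proof is not self-contained: it needs the inverse-Gamma law of $Y_\infty$ as a known fact, whereas the paper's method recovers that law from the recursion.
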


\begin{proof}
Let $Y_{\tau}:=\tau X_{\infty}$. 
Note that since $Y_{\tau}=\tau X_{\infty}$, we have 
$Y_{\tau} = \mathcal{A} (\tau+Y_{\tau})$ 
in distribution, where $Y_{\tau}$ is independent of $\mathcal{A}$. 
The Laplace transform of $\log Y_{\tau}$ is given by
\begin{equation}\label{sub0}
\mathbb{E}[Y_{\tau}^{-\theta}]=
\mathbb{E}[\mathcal{A}^{-\theta}]\mathbb{E}[(\tau+Y_{\tau})^{-\theta}],
\qquad
\theta>0,
\end{equation}
where we used the independence of $Y_{\tau}$ and $\mathcal{A}$. 
Note that since 
$\mathcal{A}=
e^{\sigma\sqrt{\tau}Z+(m-\frac{1}{2}\sigma^{2})\tau}$ is log-normally distributed, we have
\begin{equation}\label{sub1}
\mathbb{E}[\mathcal{A}^{-\theta}]=e^{\frac{\theta^{2}+\theta}{2}\sigma^{2}\tau-\theta m\tau}
=1+\frac{\theta^{2}+\theta}{2}\sigma^{2}\tau-\theta m\tau+O(\tau^{2}),
\end{equation}
and moreover, 
\begin{equation}\label{sub2}
\mathbb{E}[(\tau+Y_{\tau})^{-\theta}]
-\mathbb{E}[Y_{\tau}^{-\theta}]
=-\theta\tau \mathbb{E}[(Y_{\tau})^{-\theta-1}]+O(\tau^{2}).
\end{equation}
Substituting \eqref{sub1} and \eqref{sub2} into \eqref{sub0}, we get
\begin{equation}
\mathbb{E}[Y_{\tau}^{-\theta}]
=\left(1+\frac{\theta^{2}+\theta}{2}\sigma^{2}\tau-\theta m\tau+O(\tau^{2})\right)
\left(\mathbb{E}[Y_{\tau}^{-\theta}]
-\theta\tau \mathbb{E}[(Y_{\tau})^{-\theta-1}]+O(\tau^{2})\right).
\end{equation}
Let $\tau\rightarrow 0$, and define $Y_0 = \lim_{\tau \to 0} Y_\tau$. 
In this limit the coefficient of $O(\tau)$ term must vanish and therefore 
we get the identity
\begin{equation}\label{5}
\mathbb{E}[Y_{0}^{-\theta-1}]=
\left(\frac{\theta+1}{2}\sigma^{2}-m\right)\mathbb{E}[Y_{0}^{-\theta}].
\end{equation}
Note that the above $\tau\rightarrow 0$ limit is not only valid for $\theta>0$
but also for $\theta<0$ and $|\theta|$ sufficiently small. 
Therefore, we proved the convergence of the m.g.f. of $-\log Y_{\tau}$,
that is the convergence of $\mathbb{E}[Y_{\tau}^{-\theta}]$ for $\theta$ in a neighborhood
of $0$, which implies the convergence of $Y_{\tau}$ to $Y_{0}$ in distribution.

Let us verify that the inverse Gamma distribution indeed satisfies this
identity. Using the density function in (\ref{invGamma}) we can compute the
expectations appearing in (\ref{5}) in closed form
\begin{align}
\mathbb{E}[Y_{0}^{-\theta}]
&=\int_{0}^{\infty}y^{-\theta}\left(\frac{2}{\sigma^{2}}\right)^{1-\frac{2m}{\sigma^{2}}}\left(\frac{1}{y^{2}}\right)^{1-\frac{m}{\sigma^{2}}}
\frac{1}{\Gamma(1-\frac{2m}{\sigma^{2}})}
e^{-\frac{2}{\sigma^{2}y}}dy
\\
& = \frac{\Gamma(\theta+\beta)}{\Gamma(\beta)} \left( \frac{\sigma^2}{2}\right)^\theta
=\frac{1}{\frac{\theta+1}{2}\sigma^{2}-m}\mathbb{E}[Y_{0}^{-\theta-1}].
\nonumber
\end{align}
with $\beta = 1 - \frac{2m}{\sigma^2}$. This reproduces indeed the relation (\ref{5}).

Next, we show that \eqref{5} indeed implies that $Y_{0}$ is inverse Gamma 
distributed. Denote $W_\tau = \frac{1}{Y_\tau}$. We will show that 
$W_0 = \lim_{\tau \to 0}W_\tau$ is distributed as a Gamma distribution
\begin{equation}\label{result}
W_0 = \mathbf{Gamma}\left(1 - \frac{2m}{\sigma^2}, \frac12\sigma^2\right)\,.
\end{equation}

Define the moment generating function of $W_0$ as
\begin{equation}\label{mgf}
M(t) = \mathbb{E}[e^{t W_0}] \,.
\end{equation}
Denote $\theta = j$ in \eqref{5}, multiply both sides of this equation
with $\frac{t^j}{j!}$, and sum over $j \in \mathbb{N}$. The sums can be 
expressed in terms of the m.g.f. of $W_0$ as
\begin{align}
& \sum_{j=0}^\infty \frac{t^j}{j!} \mathbb{E}[W_0^j] = M(t), \\
& \sum_{j=0}^\infty \frac{j t^j}{j!} \mathbb{E}[W_0^j] = t \frac{d}{dt} M(t).
\end{align}

The relation \eqref{5} becomes a differential equation for the function
$M(t)$
\begin{equation}
M'(t) = \left(\frac12 \sigma^2 - m\right) M(t) + \frac12 \sigma^2 t M'(t),
\qquad M(0)=1.
\end{equation}
This is a first-order linear ODE, which yields the solution
\begin{equation}
M(t) = \left(1 - \frac12 \sigma^2 t \right)^{-1+\frac{2m}{\sigma^2}},
\qquad t<\frac{2}{\sigma^{2}}.
\end{equation}
This has precisely the same form as the m.g.f. of a Gamma distributed random 
variable with the parameters shown in (\ref{result}). This proves that 
$W_0 = 1/Y_0$ is distributed as a Gamma random variable, 
and thus $Y_0$ follows an inverse Gamma distribution.
\end{proof}

This result implies that the cumulative distribution function of $X_\infty$,
defined as
\begin{equation}
F(x;\beta,\rho) = \mathbb{P}(X_\infty < x) = \int_0^x f(y; \beta,\rho) dy
\end{equation}
has the following limiting behavior.

\begin{proposition}\label{prop:smalltau}
We have
\begin{equation}\label{limf}
\lim_{\tau \to 0} F(x/\tau;\sigma^2 \tau,m \tau) = 
\Phi_\infty(x;\sigma,m),
\end{equation}
with $\Phi_\infty(x;\sigma,m)$ given in (\ref{invGamma2}).
This can be expressed alternatively as a limiting result for the 
cumulative distribution function
$F(x;\beta,\rho)$ as the parameters $\beta, \rho \to 0$ at fixed ratio
$\rho/\beta$. We have
\begin{equation}\label{limf2}
\lim_{\substack{\beta,\rho\to 0,
\\
\rho/\beta = \mbox{fixed}}}
F(x;\beta,\rho) = \int_0^x \frac{dy}{y} 
    \left( \frac{2}{\beta y}\right)^{1-\frac{2\rho}{\beta}}
\frac{1}{\Gamma(1-\frac{2\rho}{\beta})} e^{-\frac{2}{\beta y}}
= \frac{1}{\Gamma(1-\frac{2\rho}{\beta})}
\Gamma\left( 1 - \frac{2\rho}{\beta}; \frac{2}{\beta x} \right)\,.
\end{equation}
\end{proposition}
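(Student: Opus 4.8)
The plan is to read the statement off the convergence in distribution already established in the immediately preceding proposition, rather than to re-derive anything about transforms. There it is shown that $Y_\tau := \tau X_\infty$ (with $\beta=\sigma^2\tau$, $\rho=m\tau$) converges, as $\tau\to 0$ with $\sigma,m$ fixed and $m<\tfrac12\sigma^2$, to a limit $Y_0$ whose reciprocal is $\mathbf{Gamma}(1-2m/\sigma^2,\tfrac12\sigma^2)$; equivalently $Y_0$ is inverse Gamma distributed with density $\phi_\infty(\cdot;\sigma,m)$ from \eqref{invGamma} and cumulative distribution $\Phi_\infty(\cdot;\sigma,m)$ from \eqref{invGamma2}. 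That proposition proves convergence of $\mathbb{E}[Y_\tau^{-\theta}]$ for $\theta$ in a neighborhood of $0$, i.e. of the moment generating function of $-\log Y_\tau$ near the origin, which by the continuity theorem for moment generating functions already delivers $Y_\tau\Rightarrow Y_0$ in distribution. I would simply invoke this as the input.

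First I would convert this weak convergence into pointwise convergence of distribution functions. By the Portmanteau/Helly theorem, $Y_\tau\Rightarrow Y_0$ implies $\mathbb{P}(Y_\tau\le x)\to\mathbb{P}(Y_0\le x)=\Phi_\infty(x;\sigma,m)$ at every continuity point $x$ of $\Phi_\infty$. Since $Y_0$ is absolutely continuous with density $\phi_\infty$, the map $x\mapsto\Phi_\infty(x;\sigma,m)$ is continuous on $(0,\infty)$, so the convergence holds at every $x>0$ and the distinction between strict and non-strict inequalities is immaterial. It then remains only to rewrite the left-hand side in terms of $F$: for fixed $x>0$,
\[
\mathbb{P}(Y_\tau\le x)=\mathbb{P}(\tau X_\infty\le x)=\mathbb{P}\!\left(X_\infty\le \tfrac{x}{\tau}\right)=F\!\left(\tfrac{x}{\tau};\sigma^2\tau,m\tau\right),
\]
which yields \eqref{limf}.

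For the reformulation \eqref{limf2} I would set $\beta=\sigma^2\tau$ and $\rho=m\tau$, so that sending $\tau\to0$ with $\sigma,m$ fixed is exactly the regime $\beta,\rho\to0$ with $\rho/\beta=m/\sigma^2$ held fixed. The only computation needed is to check that the displayed integral equals the incomplete Gamma function: substituting $u=2/(\beta y)$ (so $dy/y=-du/u$, with $y=x\mapsto u=2/(\beta x)$ and $y\to0^+\mapsto u\to\infty$) turns
\[
\int_0^x\frac{dy}{y}\Bigl(\frac{2}{\beta y}\Bigr)^{1-\frac{2\rho}{\beta}}\frac{1}{\Gamma(1-\frac{2\rho}{\beta})}e^{-\frac{2}{\beta y}}
\]
into $\frac{1}{\Gamma(1-2\rho/\beta)}\int_{2/(\beta x)}^\infty u^{-\frac{2\rho}{\beta}}e^{-u}\,du=\frac{1}{\Gamma(1-2\rho/\beta)}\Gamma\bigl(1-\tfrac{2\rho}{\beta};\tfrac{2}{\beta x}\bigr)$. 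Since $1-2\rho/\beta=1-2m/\sigma^2$ and $2/(\beta x)=2/(\sigma^2\tau x)$, this last expression is precisely $\Phi_\infty$ written in the parameters $(\beta,\rho)$, so \eqref{limf2} is the same statement as \eqref{limf} after the change of notation and the identification of the inverse Gamma CDF with its integral representation.

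The main obstacle, such as it is, lies in the first conversion step: passing from convergence of the transform (or of the law) to genuine pointwise convergence of the distribution functions. This is clean only because the limiting law is continuous, which I would emphasize explicitly — absolute continuity of the inverse Gamma guarantees that all of $(0,\infty)$ consists of continuity points, so no mass concentrates at a point and strict versus non-strict inequalities agree in the limit. Everything after that is bookkeeping: the rescaling $\mathbb{P}(\tau X_\infty\le x)=F(x/\tau;\sigma^2\tau,m\tau)$ and the elementary substitution $u=2/(\beta y)$ identifying the integral with an incomplete Gamma function. I do not expect any analytic difficulty beyond correctly tracking the parameter identifications $\beta=\sigma^2\tau$ and $\rho=m\tau$.
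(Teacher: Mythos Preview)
Your approach is correct and is precisely what the paper does: it states Proposition~\ref{prop:smalltau} as an immediate consequence of the preceding proposition (the convergence in distribution of $\tau X_\infty$ to the inverse Gamma law), without giving a separate proof. Your write-up actually supplies more detail than the paper --- the Portmanteau step, the observation that absolute continuity of the inverse Gamma limit makes every $x>0$ a continuity point, and the explicit substitution $u=2/(\beta y)$ identifying the integral with the incomplete Gamma function --- so there is nothing to correct.
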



\section{Geometric Mortality}\label{GeometricSection}

In the previous section we have studied the infinite sum of geometric 
Brownian motions  $X_{\infty}=\sum_{i=1}^{\infty}e^{\sigma W_{t_{i}}+
(m-\frac{1}{2}\sigma^{2})t_{i}}$, where $t_{i}=i\tau$. 
A more realistic modeling of a stochastic annuity takes into account that
the total time over which the annuity is paid is random. 
In the discrete time setting, one of the simplest assumptions
is that the number of periods that a person will live follows a geometric 
distribution, that is, given that the person is still alive at present, the 
probability that he/she is still alive
at the next time step is $1-p$, with $0<p<1$. In other words,
we are interested in the distributions of 
\begin{equation}
X_{N}=\sum_{i=1}^{N}e^{\sigma W_{t_{i}}+(m-\frac{1}{2}\sigma^{2})t_{i}},
\end{equation}
where $N$ follows a geometric distribution, independent of the Brownian 
motion $W_{t}$, that is,
\begin{equation}
\mathbb{P}(N=k)=(1-p)^{k-1}p,
\qquad
k=1,2,3,\ldots.
\end{equation}
Therefore, it is not hard to see that
\begin{equation}\label{XNlaw}
X_{N}=\mathcal{AQ} + \mathcal{A}(1-\mathcal{Q})(1+X_{N}),
\end{equation}
in distribution, where $X_{N}$, $\mathcal{A}$ and $\mathcal{Q}$ are independent, 
with $\mathcal{A} :=e^{\sigma\sqrt{\tau}Z + (m-\frac{1}{2}\sigma^{2})\tau}$, 
where $Z\sim N(0,1)$ and $\mathcal{Q} = \{0, 1 \}$ is a Bernoulli random variable
taking values $0,1$ with probabilities
\begin{equation}
\mathbb{P}(\mathcal{Q}=1)=p=1-\mathbb{P}(\mathcal{Q}=0).
\end{equation}

\subsection{Probability Density Function and Tail Asymptotics for $X_{N}$}

We start by deriving an integral equation for the probability density function 
$f(x;\beta,\rho,p)$ for $X_{N}$. For any $x>0$, 
\begin{align}
\mathbb{P}(X_{N}\leq x)&=\mathbb{P}(\mathcal{AQ+A}(1-\mathcal{Q})(1+X_{N})\leq x)
\\
&=
p\mathbb{P}(\mathcal{A} \leq x)
+(1-p)\mathbb{P}(\mathcal{A}(1+X_{N})\leq x)
\nonumber
\\
&=p\mathbb{P}\left(Z\leq\frac{\log x-m\tau+\frac{1}{2}\sigma^{2}\tau}{\sigma\sqrt{\tau}}\right)
\nonumber
\\
&\qquad\qquad
+(1-p)\int_{-\infty}^{\infty}\mathbb{P}\left(X_{N}\leq
\frac{x}{e^{\sigma\sqrt{\tau}z+(m-\frac{1}{2}\sigma^{2})\tau}}-1\right)
\frac{1}{\sqrt{2\pi}}e^{-\frac{z^{2}}{2}}dz.
\nonumber
\end{align}
Differentiating w.r.t. $x$, we get
\begin{align}
f(x;\beta,\rho,p)
&=p\frac{1}{\sqrt{2\pi}x}
e^{-\frac{1}{2\sigma^{2}\tau}(\log x-m\tau+\frac{1}{2}\sigma^{2}\tau)^{2}}
\\
&\qquad\qquad
+(1-p)\int_{-\infty}^{\infty}f\left(\frac{x}{e^{\sigma\sqrt{\tau}z+(m-\frac{1}{2}\sigma^{2})\tau}}-1;\beta,\rho,p\right)
\frac{\frac{1}{\sqrt{2\pi}}e^{-\frac{z^{2}}{2}}dz}{e^{\sigma\sqrt{\tau}z+(m-\frac{1}{2}\sigma^{2})\tau}},
\nonumber
\end{align}
where we recall that
\begin{equation*}
\beta=\sigma^{2}\tau,
\qquad
\rho=m\tau.
\end{equation*}
Change the variable and let $w=e^{\sigma\sqrt{\tau}z+(m-\frac{1}{2}\sigma^{2})\tau}$, 
we get
\begin{align}
f(x;\beta,\rho,p)&= 
p\frac{1}{\sqrt{2\pi}x}
e^{-\frac{1}{2\sigma^{2}\tau}(\log x-m\tau+\frac{1}{2}\sigma^{2}\tau)^{2}}
\\
&\qquad\qquad
+(1-p)\int_0^x \frac{dw}{\sqrt{2\pi\sigma^2\tau} w^2}
e^{-\frac{1}{2\sigma^2\tau}(\log w - (m-\frac12\sigma^2)\tau)^2} f(x/w-1;\beta,\rho,p). \nonumber
\end{align}
Finally, make the change of variable $y=x/w-1$ in the last integral, 
and use the definitions $\beta=\sigma^{2}\tau$ and $\rho=m\tau$, we get:

\begin{proposition}\label{prop:14}
The density function of $X_N$, where $N$ follows a geometric distribution
with parameter $p$, satisfies the integral equation
\begin{align}\label{IntegralEqXN}
f(x;\beta,\rho,p) &= 
p\frac{1}{\sqrt{2\pi\beta}x}
e^{-\frac{1}{2\beta}(\log x-\rho+\frac{1}{2}\beta)^{2}}
\\
&\qquad
+(1-p)\frac{1}{x} \int_0^\infty \frac{dy}{\sqrt{2\pi\beta}} 
\exp\left( -\frac{1}{2\beta} 
\left[ \log \frac{x}{1+y} + \frac12 \beta - \rho \right]^2
\right)
f(y;\beta,\rho,p)\,,
\nonumber
\end{align}
where $\beta=\sigma^{2}\tau$ and $\rho=m\tau$.
\end{proposition}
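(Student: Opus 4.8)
The plan is to start from the distributional fixed-point identity \eqref{XNlaw}, namely $X_{N} = \mathcal{AQ} + \mathcal{A}(1-\mathcal{Q})(1+X_{N})$, and to condition on the Bernoulli variable $\mathcal{Q}$, which is independent of both $\mathcal{A}$ and $X_{N}$. On the event $\{\mathcal{Q}=1\}$, of probability $p$, the identity collapses to $X_{N} = \mathcal{A}$, while on the event $\{\mathcal{Q}=0\}$, of probability $1-p$, it reduces to $X_{N} = \mathcal{A}(1+X_{N})$, which is precisely the recursion governing $X_{\infty}$. Hence, for every $x>0$,
\begin{equation*}
\mathbb{P}(X_{N}\leq x) = p\,\mathbb{P}(\mathcal{A}\leq x) + (1-p)\,\mathbb{P}\bigl(\mathcal{A}(1+X_{N})\leq x\bigr),
\end{equation*}
and the two contributions can be treated separately.

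For the first contribution I would use that $\mathcal{A} = e^{\sigma\sqrt{\tau}Z + (m-\frac12\sigma^{2})\tau}$ is lognormal, so that $\mathbb{P}(\mathcal{A}\leq x)$ equals a standard normal distribution function evaluated at $(\log x - m\tau + \frac12\sigma^{2}\tau)/(\sigma\sqrt{\tau})$; differentiating in $x$ yields the explicit lognormal density that becomes the $p$-weighted term in \eqref{IntegralEqXN}. For the second contribution I would integrate over the Gaussian law of $Z$ and use the independence of $X_{N}$ and $\mathcal{A}$, writing
\begin{equation*}
\mathbb{P}\bigl(\mathcal{A}(1+X_{N})\leq x\bigr) = \int_{-\infty}^{\infty} \mathbb{P}\left(X_{N}\leq \frac{x}{e^{\sigma\sqrt{\tau}z + (m-\frac12\sigma^{2})\tau}}-1\right)\frac{1}{\sqrt{2\pi}}e^{-\frac{z^{2}}{2}}\,dz.
\end{equation*}
This is structurally identical to the computation that produced the integral equation \eqref{integraleq} for $X_{\infty}$, up to the overall factor $1-p$.

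Differentiating both contributions with respect to $x$ and then performing the successive substitutions $w = e^{\sigma\sqrt{\tau}z + (m-\frac12\sigma^{2})\tau}$ and $y = x/w - 1$ converts the Gaussian kernel into the stated form and sends the domain of integration to $y\in(0,\infty)$; substituting $\beta = \sigma^{2}\tau$ and $\rho = m\tau$ then gives \eqref{IntegralEqXN}. The only point requiring care is the legitimacy of differentiating under the integral sign, i.e. that $X_{N}$ admits a density. This holds because the lognormal factor $\mathcal{A}$ is absolutely continuous, so both the $\mathcal{Q}=1$ and the $\mathcal{Q}=0$ parts of the law of $X_{N}$ are absolutely continuous with respect to Lebesgue measure, and dominated convergence justifies interchanging the $x$-derivative with the Gaussian integral. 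The remaining manipulations are routine changes of variables, so the bulk of the effort is bookkeeping the transformed integration limits.
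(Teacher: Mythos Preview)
Your proposal is correct and follows essentially the same approach as the paper: condition on $\mathcal{Q}$ in the distributional identity \eqref{XNlaw}, differentiate the resulting expression for $\mathbb{P}(X_N\leq x)$ in $x$, and then perform the two successive changes of variable $w=e^{\sigma\sqrt{\tau}z+(m-\frac12\sigma^2)\tau}$ and $y=x/w-1$. Your additional remark on absolute continuity of the law of $X_N$ (justifying the density and the differentiation under the integral) is a point the paper leaves implicit.
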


Next, let us derive the right and left tails of $X_{N}$.
Define the function 
\begin{equation}
\varphi(\kappa)=\mathbb{E}\left[\left( \mathcal{A}(1-\mathcal{Q}) \right)^{\kappa}\right]
=\mathbb{E}[\mathcal{A}^{\kappa}]\mathbb{E}[(1-\mathcal{Q})^{\kappa}].
\end{equation}
We can compute that
\begin{equation}
\varphi(\kappa)=
e^{\frac{1}{2}\beta \kappa(\kappa-1)+\kappa \rho}(1-p).
\end{equation}
The equation $\varphi(\mu)=1$ has the positive solution
\begin{equation}\label{mugeom}
\mu=\frac{- \rho + \frac{1}{2}\beta +
\sqrt{(\rho - \frac{1}{2}\beta)^{2} - 2\beta\log(1-p)}}{\beta},
\end{equation}
and therefore, as $x\rightarrow+\infty$,
\begin{equation}\label{XNrighttail}
\mathbb{P}(X_{N}>x)\sim c_{+}x^{-\mu},
\end{equation}
for some constant $c_{+}>0$. 
This constant is given by (\ref{cdef}) \cite{Goldie},
which can be expressed as
\begin{equation}\label{cplus}
c_+ = \frac{\mathbb{E}[\mathcal{A}^\mu]}
           {\mu(1-p)\mathbb{E}[\mathcal{A}^\mu \log A]}
\left\{ p + (1-p) \left[ \mathbb{E}[(X_N+1)^\mu] - \mathbb{E}[(X_N)^\mu]
\right] \right\} \,.
\end{equation}
The first factor can be evaluated further as
\begin{eqnarray}
&& \frac{\mathbb{E}[\mathcal{A}^\mu]}{\mu(1-p)\mathbb{E}[\mathcal{A}^\mu \log A]} = 
\frac{1}{\mu(1-p)^2}\frac{1}{\mathbb{E}[\mathcal{A}^\mu \log \mathcal{A}]} \\
&& \quad = \frac{1}{\mu(1-p)(\rho - \frac12\beta + \mu \beta)}
= \frac{1}{\mu(1-p) \sqrt{(\rho - \frac{1}{2}\beta)^{2} - 2\beta\log(1-p)} }
\,.\nonumber
\end{eqnarray}

\begin{remark}
The technical conditions (\ref{cond12}) are satisfied, with 
$\mathbb{E}[\log\{\mathcal{A}(1-\mathcal{Q})\}]=-\infty$. We note that this relaxes the
condition on the drift $m < \frac12\sigma^2$ which is present for $p=0$.
For $p \neq 0$, the solution for $\mu$ in (\ref{mugeom}) 
is always strictly positive,  for all $m\in \mathbb{R}$. 
For $\rho < \frac12 \beta$ we have the stronger lower bound 
$\mu > 1 - \frac{2\rho}{\beta}$.
\end{remark}

Let us also derive the left tail asymptotics for $X_{N}$. 
Note that $X_{N}$ is non-negative. Therefore, for any $\epsilon>0$,
\begin{equation}
\mathbb{P}(X_{N}\leq\epsilon)
=\mathbb{P}(\mathcal{AQ}+\mathcal{A}(1-\mathcal{Q})(1+X_{N})\leq\epsilon)
\leq\mathbb{P}(\mathcal{A}\leq\epsilon).
\end{equation}
On the other hand, for any $\delta>0$,
\begin{align}
\mathbb{P}(X_{N}\leq\epsilon)
&\geq
\mathbb{P}(\mathcal{AQ+A}(1-\mathcal{Q})(1+X_{N})\leq
 \epsilon|\mathcal{Q}=0,0\leq X_{N}\leq\delta)
\mathbb{P}(\mathcal{Q}=0,X_{N}\leq\delta)
\\
&\geq(1-p)\mathbb{P}(X_{N}\leq\delta)
\mathbb{P}\left(\mathcal{A}\leq\frac{\epsilon}{1+\delta}\right).
\nonumber
\end{align}
Following the proofs for the left tail asymptotics for $X_{\infty}$, we 
conclude that
\begin{equation}
\lim_{\epsilon\rightarrow 0}\frac{\log\mathbb{P}(X_{N}\leq\epsilon)}{(\log\epsilon)^{2}}
=-\frac{1}{2\sigma^{2}\tau}\,.
\end{equation}

\begin{remark}
As $p\rightarrow 0$, the person will live forever, and the time horizon
of the stochastic annuity becomes infinite. Therefore, 
as $p\rightarrow 0$, we expect that $X_{N}\rightarrow X_{\infty}$ in distribution. 
\end{remark}

\subsection{The limit $\tau \to 0$ and comparison with the continuous time}
Let $Y_{\tau}:=\tau X_{N}$ and $p=\lambda\tau$. 
As $\tau\rightarrow 0$, we expect that $Y_{\tau}$ 
converges to $\int_{0}^{T_\lambda}e^{\sigma W_{t}-\frac{1}{2}\sigma^{2}t+mt}dt$ 
in distribution, where $T_\lambda$ is exponentially distributed with parameter 
$\lambda>0$ and is independent of the Brownian motion $W_{t}$.

Indeed, we will prove the following result. 

\begin{theorem}\label{ThmExponentialT}
Let $T_{\lambda}$ be an exponentially distributed random variable with 
parameter $\lambda>0$, $N$ be a geometric distributed random variable with
parameter $p=\lambda \tau$, both assumed to be independent of the Brownian 
motion $W_{t}$. Then, assuming $m< \lambda$, we have
\begin{equation}
\tau\sum_{i=1}^{N}e^{\sigma W_{t_{i-1}}+(m-\frac{1}{2}\sigma^{2})t_{i-1}}
\rightarrow\int_{0}^{T_{\lambda}}e^{\sigma W_{t}+(m-\frac{1}{2}\sigma^{2})t}dt,
\end{equation}
in distribution as $\tau\rightarrow 0$.
\end{theorem}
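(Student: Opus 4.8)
The plan is to realize the random sum as a fixed measurable functional of the Brownian path evaluated at the random endpoint $\tau N$, and to treat separately the convergence of the endpoint and the convergence of the functional, and then compose the two. Write $\Psi(s) := e^{\sigma W_s + (m-\frac12\sigma^2)s}$, and define the random nondecreasing functions $A_\tau(t) := \tau\sum_{i=1}^{\lfloor t/\tau\rfloor}\Psi((i-1)\tau)$ and $A(t) := \int_0^t \Psi(s)\,ds$. Since $\lfloor N\tau/\tau\rfloor = N$, the left-hand side of the claim is exactly $A_\tau(\tau N)$, and the right-hand side is $A(T_\lambda)$. The two ingredients I would establish are: (a) the endpoint converges, and (b) the functional converges pathwise.

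For (a), since $N$ is geometric with parameter $p=\lambda\tau$, one has $\mathbb{P}(\tau N > t) = (1-\lambda\tau)^{\lfloor t/\tau\rfloor}\to e^{-\lambda t}$ for each fixed $t>0$, so $\tau N \Rightarrow T_\lambda$ as $\tau\to0$. For (b), for almost every Brownian path $s\mapsto\Psi(s)$ is continuous, hence uniformly continuous and bounded on each $[0,M]$; a standard left-Riemann-sum estimate then gives $\sup_{t\in[0,M]}|A_\tau(t)-A(t)|\to0$ for every $M$. This is the pathwise statement underlying, and consistent with, the $L^1$ convergence recorded in Theorem~\ref{InftyThm}(i).

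To combine (a) and (b) I would argue along an arbitrary sequence $\tau_n\to0$, since convergence in distribution as $\tau\to0$ is equivalent to convergence along every such sequence. Using that $N$ is independent of $W$ together with the Skorokhod representation theorem applied to $\tau_n N\Rightarrow T_\lambda$, I construct on one probability space a Brownian motion $W$ and variables $\xi_n\overset{d}{=}\tau_n N$ and $\xi\overset{d}{=}T_\lambda$, all independent of $W$, with $\xi_n\to\xi$ almost surely; then $A_{\tau_n}(\xi_n)\overset{d}{=}R_{\tau_n}(N)$ and $A(\xi)\overset{d}{=}A(T_\lambda)$ because the joint laws are products with the correct marginals. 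On this space one has almost surely both $A_{\tau_n}\to A$ uniformly on compact sets and $\xi_n\to\xi<\infty$, and $A$ is continuous, so the elementary composition fact (if $f_n\to f$ uniformly on compacts with $f$ continuous and $x_n\to x$ finite, then $f_n(x_n)\to f(x)$) yields $A_{\tau_n}(\xi_n)\to A(\xi)$ almost surely, hence in distribution; since coupling leaves the laws unchanged, $R_{\tau_n}(N)\Rightarrow A(T_\lambda)$, and arbitrariness of the sequence gives the theorem.

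The main obstacle is the unboundedness of the stopping time: the number of summands $N$ is unbounded and $T_\lambda$ has full support, so one cannot simply invoke convergence on a single fixed interval. This is exactly what the coupling plus the composition fact resolves, since only the almost sure finiteness of the limiting endpoint $\xi$ is needed (uniform convergence on $[0,\xi+1]$ suffices), and convergence in distribution only tests bounded continuous functions, so no integrability is required. The hypothesis $m<\lambda$ is not needed for the in-distribution statement obtained this way; its role is to make the limit integrable, as a direct computation shows $\mathbb{E}[A_\tau(\tau N)] = \tau/(1 - e^{m\tau}(1-\lambda\tau)) \to 1/(\lambda-m) = \mathbb{E}[A(T_\lambda)]$, which is finite precisely when $m<\lambda$. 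Thus $m<\lambda$ is the condition one would invoke to upgrade the convergence to $L^1$, or to run an alternative moment/Laplace-transform proof based on the recursion \eqref{XNlaw} in the spirit of the $\tau\to0$ analysis of the infinite-sum case.
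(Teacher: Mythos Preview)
Your proof is correct and takes a genuinely different route from the paper's. The paper proceeds in two steps: first it shows
\[
\mathbb{E}\left[\left|\tau\sum_{i=1}^{N}\Psi(t_{i-1})-\int_0^{N\tau}\Psi(t)\,dt\right|\right]\to 0
\]
by expanding over the geometric law of $N$ and summing the $L^1$ bounds from Theorem~\ref{InftyThm}; this summation converges only when $(1-p)e^{m\tau}<1$, which is exactly where the hypothesis $m<\lambda$ enters. Second, it shows $\int_0^{N\tau}\Psi\,dt\Rightarrow\int_0^{T_\lambda}\Psi\,dt$ by a direct computation of the distribution function as a Riemann sum in the time variable of the geometric weights.

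Your argument replaces both steps by a single pathwise-plus-coupling construction: almost-sure uniform convergence of the Riemann functionals $A_\tau\to A$ on compacts, Skorokhod coupling of the endpoints $\tau N\Rightarrow T_\lambda$ (using independence of $N$ and $W$ to preserve the product law), and the elementary composition fact $f_n(x_n)\to f(x)$. This is more elementary in that it avoids all moment estimates, and as you correctly observe it yields the in-distribution conclusion for every $m\in\mathbb{R}$, since only the almost-sure finiteness of $T_\lambda$ is used. The paper's approach, on the other hand, delivers a stronger intermediate statement (the $L^1$ closeness of the random sum to the integral over the random interval $[0,N\tau]$), at the cost of the integrability hypothesis $m<\lambda$. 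Your closing remark that $m<\lambda$ is precisely the condition for $L^1$ convergence, with $\mathbb{E}[A_\tau(\tau N)]\to 1/(\lambda-m)$, is accurate and nicely identifies the role of that hypothesis.
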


\begin{proof}
See Appendix~\ref{App:proofs}.
\end{proof}

The distribution of the time integral of the GBM up to an exponentially distributed
time is known in closed form \cite{Yor0}, see Appendix A
for a summary of the results. We will show here briefly how these well-known
results are consistent with the identity in law (\ref{XNlaw}) in the $\tau \to 0$ limit.

Note that since $Y_{\tau}=\tau X_{N}$, the relation (\ref{XNlaw}) gives
\begin{equation}
Y_{\tau}=\mathcal{A}(\tau+(1-\mathcal{Q})Y_{\tau})
\end{equation}
in distribution, where $Y_{\tau}$ is independent of $\mathcal{A,Q}$. 
The m.g.f. of $\log Y_{\tau}$ is given by (for $0<\theta<1$)
\begin{equation}\label{Qsub0}
\mathbb{E}[Y_{\tau}^{\theta}]=\mathbb{E}[\mathcal{A}^{\theta}]
  \mathbb{E}[(\tau+(1-\mathcal{Q})Y_{\tau})^{\theta}],
\qquad
0<\theta<1,
\end{equation}
where we used independence of $Y_{\tau}$ and $\mathcal{A}$. 
Since $\mathcal{A}=e^{\sigma\sqrt{\tau}Z+(m-\frac{1}{2}\sigma^{2})\tau}$ 
is log-normally distributed, we have
\begin{equation}\label{Qsub1}
\mathbb{E}[\mathcal{A}^{\theta}]=
e^{\frac{\theta^{2}-\theta}{2}\sigma^{2}\tau+\theta m\tau}
=1+\frac{\theta^{2}-\theta}{2}\sigma^{2}\tau+\theta m\tau+O(\tau^{2}),
\end{equation}
and moreover, 
\begin{align}\label{Qsub21}
\mathbb{E}[(\tau+(1-\mathcal{Q})Y_{\tau})^{\theta}]
&=\tau^{\theta}p+\mathbb{E}[(\tau+Y_{\tau})^{\theta}](1-p)
\\
&=\tau^{\theta}\lambda\tau+\mathbb{E}[(\tau+Y_{\tau})^{\theta}](1-\lambda\tau)
\nonumber
\end{align}
and
\begin{equation}\label{Qsub2}
\mathbb{E}[(\tau+Y_{\tau})^{\theta}]
-\mathbb{E}[(Y_{\tau})^{\theta}]
=\theta\mathbb{E}[(Y_{\tau})^{\theta-1}]\tau+O(\tau^{2}).
\end{equation}
Therefore, \eqref{Qsub21} and \eqref{Qsub2} imply that
\begin{equation}\label{Qsub3}
\mathbb{E}[(\tau+(1-\mathcal{Q})Y_{\tau})^{\theta}]
=\mathbb{E}[Y_{\tau}^{\theta}]+(\theta\mathbb{E}[Y_{\tau}^{\theta-1}]-\lambda\mathbb{E}[Y_{\tau}^{\theta}])\tau
+O(\tau^{1+\theta}).
\end{equation}
Substituting \eqref{Qsub1} and \eqref{Qsub3} into \eqref{Qsub0}, we get
\begin{equation}
\mathbb{E}[Y_{\tau}^{\theta}]
=\left(1+\frac{\theta^{2}-\theta}{2}\sigma^{2}\tau+\theta m\tau+O(\tau^{2})\right)
\left(\mathbb{E}[Y_{\tau}^{\theta}]+(\theta\mathbb{E}[Y_{\tau}^{\theta-1}]-\lambda\mathbb{E}[Y_{\tau}^{\theta}])\tau
+O(\tau^{1+\theta})\right).
\end{equation}
Let $\tau\rightarrow 0$, the coefficient of $O(\tau)$ term must vanish and therefore
the limit $Y_{0}$ must satisfy the identity
\begin{equation}\label{PreToCheck}
\left(\frac{\theta^{2}-\theta}{2}\sigma^{2}+\theta m-\lambda\right)\mathbb{E}[Y_{0}^{\theta}]
+\theta\mathbb{E}[Y_{0}^{\theta-1}]=0.
\end{equation}

Let
\begin{equation}
A_{t}^{(\mu)} := \int_{0}^{t}e^{2\mu s+2W_{s}}ds,
\qquad t\geq 0,\mu\in\mathbb{R}\,.
\end{equation}
It is well known that the law of this time integral of the GBM up to an
exponentially distributed random time $T_\lambda \sim \mathbf{Exp(\lambda)}$
is given by \cite{Yor0}
\begin{equation}\label{YorFormula}
2A_{T_{\lambda}}^{(\mu)}=\frac{B_{1,\alpha}}{G_{\beta}},
\end{equation}
in distribution, where $B_{1,\alpha}\sim\text{Beta}(1,\alpha)$
and $G_{\beta}\sim\Gamma(\beta,1)$ are independent random variables,
with parameters
\begin{equation}
\alpha=\frac{\mu}{2}+\frac{1}{2}\sqrt{2\lambda+\mu^{2}},
\qquad
\beta=-\frac{\mu}{2}+\frac{1}{2}\sqrt{2\lambda+\mu^{2}}.
\end{equation}

Taking $\sigma=2$ and $m=2\mu+2$ in (\ref{PreToCheck}), this relation becomes
\begin{equation}\label{ToCheck}
\left(2\theta^{2}+2\theta\mu-\lambda\right)\mathbb{E}[Y_{0}^{\theta}]
+\theta\mathbb{E}[Y_{0}^{\theta-1}]=0.
\end{equation}

We will prove next that this moment relation is satisfied indeed by
$Y_0=\frac{B_{1,\alpha}}{2G_\beta}$ as given by (\ref{YorFormula}).
Using the densities
of these random variables from (\ref{BetaPDF}), (\ref{GammaPDF}), 
we can compute that
\begin{equation}
\mathbb{E}[Y_{0}^{\theta-1}]
=\frac{\alpha}{2^{\theta-1}}
\int_{0}^{1}x^{\theta-1}(1-x)^{\alpha-1}dx
\int_{0}^{\infty}\frac{z^{-\beta-1}}{\Gamma(\beta)}z^{\theta-1}e^{-\frac{1}{z}}dz.
\end{equation}
The two integrals are evaluated as
\begin{equation}
\int_{0}^{1}x^{\theta-1}(1-x)^{\alpha-1}dx
= B(\theta,\alpha) \,,
\end{equation}
and
\begin{equation}
\int_{0}^{\infty}\frac{z^{-\beta-1}}{\Gamma(\beta)}
z^{\theta-1}e^{-\frac{1}{z}}dz
=\frac{\Gamma(\beta-\theta+1)}{\Gamma(\beta)} \, .
\end{equation}
Hence,
\begin{equation}
\mathbb{E}[Y_{0}^{\theta-1}]
=\frac{\alpha}{2^{\theta-1}}
B(\theta,\alpha) \frac{\Gamma(\beta-\theta+1)}{\Gamma(\beta)},
\end{equation}
and therefore
\begin{equation}
\mathbb{E}[Y_{0}^{\theta}]
=\frac{\alpha}{2^{\theta}} B(\theta+1,\alpha)
\frac{\Gamma(\beta-\theta)}{\Gamma(\beta)} \,.
\end{equation}

To check \eqref{ToCheck}, we need to show that
\begin{equation}
\left(2\theta^{2}+2\theta\mu-\lambda\right)B(\theta+1,\alpha)\Gamma(\beta-\theta)
+2\theta B(\theta,\alpha)\Gamma(\beta-\theta+1)=0.
\end{equation}
Since $B(\gamma,\delta)=\frac{\Gamma(\gamma)\Gamma(\delta)}{\Gamma(\gamma+\delta)}$, 
it is equivalent to show that
\begin{equation}
\left(2\theta^{2}+2\theta\mu-\lambda\right)\frac{\Gamma(\theta+1)}{\Gamma(\theta+1+\alpha)}\Gamma(\beta-\theta)
+2\theta\frac{\Gamma(\theta)}{\Gamma(\theta+\alpha)}\Gamma(\beta-\theta+1)=0,
\end{equation}
which is equivalent to show that
\begin{equation}
\left(\theta^{2}+\theta\mu-\frac{\lambda}{2}\right)\theta
+\theta(\beta-\theta)(\theta+\alpha)=0,
\end{equation}
which holds by the definition of $\alpha$ and $\beta$.
We conclude that the moment relation \eqref{ToCheck} is indeed satisfied by
the random variable
$Y_{0}=\frac{1}{2}\frac{B_{1,\alpha}}{G_{\beta}}$.

>From Theorem~\ref{ThmExponentialT} we have the following result.

\begin{proposition}\label{prop:XNsmalltau}
The cumulative distribution function of $X_N$ approaches the following 
limiting distribution as $\tau \to 0$
\begin{equation}
\lim_{\tau\to 0} 
\int_x^\infty
\frac{dz}{\tau} f\left(\frac{z}{\tau};\sigma^2\tau,m\tau,\lambda\tau\right) =
\int_x^\infty dz \phi_\lambda(z;\sigma,m,\lambda),
\end{equation}
with $\phi_\lambda(x;\sigma,m,\lambda)$ given in (\ref{Phiz}).
\end{proposition}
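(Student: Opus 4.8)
The plan is to recognize the left-hand side as the survival function of the scaled variable $Y_\tau := \tau X_N$ and then to read off the conclusion from the distributional convergence already established in Theorem~\ref{ThmExponentialT}. First I would change variables $z = \tau u$ in the integral. Since $X_N$ has density $f(\cdot;\beta,\rho,p)$, the variable $Y_\tau = \tau X_N$ has density $g_\tau(z) = \frac{1}{\tau} f(z/\tau;\beta,\rho,p)$, and therefore, with $\beta = \sigma^2\tau$, $\rho = m\tau$, $p = \lambda\tau$,
\begin{equation}
\int_x^\infty \frac{dz}{\tau}\, f\!\left(\frac{z}{\tau};\sigma^2\tau,m\tau,\lambda\tau\right) = \int_x^\infty g_\tau(z)\,dz = \mathbb{P}(Y_\tau > x).
\end{equation}
Thus the assertion reduces to the statement that the tail probability $\mathbb{P}(Y_\tau > x)$ converges, as $\tau \to 0$, to $\mathbb{P}(Y_0 > x) = \int_x^\infty \phi_\lambda(z;\sigma,m,\lambda)\,dz$, where $Y_0 := \int_0^{T_\lambda} e^{\sigma W_t + (m-\frac12\sigma^2)t}\,dt$. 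This change of variables is legitimate because $f(\cdot;\beta,\rho,p)$ is a genuine probability density for each $\tau$, so $g_\tau$ integrates to one.

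Next I would invoke Theorem~\ref{ThmExponentialT}, which under the standing assumption $m<\lambda$ gives $Y_\tau \to Y_0$ in distribution as $\tau \to 0$. The limiting law is absolutely continuous: by Yor's identity (\ref{YorFormula}) one has $2Y_0 = B_{1,\alpha}/G_\beta$ in distribution, a ratio of independent Beta and Gamma random variables, whose density is $\phi_\lambda(z;\sigma,m,\lambda)$ as recorded in (\ref{Phiz}). In particular the cumulative distribution function of $Y_0$ is continuous on $(0,\infty)$, so every $x>0$ is a continuity point of the limit.

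Finally, I would use the standard equivalence between convergence in distribution and pointwise convergence of the distribution functions at every continuity point of the limit; passing to complementary events yields $\mathbb{P}(Y_\tau > x) \to \mathbb{P}(Y_0 > x)$ for each such $x$, which is exactly the claimed limit. I do not expect a genuine obstacle here, since the result is essentially a corollary of Theorem~\ref{ThmExponentialT}: the only substantive content beyond that theorem is the identification of the limit density $\phi_\lambda$ with the Beta-over-Gamma law, and this is precisely what the moment relation (\ref{ToCheck}) verified in the discussion preceding the statement accomplishes. The absolute continuity of $Y_0$ is what removes any concern about continuity points, so no further regularity argument is needed.
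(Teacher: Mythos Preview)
Your proposal is correct and follows exactly the approach the paper intends: the paper presents this proposition immediately after Theorem~\ref{ThmExponentialT} with the sentence ``From Theorem~\ref{ThmExponentialT} we have the following result'' and gives no further proof, so the content is precisely the change of variables you describe together with the observation that the limiting law is absolutely continuous. Your version simply makes explicit the continuity-point argument that the paper leaves tacit.
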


\begin{remark}
Convergence in distribution only implies the convergence
of the cumulative distribution function, not the probability density function.
Nevertheless, in practice, as we can see from Figure \ref{Fig:4}, the probability
density function for the discrete time case can be heuristically approximated
by the continuous time case via:
\begin{equation}\label{fNsmalltau}
f(x;\beta,\rho,p)\sim
\frac12 \beta \varphi\left(\frac12\beta x;a,b\right),
\qquad\text{as $\beta,\rho,p\to 0$},
\end{equation}
where $\varphi(x;a,b)$ is given in (\ref{varphiy}) and
\begin{eqnarray}
a &=& \frac{1}{2\beta}
\left( 2\rho - \beta + \sqrt{(2\rho-\beta)^2+8p \beta} \right), \\
b &=& \frac{1}{2\beta}
\left( - 2\rho + \beta + \sqrt{(2\rho-\beta)^2+8p\beta} \right).
\end{eqnarray}
\end{remark}

\section{Exponential L\'{e}vy Model}\label{LevySection}

We can replace the geometric Brownian motion by the exponential L\'{e}vy model,
and most of the results and conclusions of the previous section still hold. 
Let us consider the sum
$X_{N}=\sum_{i=1}^{N}e^{Z_{t_{i}}+m t_{i}}$, 
where $m<0$, $t_{i}=i\tau$ and $Z_{t_{i}}$ is a L\'{e}vy process
so that for any $\theta\in\mathbb{R}$, $\mathbb{E}[e^{\theta Z_{t}}]=e^{\kappa(\theta)t}$, 
and $N$ follows a geometric distribution with parameter $p$, 
independent of the L\'{e}vy process $Z_{t}$, that is,
\begin{equation}
\mathbb{P}(N=k)=(1-p)^{k-1}p,
\qquad
k=1,2,3,\ldots.
\end{equation}
Therefore, it is not hard to see that
\begin{equation}
X_{N}  =\mathcal{AQ} + \mathcal{A}(1-\mathcal{Q})(1+X_{N}),
\end{equation}
in distribution, where $X_{N}$, $\mathcal{A}$ and $\mathcal{Q}$ 
are independent, with\footnote{Note the change of definition for $\mathcal{A}$
compared to the previous sections.}
$\mathcal{A} := e^{Z_{\tau}+m\tau}$ and 
$\mathbb{P}(\mathcal{Q}=1)=p=1-\mathbb{P}(\mathcal{Q}=0)$.

When $p=0$, we have $N=\infty$, and $X_N \to X_{\infty}$. 

Let $\alpha$ be the unique positive solution of the equation
\begin{equation}
\varphi(\alpha):=
\mathbb{E}\left[\left(\mathcal{A}(1-\mathcal{Q})\right)^{\alpha}\right]
=\mathbb{E}[\mathcal{A}^{\alpha}]\mathbb{E}[(1-\mathcal{Q})^{\alpha}]=
e^{\kappa(\alpha)\tau+m\alpha\tau}(1-p)=1.
\end{equation}
Then, as $x\rightarrow+\infty$,
\begin{equation}
\mathbb{P}(X_{N}>x)\sim c_{+}x^{-\alpha},
\end{equation}
for some constant $c_{+}>0$. 

Let $Y_{\tau}:=\tau X_{N}$ and $p=\lambda\tau$. 
By a modification of the proof of Theorem \ref{ThmExponentialT}, 
we expect that the sum $Y_{\tau}$ converges to the integral under
some mild conditions. Specifically
\begin{equation}
Y_{\tau}\rightarrow\int_{0}^{T_{\lambda}}e^{Z_{t}+mt}dt, 
\end{equation}
in distribution as $\tau\rightarrow 0$, where $T_{\lambda}$ is exponentially 
distributed with parameter $\lambda>0$, and independent of the L\'{e}vy 
process $Z_{t}$.

The time integral of exponential L\'{e}vy processes has been studied
in the literature, see e.g. Bertoin and Yor \cite{BY} and the references therein.
The discrete approximation can give an alternative derivation of the properties
of the continuous time integral of exponential L\'{e}vy processes.
Note that since $Y_{\tau}=\tau X_{N}$, we have 
\begin{equation}
Y_{\tau} = \mathcal{A} (\tau+(1-\mathcal{Q})Y_{\tau})
\end{equation}
in distribution, where $Y_{\tau}$ is independent of $\mathcal{A,Q}$.
The m.g.f. of $\log Y_{\tau}$ is given by (for $0<\theta<1$)
\begin{equation}\label{LQsub0}
\mathbb{E}[Y_{\tau}^{\theta}]=
\mathbb{E}[\mathcal{A}^{\theta}]\mathbb{E}[(\tau+(1-\mathcal{Q})Y_{\tau})^{\theta}],
\qquad 0<\theta<1,
\end{equation}
where we used the independence of $Y_{\tau}$ and $\mathcal{A}$. 
Note that since $\mathcal{A} = e^{Z_{\tau}+m\tau}$, we have
\begin{equation}\label{LQsub1}
\mathbb{E}[\mathcal{A}^{\theta}]=e^{\theta(\tau)\tau+\theta m\tau}
=1+\kappa(\theta)\tau+\theta m\tau+O(\tau^{2}),
\end{equation}
Following the same arguments as in the geometric Brownian motion case, we get
\begin{equation}
\mathbb{E}[Y_{\tau}^{\theta}]
=\left(1+\kappa(\theta)\tau+\theta m\tau+O(\tau^{2})\right)
\left(\mathbb{E}[Y_{\tau}^{\theta}]+(\theta\mathbb{E}[Y_{\tau}^{\theta-1}]-\lambda\mathbb{E}[Y_{\tau}^{\theta}])\tau
+O(\tau^{1+\theta})\right).
\end{equation}
Let $\tau\rightarrow 0$, the coefficient of $O(\tau)$ term must vanish and therefore
the limit $Y_{0}$ must satisfy the identity
\begin{equation}
\left(\kappa(\theta)+\theta m-\lambda\right)\mathbb{E}[Y_{0}^{\theta}]
+\theta\mathbb{E}[Y_{0}^{\theta-1}]=0,
\end{equation}
which recovers (2.4) in Donati-Martin et al. \cite{DMGY}.

\section{Positive Moments}
\label{Sec:PositiveMoments}

In the general setting of the exponential L\'{e}vy model with geometric 
mortality, the average and higher moments for $X_{N}$ may not exist, for 
an arbitrary drift $m$. For the finitely many positive moments
of $X_{N}$ that do exist, there exists a simple recursion relation to 
compute these positive moments. It is also worth noting that the negative 
moments of $X_{N}$ always exist but do not seem to yield closed-form 
expressions.

Recall that $\mathcal{A}=e^{Z_{\tau}+m\tau}$, where $Z_{t}$ is a L\'{e}vy 
process. 
Thus we have $\mathbb{E}[\mathcal{A}^{k}]=e^{\kappa(k)\tau+mk\tau}<1$
if and only if $\kappa(k)+mk<0$.
In the special case of geometric Brownian motion, 
$\mathbb{E}[\mathcal{A}^{k}]=e^{\frac{1}{2}\sigma^{2}\tau(k^{2}-k)+mk\tau}<1$
if and only if $k<\frac{-2m}{\sigma^{2}}+1$. 

Recall that
\begin{equation}
X_{N} = \mathcal{AQ} + \mathcal{A}(1-\mathcal{Q})(1+X_{N})
\end{equation}
in distribution and $X_{N},\mathcal{A,Q}$ are independent.
Therefore, for any $k\in\mathbb{N}$ such that $\kappa(k)+mk<0$, 
\begin{equation}
\mathbb{E}[X_{N}^{k}]=(1-p)\mathbb{E}[\mathcal{A}^{k}]\mathbb{E}[(1+X_{N})^{k}]
+p\mathbb{E}[\mathcal{A}^{k}],
\end{equation}
which yields the recurrence relation:
\begin{equation}\label{OriginalRecurrence}
\mathbb{E}[X_{N}^{k}]
=\frac{\mathbb{E}[\mathcal{A}^{k}]}{1-(1-p)\mathbb{E}[\mathcal{A}^{k}]}
\left[(1-p)\sum_{j=0}^{k-1}\binom{k}{j}\mathbb{E}[X_{N}^{j}]+p\right],
\qquad
k\in\mathbb{N}, \kappa(k)+mk<0.
\end{equation}

As a first step, let us consider the special case $p=0$, then $N=\infty$ a.s.
and the recurrence relation reduces to
\begin{equation}
\mathbb{E}[X_{\infty}^{k}]
=\frac{\mathbb{E}[\mathcal{A}^{k}]}{1-\mathbb{E}[\mathcal{A}^{k}]}
\sum_{j=0}^{k-1}\binom{k}{j}\mathbb{E}[X_{\infty}^{j}],
\qquad
k\in\mathbb{N}, \kappa(k)+mk<0.
\end{equation}
>From this recurrence relation, we can compute that
\begin{align*}
&\mathbb{E}[X_{\infty}]=\frac{\mathbb{E}[\mathcal{A}]}{1-\mathbb{E}[\mathcal{A}]}\binom{1}{0},
\\
&\mathbb{E}[X_{\infty}^{2}]=
  \frac{\mathbb{E}[\mathcal{A}^{2}]}{1-\mathbb{E}[\mathcal{A}^{2}]}\binom{2}{0}
 +\frac{\mathbb{E}[\mathcal{A}^{2}]}{1
 -\mathbb{E}[\mathcal{A}^{2}]}\binom{2}{1}\frac{\mathbb{E}[\mathcal{A}]}
      {1-\mathbb{E}[\mathcal{A}]}\binom{1}{0},
\\
&\mathbb{E}[X_{\infty}^{3}]=
  \frac{\mathbb{E}[\mathcal{A}^{3}]}{1-\mathbb{E}[\mathcal{A}^{3}]}\binom{3}{0}
+\frac{\mathbb{E}[\mathcal{A}^{3}]}{1-\mathbb{E}[\mathcal{A}^{3}]}
\binom{3}{1}\frac{\mathbb{E}[\mathcal{A}]}{1-\mathbb{E}[\mathcal{A}]}\binom{1}{0}
\\
&\qquad
+\frac{\mathbb{E}[\mathcal{A}^{3}]}{1-\mathbb{E}[\mathcal{A}^{3}]}
   \binom{3}{2}\frac{\mathbb{E}[\mathcal{A}^{2}]}
   {1-\mathbb{E}[\mathcal{A}^{2}]}\binom{2}{0}
+\frac{\mathbb{E}[\mathcal{A}^{3}]}{1
-\mathbb{E}[\mathcal{A}^{3}]}\binom{3}{2}
\frac{\mathbb{E}[\mathcal{A}^{2}]}{1-\mathbb{E}[\mathcal{A}^{2}]}\binom{2}{1}
\frac{\mathbb{E}[\mathcal{A}]}{1-\mathbb{E}[\mathcal{A}]}\binom{1}{0},
\end{align*}
and more generally,
\begin{equation}\label{MidRecurrence}
\mathbb{E}[X_{\infty}^{k}]
=\sum_{k=i_{m}>i_{m-1}>\cdots>i_{1}>i_{0}=0, 1\leq m\leq k}
\prod_{j=1}^{m}\binom{i_{j}}{i_{j-1}}
\frac{\mathbb{E}[\mathcal{A}^{i_{j}}]}{1-\mathbb{E}[\mathcal{A}^{i_{j}}]}.
\end{equation}

Now, let us go back to the original recurrence relation \eqref{OriginalRecurrence}.
Note that we can rewrite \eqref{OriginalRecurrence} as
\begin{equation}
\mathbb{E}[X_{N}^{k}]
=\frac{(1-p)\mathbb{E}[\mathcal{A}^{k}]}{1-(1-p)\mathbb{E}[\mathcal{A}^{k}]}
\left[\sum_{j=0}^{k-1}\binom{k}{j}\mathbb{E}[X_{N}^{j}]+\frac{p}{1-p}\right],
\qquad
k\in\mathbb{N}, \kappa(k)+mk<0.
\end{equation}
>From \eqref{MidRecurrence}, it is not difficult to see that
\begin{align}
\mathbb{E}[X_{N}^{k}]
&=\sum_{k=i_{m}>i_{m-1}>\cdots>i_{1}>i_{0}=0, 1\leq m\leq k}
\prod_{j=2}^{m}\binom{i_{j}}{i_{j-1}}\frac{(1-p)
\mathbb{E}[\mathcal{A}^{i_{j}}]}{1-(1-p)\mathbb{E}[\mathcal{A}^{i_{j}}]}
\\
&\qquad\qquad\qquad\cdot
\frac{(1-p)\mathbb{E}[\mathcal{A}^{i_{1}}]}{1-(1-p)\mathbb{E}[\mathcal{A}^{i_{1}}]}
\left[\binom{i_{1}}{i_{0}}+\frac{p}{1-p}\right]
\nonumber
\\
&=\frac{1}{1-p}\sum_{k=i_{m}>i_{m-1}>\cdots>i_{1}>i_{0}=0, 1\leq m\leq k}
\prod_{j=1}^{m}
\frac{(1-p)\mathbb{E}[\mathcal{A}^{i_{j}}]}{1-(1-p)
\mathbb{E}[\mathcal{A}^{i_{j}}]}\prod_{j=2}^{m}\binom{i_{j}}{i_{j-1}}.
\nonumber
\end{align}

\section{Applications to Annuities and Asian Options}\label{ApplicationSection}

In this section, we consider the applications of our results to annuities
and Asian options. As an illustration, we discuss only the case of the sum of 
geometric Brownian motions, so that the model for the Asian options is the 
standard Black-Scholes model. 
It is worth noting that all the discussions in Section \ref{ApplicationSection}
are valid for the sum of exponential L\'{e}vy processes as well.

\subsection{Annuities with Finite Mortality}

We have already analyzed the annuities with geometric distributed mortality.
Now, let us turn to the annuities with finite mortality $n$, and we are 
interested to compute the cumulative distribution function of 
$X_{n}=\sum_{i=1}^{n}e^{\sigma W_{t_{i}}+(m-\frac{1}{2}\sigma^{2})t_{i}}$, 
that is, for any $x>0$: the value of $\mathbb{P}(X_{n}\leq x)$.

For any $0<z<1$, we can compute that
\begin{equation}
G(z)=\sum_{n=1}^{\infty}\mathbb{P}(X_{n}\leq x)z^{n}
=\frac{z}{1-z}\mathbb{P}(X_{N_{p}}\leq x),
\nonumber
\end{equation}
where $N_{p}$ has a geometric distribution with $p=1-z$. Then, 
\begin{align}\label{XNcdf}
\mathbb{P}(X_{n}\leq x)
&=\frac{1}{n!}\frac{d^{n}}{dz^{n}}G(z)\bigg|_{z=0}
\\
&=\frac{1}{n!}\sum_{k=0}^{n}\binom{n}{k}\left(\frac{z}{1-z}\right)^{(n-k)}\bigg|_{z=0}
\frac{d^{k}}{dz^{k}}\mathbb{P}(X_{N_{p}}\leq x)\bigg|_{z=0}
\nonumber
\\
&=\sum_{k=0}^{n-1}\frac{1}{k!}
\frac{d^{k}}{dz^{k}}\mathbb{P}(X_{N_{p}}\leq x)\bigg|_{z=0}
\nonumber
\\
&=\sum_{k=0}^{n-1}\frac{1}{k!}(-1)^{k}
\int_{0}^{x}\frac{\partial^{k}}{\partial p^{k}}f(y;\beta,\rho,p)dy\bigg|_{p=1},
\nonumber
\end{align}
where we recall that $f(x;\beta,\rho,p)$ is the probability density function of $X_{N}$
with $\beta=\sigma^{2}\tau$ and $\rho=m\tau$.

We give next a recursive representation for the coefficients in this expansion
expressed in terms of the distribution function $f(x;\beta,\rho,p)$
and its derivatives with respect to $p$ at $p=1$.
\begin{theorem}\label{Thm22}
The finite sum $X_{n}=\sum_{i=1}^{n}e^{\sigma W_{t_{i}}+(m-\frac{1}{2}\sigma^{2})t_{i}}$ 
has the probability density function
\begin{equation}\label{Xnsum}
f_n(x;\beta,\rho) = \sum_{k=0}^{n-1}\frac{1}{k!}(-1)^{k}
\frac{\partial^{k}}{\partial p^{k}}f(x;\beta,\rho,1),
\end{equation}
where 
\begin{align}\label{deriv0}
&f(x;\beta,\rho,1)=\frac{1}{\sqrt{2\pi\beta}x}
e^{-\frac{1}{2\beta}(\log x-\rho+\frac{1}{2}\beta)^{2}},
\\
\label{deriv1}
&\frac{\partial}{\partial p}f(x;\beta,\rho,1)=\frac{1}{\sqrt{2\pi\beta}x}
e^{-\frac{1}{2\beta}(\log x-\rho+\frac{1}{2}\beta)^{2}}
-\frac{1}{x} \int_0^\infty
e^{-\frac{1}{2\beta} 
\left[ \log \frac{x}{1+y} + \frac12 \beta - \rho \right]^2}
\frac{e^{-\frac{1}{2\beta}(\log y-\rho+\frac{1}{2}\beta)^{2}}}{2\pi\beta y}dy,
\end{align}
and for any $k\geq 2$,
\begin{eqnarray}
\frac{\partial^{k}}{\partial p^{k}}f(x;\beta,\rho,1)= 
-k\frac{1}{x} \int_0^\infty \frac{dy}{\sqrt{2\pi\beta}} 
\exp\left( -\frac{1}{2\beta} 
\left[ \log \frac{x}{1+y} + \frac12 \beta - \rho \right]^2
\right)
\frac{\partial^{k-1}}{\partial p^{k-1}}f(y;\beta,\rho,1)\,.\nonumber \\
\label{fderivk}
\end{eqnarray}
\end{theorem}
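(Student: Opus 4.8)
The plan is to separate the statement into two independent pieces: the representation (\ref{Xnsum}) itself, and the three formulas (\ref{deriv0})--(\ref{fderivk}) for the $p$-derivatives of $f(\,\cdot\,;\beta,\rho,p)$ at $p=1$. The first piece is essentially already in hand: the generating-function computation preceding the theorem writes $\mathbb{P}(X_n\le x)=\sum_{k=0}^{n-1}\frac{1}{k!}(-1)^k\int_0^x\frac{\partial^k}{\partial p^k}f(y;\beta,\rho,p)\,dy\big|_{p=1}$, and differentiating in $x$ gives (\ref{Xnsum}) immediately, since the sum is finite and $\partial_x$ may be passed through the $y$-integral. So the real task is to evaluate the $p$-derivatives, and for this I would work directly from the integral equation (\ref{IntegralEqXN}) of Proposition~\ref{prop:14}.

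Write (\ref{IntegralEqXN}) compactly as $f(x;p)=p\,g(x)+(1-p)(\mathcal{K}f)(x;p)$, where $g(x)=\frac{1}{\sqrt{2\pi\beta}\,x}e^{-\frac{1}{2\beta}(\log x-\rho+\frac12\beta)^2}$ is the log-normal ($N=1$) density and $\mathcal{K}$ is the $p$-independent integral operator with the Gaussian kernel $K(x,y)=\frac{1}{x\sqrt{2\pi\beta}}\exp(-\frac{1}{2\beta}[\log\frac{x}{1+y}+\frac12\beta-\rho]^2)$ appearing there. Setting $p=1$ kills the $(1-p)$ prefactor and yields (\ref{deriv0}); this is also transparent probabilistically, as $p=1$ forces $N=1$. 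Differentiating once gives $\partial_p f=g-\mathcal{K}f+(1-p)\mathcal{K}\partial_p f$, and evaluating at $p=1$ drops the last term to leave $\partial_p f(\,\cdot\,;1)=g-\mathcal{K}g$, which is (\ref{deriv1}) after inserting $g$ and $K$ explicitly. For $k\ge 2$ I would apply Leibniz's rule: the linear term $p\,g$ contributes nothing, while $\partial_p^k[(1-p)\mathcal{K}f]=(1-p)\mathcal{K}\partial_p^k f-k\,\mathcal{K}\partial_p^{k-1}f$ because $(1-p)$ has only a first derivative in $p$. Evaluating at $p=1$ annihilates the $(1-p)$ term and produces exactly the recursion (\ref{fderivk}). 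The $k=1$ case must be stated separately precisely because $\partial_p[p\,g]=g\neq 0$ whereas $\partial_p^k[p\,g]=0$ for $k\ge 2$; this single extra $g$ is the only difference between (\ref{deriv1}) and the $k=1$ instance of (\ref{fderivk}).

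The step needing genuine care, and the main obstacle, is justifying differentiation under the integral sign and the very existence of all these $p$-derivatives. I would underwrite this by solving the integral equation as a Neumann series: the fixed point of $f=p\,g+(1-p)\mathcal{K}f$ is $f(x;p)=p\sum_{j\ge 0}(1-p)^j(\mathcal{K}^j g)(x)$, equivalently $f(x;p)=g(x)+\sum_{j\ge 1}(1-p)^j[(\mathcal{K}^j g)(x)-(\mathcal{K}^{j-1}g)(x)]$. Since $K(\,\cdot\,,y)$ integrates to $1$ in $x$, $\mathcal{K}$ is a transition kernel mapping probability densities to probability densities, so each $\mathcal{K}^j g$ is a density and the series converges; being an honest power series in $(1-p)$, it legitimizes term-by-term differentiation, the interchange of $\partial_p$ with the $y$-integral, and the coefficient extraction $\frac{1}{n!}G^{(n)}(0)$ used before the statement. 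As a consistency check this representation gives $\partial_p^k f(\,\cdot\,;1)=(-1)^k k!\,(\mathcal{K}^k g-\mathcal{K}^{k-1}g)$ for $k\ge 1$, whence the sum in (\ref{Xnsum}) telescopes to $f_n=\mathcal{K}^{n-1}g$; this is exactly the density obtained by iterating the in-law recursion $X_n=\mathcal{A}(1+X_{n-1})$, proved as in (\ref{Xrel}), starting from $X_1$ with density $g$, confirming both the formulas and their structure.
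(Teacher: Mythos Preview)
Your approach is essentially the same as the paper's: both derive (\ref{Xnsum}) by differentiating the generating-function identity (\ref{XNcdf}) in $x$, and obtain (\ref{deriv0})--(\ref{fderivk}) by successively differentiating the integral equation (\ref{IntegralEqXN}) in $p$ and evaluating at $p=1$. Your use of Leibniz's rule for $\partial_p^k[(1-p)\mathcal{K}f]$ makes the $k\ge 2$ step slightly more explicit, and your Neumann-series justification for differentiation under the integral and the telescoping consistency check $f_n=\mathcal{K}^{n-1}g$ go beyond the paper's proof proper (the latter appears only in the Remark following the theorem, and the former is not addressed at all).
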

\begin{proof}

Recall from Proposition \ref{prop:14} that the density function of $X_N$ 
satisfies the integral equation (\ref{IntegralEqXN}).
By letting $p=1$ in this equation, we get equation \eqref{deriv0}.
Differentiating equation \eqref{IntegralEqXN} with respect to $p$
and setting $p=1$, we get
\begin{align}
&\frac{\partial}{\partial p}f(x;\beta,\rho,p)\bigg|_{p=1}
\\
&= 
\frac{1}{\sqrt{2\pi\beta}x}
e^{-\frac{1}{2\beta}(\log x-\rho+\frac{1}{2}\beta)^{2}}
\nonumber
\\
&\qquad
-\frac{1}{x} \int_0^\infty \frac{dy}{\sqrt{2\pi\beta}} 
\exp\left( -\frac{1}{2\beta} 
\left[ \log \frac{x}{1+y} + \frac12 \beta - \rho \right]^2
\right)
f(y;\beta,\rho,1).
\nonumber
\end{align}
which reproduces \eqref{deriv1}.
Moreover, for any $k\in\mathbb{N}$
and $k\geq 2$, differentiating equation \eqref{IntegralEqXN}
$k$ times with respect to $p$ and setting $p=1$, we get the
equation \eqref{fderivk}.
This gives $\frac{\partial^{k}}{\partial p^{k}}f(x;\beta,\rho,1)$ for every 
$k=0,1,2,\ldots$. Substitution into (\ref{XNcdf}) and taking one derivative
with respect to $x$ gives the representation
(\ref{Xnsum}) for the density of $X_n$. 
This concludes the proof of this relation.

\end{proof}

The finite sum $X_n$ satisfies the recursion
\begin{equation}
X_n = \mathcal{A} (1 + X_{n-1})\,,
\end{equation}
where $\mathcal{A}$ is defined in (\ref{Adef}). This gives a recursive relation
for the density of $X_n$ which can be written in symbolic form as
\begin{equation}\label{rec}
f_n(x;\beta,\rho) = \hat T_{\beta,\rho} f_{n-1}(x;\beta,\rho) \,,
\end{equation}
where $\hat T_{\beta,\rho}$ denotes the integral transform in (\ref{integraleq}),
with initial condition $f_1(x;\beta,\rho) = f(x;\beta,\rho,1)$.
This is solved formally as 
\begin{equation}\label{Xnprod}
f_n(x;\beta,\rho) = \hat T_{\beta,\rho}^{n-1} f_{1}(x;\beta,\rho) \,.
\end{equation}

\begin{remark}
Theorem~\ref{Thm22} gives an explicit additive solution for the recursion
(\ref{rec}). In order to see this we note that the terms appearing in 
(\ref{Xnsum}) can be written alternatively as 
\begin{equation}
\frac{\partial^k}{\partial p^k} f(x;\beta,\rho,1) = (-1)^{k-1} k! \hat T_{\beta,\rho}^{k-1}
(1 - \hat T_{\beta,\rho}) f_1(x;\beta,\rho) \,.
\end{equation}
It is easy to see by substitution into (\ref{Xnsum}) that the total
result agrees with (\ref{Xnprod}).
\end{remark}

We can also study the left tails and right tails
of the finite sum of geometric Brownian motions:
\begin{proposition}\label{Prop:XnLT}
For any $n\in\mathbb{N}$, we have
\begin{equation}
\lim_{\epsilon\rightarrow 0}\frac{\log\mathbb{P}(X_{n}\leq\epsilon)}{(\log\epsilon)^{2}}=-\frac{1}{2\sigma^{2}\tau}.
\end{equation}
\end{proposition}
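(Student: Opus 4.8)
The plan is to mirror exactly the argument in the proof of Proposition~\ref{LeftTailProp}, sandwiching $\mathbb{P}(X_n\leq\epsilon)$ between two quantities controlled by the log-normal lower tail of $\mathcal{A}$. The only structural input needed is the one-step recursion $X_n = \mathcal{A}(1+X_{n-1})$ in distribution, with $\mathcal{A}=e^{\sigma\sqrt{\tau}Z+(m-\frac12\sigma^2)\tau}$ (see (\ref{Adef})) independent of $X_{n-1}$, together with the log-normal tail asymptotic (\ref{Atail}).

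For the upper bound I would note that, since every summand is positive, $X_n \geq e^{\sigma W_{t_1}+(m-\frac12\sigma^2)t_1}$, which has the same law as $\mathcal{A}$. Hence $\mathbb{P}(X_n\leq\epsilon)\leq\mathbb{P}(\mathcal{A}\leq\epsilon)$, and dividing the (negative) logarithms by $(\log\epsilon)^2$ and applying (\ref{Atail}) gives $\limsup_{\epsilon\to 0}\frac{\log\mathbb{P}(X_n\leq\epsilon)}{(\log\epsilon)^2}\leq-\frac{1}{2\sigma^2\tau}$. For the matching lower bound, I would fix $\delta>0$ and condition on $\{X_{n-1}\leq\delta\}$, exactly as for $X_\infty$, using independence of $\mathcal{A}$ and $X_{n-1}$:
\[
\mathbb{P}(X_n\leq\epsilon)=\mathbb{P}\left(\mathcal{A}(1+X_{n-1})\leq\epsilon\right)\geq\mathbb{P}\left(\mathcal{A}\leq\frac{\epsilon}{1+\delta}\right)\mathbb{P}(X_{n-1}\leq\delta).
\]
Taking logarithms, dividing by $(\log\epsilon)^2$, and letting $\epsilon\to 0$ at fixed $\delta$, the elementary scaling $\left(\log\frac{\epsilon}{1+\delta}\right)^2=(\log\epsilon)^2(1+o(1))$ combined with (\ref{Atail}) shows the first term tends to $-\frac{1}{2\sigma^2\tau}$, which yields the reverse inequality $\liminf_{\epsilon\to 0}\frac{\log\mathbb{P}(X_n\leq\epsilon)}{(\log\epsilon)^2}\geq-\frac{1}{2\sigma^2\tau}$, and the two bounds together give the claim.

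The point that makes the finite-$n$ case even cleaner than the infinite sum is that, for fixed $n$ and fixed $\delta$, the factor $\mathbb{P}(X_{n-1}\leq\delta)$ is a strictly positive constant, so the contribution $\frac{\log\mathbb{P}(X_{n-1}\leq\delta)}{(\log\epsilon)^2}$ vanishes in the limit and $\delta$ never needs to depend on $\epsilon$. I therefore do not anticipate a genuine obstacle: the only items to check are the positivity of $\mathbb{P}(X_{n-1}\leq\delta)$ — immediate, since $X_{n-1}$ is a finite sum of log-normals with support $(0,\infty)$ (and for $n=1$ one has $X_0=0$, so the factor equals $1$ and the statement reduces directly to (\ref{Atail})) — and the routine scaling identity $\left(\log(\epsilon/(1+\delta))\right)^2\sim(\log\epsilon)^2$. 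The argument is thus a direct adaptation of the $X_\infty$ proof with $X_\infty$ replaced throughout by $X_{n-1}$.
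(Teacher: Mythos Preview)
Your proof is correct, but the paper takes a shorter route. Rather than rerunning the conditioning argument with $X_{n-1}$ in place of $X_\infty$, the paper simply observes the almost-sure sandwich $X_1 \leq X_n \leq X_\infty$, which immediately gives $\mathbb{P}(X_\infty\leq\epsilon)\leq\mathbb{P}(X_n\leq\epsilon)\leq\mathbb{P}(X_1\leq\epsilon)$; since $X_1$ is log-normal and Proposition~\ref{LeftTailProp} has already established the limit $-\frac{1}{2\sigma^2\tau}$ for $X_\infty$, the result follows by squeezing. Your approach has one genuine advantage: it does not invoke $X_\infty$ at all, and hence does not implicitly require the condition $m<\frac12\sigma^2$ needed for $X_\infty$ to exist, whereas the paper's sandwich does. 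On the other hand, the paper's argument is a two-line corollary of results already in hand, while yours repeats the mechanics of the $X_\infty$ proof.
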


\begin{proof}
Note that for any $n\in\mathbb{N}$, $X_{1}\leq X_{n}\leq X_{\infty}$. 
Since $X_{1}$ is log-normally distributed, it is clear that
$\lim_{\epsilon\rightarrow 0}\frac{\log\mathbb{P}(X_{1}\leq\epsilon)}{(\log\epsilon)^{2}}=-\frac{1}{2\sigma^{2}\tau}$.
Then, the result follows from Proposition \ref{LeftTailProp}.
\end{proof}

We have the following estimate for the right tail asymptotics:
\begin{proposition}
For any $n \in \mathbb{N}$ we have
\begin{equation}\label{XnRT}
\lim_{x\rightarrow\infty}\frac{\log\mathbb{P}(X_{n}\geq x)}{(\log x)^{2}}
= - \frac{1}{2\sigma^2\tau n}\,.
\end{equation}
\end{proposition}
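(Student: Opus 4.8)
The plan is to sandwich $\mathbb{P}(X_n \geq x)$ between the tail of a single dominant log-normal term and a union bound over all $n$ terms, and then extract the $(\log x)^2$ asymptotics from the Gaussian cumulative distribution function. Write $X_n = \sum_{i=1}^n L_i$ with $L_i := e^{\sigma W_{t_i} + (m-\frac12\sigma^2)t_i}$ and $t_i = i\tau$. Each $L_i$ is log-normal, with $\sigma W_{t_i} \sim N(0,\sigma^2 i\tau)$, so the log-variance of $L_i$ is $\sigma^2 i\tau$, which is largest for $i=n$. The heuristic driving the whole argument is that the right tail of the sum is governed by the single summand of largest variance, namely $L_n$, whose log-variance $\sigma^2 n\tau$ is exactly what produces the claimed exponent $-\frac{1}{2\sigma^2 n\tau}$. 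Note that this is a genuinely log-normal (not power-law) tail, in contrast to the Pareto tail of $X_\infty$ in Proposition~\ref{PropXinftyRightTail}, which arises only in the infinite-sum/stochastic-recursion limit.

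For the lower bound, since every $L_i>0$ we have $X_n \geq L_n$, hence $\mathbb{P}(X_n\geq x) \geq \mathbb{P}(L_n\geq x)$. Writing $\mathbb{P}(L_n\geq x) = \Phi\big(-\frac{\log x - (m-\frac12\sigma^2)t_n}{\sigma\sqrt{n\tau}}\big)$ and using the standard asymptotic $\log\Phi(-y)\sim -\frac12 y^2$ as $y\to\infty$ (the same expansion already invoked in the proof of Proposition~\ref{LeftTailProp}), one finds $\log\mathbb{P}(X_n\geq x) \geq \log\mathbb{P}(L_n\geq x) \sim -\frac{(\log x)^2}{2\sigma^2 n\tau}$, since the deterministic shift $(m-\frac12\sigma^2)t_n$ is subleading on the $(\log x)^2$ scale. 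This gives $\liminf_{x\to\infty}\frac{\log\mathbb{P}(X_n\geq x)}{(\log x)^2} \geq -\frac{1}{2\sigma^2 n\tau}$.

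For the upper bound, I would use the union bound: if $X_n\geq x$ then $L_i \geq x/n$ for at least one $i$, so $\mathbb{P}(X_n\geq x) \leq \sum_{i=1}^n \mathbb{P}(L_i\geq x/n) \leq n\max_{1\leq i\leq n}\mathbb{P}(L_i\geq x/n)$. Each summand is again a Gaussian tail, $\mathbb{P}(L_i\geq x/n) = \Phi\big(-\frac{\log(x/n)-(m-\frac12\sigma^2)t_i}{\sigma\sqrt{i\tau}}\big)$, with $\log\mathbb{P}(L_i\geq x/n)\sim -\frac{(\log x)^2}{2\sigma^2 i\tau}$. Since $-\frac{1}{2\sigma^2 i\tau}$ is increasing in $i$, the maximum over $i$ is attained at $i=n$, and the prefactor $\log n$, the dilution $x\mapsto x/n$, and the drift shifts are all $o((\log x)^2)$. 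Hence $\limsup_{x\to\infty}\frac{\log\mathbb{P}(X_n\geq x)}{(\log x)^2} \leq -\frac{1}{2\sigma^2 n\tau}$, and combining the two bounds yields the claim.

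The argument is essentially a sandwich, so the substance is bookkeeping rather than any deep estimate; no single step is a serious obstacle. The one point requiring care is verifying that all lower-order contributions—the factor $n$ from the union bound, the dilution $x/n$, and the deterministic drift $(m-\frac12\sigma^2)t_i$—are genuinely negligible after dividing by $(\log x)^2$, and that the maximization over $i$ correctly selects the term of largest variance $i=n$. I expect nothing beyond recording the Gaussian-tail expansion precisely and checking that these corrections vanish in the limit.
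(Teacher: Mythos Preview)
Your proposal is correct and follows essentially the same sandwich strategy as the paper: the lower bound via $X_n \geq L_n$ is identical, and your upper bound via the union bound $\mathbb{P}(X_n\geq x)\leq n\max_i \mathbb{P}(L_i\geq x/n)$ is a minor variant of the paper's bound $X_n \leq n\,e^{\sigma\max_i W_{t_i}+|m-\frac12\sigma^2|t_n}$ combined with the reflection principle. If anything, your version is slightly cleaner, since it sidesteps the reflection-principle step (which in the paper is applied to a discrete-time maximum and really gives only a stochastic domination, not an equality in law).
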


\begin{proof}
We prove matching upper and lower bounds. We start by deriving an upper 
bound, which follows by writing
\begin{eqnarray}
 X_n \leq \sum_{k=1}^n 
   e^{\sigma \max_{1 \leq i \leq n} W_{t_i} + |m - \frac12\sigma^2|t_n}  
= n e^{\sigma \max_{1 \leq i \leq n} W_{t_i} + |m - \frac12\sigma^2|t_n}.
\end{eqnarray}
By the reflection principle, $\max_{1 \leq k \leq n} W_{t_k} = |W_{t_n}|$ in 
distribution. This gives
\begin{align}
\mathbb{P}(X_n > x) 
&\leq\mathbb{P}\left(e^{\sigma |W_{t_n}|} > 
\frac{x}{n} e^{-|m-\frac12\sigma^2|t_n} \right)
\\
& = 2 \mathbb{P}\left(
Z > \frac{1}{\sigma\sqrt{t_n}} \left(  \log(x/n) - |m-\frac12\sigma^2| t_n
\right) \right) 
\nonumber 
\\
& = 2\Phi \left( - \frac{1}{\sigma\sqrt{t_n}} 
   \left( \log(x/n) - |m-\frac12\sigma^2| t_n \right) \right) 
\nonumber 
\\
& \leq 2 \frac{\sigma\sqrt{t_n}}{\sqrt{2\pi} L} 
   e^{-\frac{1}{2\sigma^2 t_n}L^2},
\nonumber
\end{align}
where we denoted $L=\log(x/n) - |m-\frac12\sigma^2|t_n$ 
and $\Phi(x):=\frac{1}{\sqrt{2\pi}}\int_{-\infty}^{x}e^{-\frac{y^{2}}{2}}dy$ is 
the cumulative distribution function of $N(0,1)$.
Here $Z=N(0,1)$ and we used in the last line the inequality 
\begin{equation}\label{Nbounds}
\frac{1}{\sqrt{2\pi} x} e^{-\frac12 x^2} \left(1 - \frac{1}{x^2}\right) \leq
\Phi(-x) \leq \frac{1}{\sqrt{2\pi} x} e^{-\frac12 x^2}\,, \quad x>0\,.
\end{equation}
Taking the logs of both sides, dividing by $\log^2 x$ and taking the
$x\to \infty$ limit gives
\begin{equation}
\limsup_{x\to \infty} \frac{\log\mathbb{P}(X_n>x)}{\log^2 x} \leq 
   - \frac{1}{2\sigma^2\tau n}.
\end{equation}
This proves the upper bound for (\ref{XnRT}). 

Next we prove a matching lower bound. This is obtained from the inequality 
\begin{equation}
X_n > e^{\sigma W_{t_n} + (m-\frac12\sigma^2)t_n},
\end{equation}
which implies
\begin{align}
& \mathbb{P}(X_n > x) > \mathbb{P}\left(e^{\sigma W_{t_n}} > 
  x e^{-(m-\frac12\sigma^2)t_n}\right) 
\\
&= \mathbb{P}\left( Z > \frac{1}{\sigma \sqrt{t_n}} 
                 \log\left(x e^{-(m-\frac12\sigma^2)t_n}\right)\right) \nonumber \\
&\geq \frac{\sigma\sqrt{t_n}}{\sqrt{2\pi} L'} 
e^{-\frac{1}{2\sigma^2 t_n}L'^2} \left(1 - \frac{\sigma^2 t_n}{L'^2}\right),\nonumber
\end{align}
where we denoted $L'=\log x - (m-\frac12\sigma^2) t_n$ and used again the 
inequality (\ref{Nbounds}). Taking the logs of both sides, dividing by
$\log^2 x$ and taking the $x\to \infty$ limit gives
\begin{equation}
\liminf_{x\to \infty} \frac{\log\mathbb{P}(X_n>x)}{\log^2 x} \geq
    - \frac{1}{2\sigma^2 \tau n} \,.
\end{equation}
This proves the lower bound.
This completes the proof of (\ref{XnRT}).  
\end{proof}

\begin{remark}
The right tail asymptotics of the discrete sum of GBM (\ref{XnRT}) is similar 
to the right tail asymptotics of the time integral of the GBM which was studied 
in \cite{LingExtremeStrikes} in relation to the large strike asymptotics of the
out of money Asian call options in the Black-Scholes model. From Proposition 1(i) 
in \cite{LingExtremeStrikes}
one finds
\begin{equation}
\lim_{x\to \infty}\frac{\mathbb{P}(\int_0^T dt e^{\sigma W_t + (r-\frac12\sigma^2)t} > x)}{\log^2 x}
= - \frac{1}{2\sigma^2 T}\,.
\end{equation}
A similar result is obtained for the left tail asymptotics of the time integral
of the GBM (Proposition 1(ii) in \cite{LingExtremeStrikes}). 
The corresponding asymptotics for the left tail of the sum of GBM is however
different, as seen from Proposition~\ref{Prop:XnLT}.
\end{remark}

The tail asymptotics of the sum of correlated log-normal random variables 
has been widely studied in the literature \cite{ARN,GXY,GT}, see
\cite{reviewLN} for a review of the literature and applications.
The right tail asymptotics of the sum of correlated log-normal random variables
has been completely characterized in \cite{ARN}. Our result 
(\ref{XnRT}) agrees with the results of \cite{ARN}, specialized to the sum of 
GBM. The asymptotics of the left tail has been recently also studied in
\cite{GT} for an arbitrary number of log-normal variables, and for $n=2$ in
\cite{GXY}. However, the results of \cite{GT} are obtained under a certain 
assumption (denoted Assumption $\mathcal{A}$ in \cite{GT}) which does not hold 
for the sum of GBM, such  that their results cannot be applied to our problem.

\subsection{Annuities with Stochastic Mortality}

Now assume that the mortality time $N$ has a general distribution:
\begin{equation}
\mathbb{P}(N=n)=p_{n},\qquad n=1,2,3,\ldots,
\end{equation}
and $N$ is independent of the geometric Brownian motion. 

We discussed the distribution of $X_{N}$, with $N$ following a geometric 
distribution. Using this result, we also derived the distribution of $X_{n}$ 
for a finite given $n$. 
When $N$ follows a general distribution, we denote the corresponding sum of
GBM as $X_R$. The cumulative distribution function 
of $X_{R}$ is given, for any $x>0$, by
\begin{equation}
\mathbb{P}(X_{R}\leq x)=\sum_{n=1}^{\infty}p_{n}\mathbb{P}(X_{n}\leq x),
\end{equation}
and $f_{R}(x)$, the probability density function of $X_{R}$, is thus given by
\begin{equation}
f_{R}(x)=\sum_{n=1}^{\infty}p_{n}f_{n}(x),
\end{equation}
where $f_{n}(x)$ is the probability density function of $X_{n}$.

To summarize, for the general stochastic mortality annuities, we can use
geometric mortality to derive the distribution for the finite mortality
and then use this to finally obtain the distribution for the general stochastic 
mortality. 

An alternative method was proposed in \cite{GSY} where it was showed that 
any positive-definite discrete distribution can be matched arbitrarily close 
by an appropriate linear combination of geometric distributions. This is the
discrete time counterpart of a continuous-time result \cite{Dufresne2007}, 
that states that any positive definite continuous distribution can be 
approximated arbitrarily close by an appropriate linear combination of 
exponential distributions. 

\subsection{Risk Measures of Annuities}

In practical applications one is interested in the 
probability that the annuity $S_n$ exceeds a certain value $K$,
giving the available amount from which the cash flows are paid.
This defines the shortfall probability $\mathbb{P}(S_n > K)$.

We will compute in this section the shortfall probability of the
sum of geometric Brownian motion with a geometrically distributed
stopping time $X_N$. 
Using the right-tail asymptotics derived in equation (\ref{XNrighttail}), 
this is given
for $K \to \infty$ by the complementary cumulative distribution function
\begin{equation}
\mathbb{P}(X_N > K) = \int_K^\infty dx f(x;\beta,\rho,p) \sim
c_{+}K^{-\mu},
\end{equation}
with $\mu$ given by (\ref{mugeom}), and $c_+>0$ is a positive constant
determined by (\ref{cplus}).

An alternative risk metric is the Value-at-Risk which is defined as
that  amount $K$
for which  the probability of $X_N$ exceeding $K$ takes a known value,
e.g. 5\% or 1\%. We define thus
\begin{equation}
\mbox{p-VaR} =\inf\left\{K\geq 0:\mathbb{P}(X_{N}>K)\geq p\right\}.
\end{equation}
Using again the right tail asymptotics for $X_N$ we have
\begin{equation}
\mbox{p-VaR} = - \frac{1}{\mu} \log(p/c_+),
\end{equation}
for sufficiently small $p$, 
with $\mu$ given by (\ref{mugeom}), and $c_+>0$ given in (\ref{cplus}).

In practical applications the distributional properties of the discrete time 
annuities are likely to be studied using numerical methods, such as Monte Carlo
simulations. Such methods are known to be unreliable for sampling the tail
probabilities as they require very long simulation times \cite{reviewLN}.
Using the exact tail behavior obtained in this paper it is possible to
obtain reliable risk metrics for the shortfall probabilities of discrete time
annuities, and to construct efficient simulation methods. Another possible
approach is to use continuous time approximations to study the distribution
of the annuities with exponential mortality, for which detailed theoretical
results are available, see \cite{DufresneAnnuity}.
In Section~\ref{NumericalSection} we will study the impact of the continuous 
time approximation for the distributional properties of discrete time annuities.


\subsection{Applications to Asian Options}

We have studied the distribution of
\begin{equation}
X_{N}=\sum_{i=1}^{N}e^{\sigma W_{t_{i}}+(m-\frac{1}{2}\sigma^{2})t_{i}},
\end{equation}
where $N$ follows a geometric distribution, independent of the Brownian motion $W_{t}$, that is,
\begin{equation}
\mathbb{P}(N=k)=(1-p)^{k-1}p,
\qquad
k=1,2,3,\ldots.
\end{equation}

Let $m=r-q$, where $r$ is the risk-free rate and $q$ is the dividend yield. 
Then the Asian call option price with strike price $K>0$
and initial stock price $S_{0}>0$ for the Black-Scholes model
is given by
\begin{equation}
C=e^{-r\tau n}\mathbb{E}\left[\left(\frac{1}{n}\sum_{i=1}^{n}S_{0}e^{\sigma W_{t_{i}}+(m-\frac{1}{2}\sigma^{2})t_{i}}-K\right)^{+}\right].
\end{equation}
Therefore, to compute the call option price, it suffices to compute:
\begin{equation}
P_{n}:=\mathbb{E}\left[\left(\sum_{i=1}^{n}e^{\sigma W_{t_{i}}+(m-\frac{1}{2}\sigma^{2})t_{i}}-\kappa\right)^{+}\right],
\end{equation}
for any positive number $\kappa>0$.

Assume for any $0<z<1$, we can compute the generating function of $P_{n}$, that is,
\begin{equation}
F(z):=\sum_{n=1}P_{n}z^{n}.
\end{equation}
Then, it is clear that $P_{n}$ can be computed as the $n$-th derivative of $F(z)$ w.r.t. $z$ at $z=0$, that is,
\begin{equation}
P_{n}=\frac{1}{n!}\frac{d^{n}}{dz^{n}}F(z)\bigg|_{z=0}.
\end{equation}

On the other hand, it is easy to see that
\begin{align}
F(z)&=\sum_{n=1}^{\infty}\mathbb{E}\left[\left(\sum_{i=1}^{n}e^{\sigma W_{t_{i}}+(m-\frac{1}{2}\sigma^{2})t_{i}}-\kappa\right)^{+}\right]z^{n}
\\
&=\frac{z}{1-z}\sum_{n=1}^{\infty}\mathbb{E}\left[\left(\sum_{i=1}^{n}e^{\sigma W_{t_{i}}+(m-\frac{1}{2}\sigma^{2})t_{i}}-\kappa\right)^{+}\right](1-(1-z))^{n-1}(1-z)
\nonumber
\\
&=\frac{z}{1-z}\mathbb{E}\left[\left(X_{N}-\kappa\right)^{+}\right],
\nonumber
\end{align}
with $p=1-z$ in the definition of $X_{N}$, where $N$ is geometrically distributed with parameter $p$
and we have already discussed the properties of the distribution of $X_{N}$ in the previous sections.
Therefore, 
\begin{align}
P_{n}&=\frac{1}{n!}\sum_{k=0}^{n}\binom{n}{k}\left(\frac{z}{1-z}\right)^{(n-k)}\bigg|_{z=0}
\frac{d^{k}}{dz^{k}}\mathbb{E}\left[\left(X_{N}-\kappa\right)^{+}\right]\bigg|_{z=0}
\\
&=\frac{1}{n!}\sum_{k=0}^{n}\binom{n}{k}(n-k)!
\frac{d^{k}}{dz^{k}}\mathbb{E}\left[\left(X_{N}-\kappa\right)^{+}\right]\bigg|_{z=0}
\nonumber
\\
&=\sum_{k=0}^{n-1}\frac{1}{k!}
\frac{d^{k}}{dz^{k}}\mathbb{E}\left[\left(X_{N}-\kappa\right)^{+}\right]\bigg|_{z=0}
\nonumber
\\
&=\sum_{k=0}^{n-1}\frac{1}{k!}(-1)^{k}
\int_{\kappa}^{\infty}\frac{\partial^{k}}{\partial p^{k}}(x-\kappa)f(x;\beta,\rho,p)\bigg|_{p=1}.
\nonumber
\end{align}

Similarly one can compute the price of Asian put options and any Asian 
type options with payoff being a function of $X_{n}$. 

Finally, we remark that in the continuous time setting, one
can use the exponentially distributed maturity Asian options
and then use the inverse Laplace transform to obtain
the Asian option prices with finite maturity, as shown in
Geman, Yor \cite{GemanYor} and Carr, Schr\"oder \cite{CS}. 
See \cite{DufresneReview} for a review. 
Our approach using geometrically distributed maturity
for the discrete time Asian options, is the discrete time analogue
of the continuous time approach familiar from the literature.


\section{Numerical Studies}\label{NumericalSection}

\subsection{Infinite Sum of the GBM}

We will compute the density function $f(x;\beta, \rho)$ of $X_\infty$
by solving the integral equation (\ref{integraleq}). 
The numerical evaluation of the integral is simplified by introducing
the new variable $u = \log(x+1)$ taking values in $u:(0,\infty)$,
and the new unknown function $F(u;\beta,\rho) = f(e^u-1;\beta,\rho)$. 
With this change of variables the equation (\ref{integraleq}) becomes
\begin{equation}\label{integraleq2}
F(u;\beta,\rho) = e^{\beta-\rho}
\int_0^\infty \frac{dw}{\sqrt{2\pi\beta}}
e^{-\frac{1}{2\beta}(w - w_0(u))^2} F(w;\beta,\rho),
\end{equation}
with $w_0(u) = \log(e^u-1) + \frac32 \beta - \rho$. This eliminates the
factor of $1/x$ in (\ref{integraleq}) which could introduce numerical noise 
for small values of $x$. We used trapezoidal quadrature with step
$h = 0.01$. The convergence of the trapezoidal quadrature as
$h \to 0$ is controlled by the following theorem (\cite{DR}, page 208). 
\begin{theorem}\label{Thm:Eh}
Let $a$ and $k$ be fixed, and let $f(x) \in C^{2k+1}[a,b]$ for all $b\geq a$. 
Suppose further that $\int_a^\infty dx f(x)$ exists, that
\begin{equation}
M = \int_a^\infty |f^{(2k+1)}(x)| dx \leq \infty,
\end{equation}
and that 
\begin{eqnarray}
&& f'(a) = f^{(3)}(a) = \cdots = f^{(2k-1)}(a) = 0, \\
&& \lim_{x\to \infty} f'(x) = 
   \lim_{x\to \infty} f^{(3)}(x) = \cdots = 
   \lim_{x\to \infty} f^{(2k-1)}(x) = 0 \,.
\end{eqnarray}
Then the quadrature error for trapezoidal quadrature with step $h>0$ is bounded
from above as
\begin{eqnarray}
&& E_h = \left|\int_a^\infty f(x) dx - 
      h\left[ \frac12 f(a) + f(a+h) + f(a+2h) + \cdots \right]\right| \\
&& \qquad \leq h^{2k+1} \frac{M\zeta(2k+1)}{2^{2k} \pi^{2k+1}}, \nonumber
\end{eqnarray}
where $\zeta(p)=\sum_{j=1}^\infty j^{-p}  $ is the Riemann zeta function.
\end{theorem}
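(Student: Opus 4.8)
The plan is to derive this as a consequence of the Euler--Maclaurin summation formula on the semi-infinite interval, with the stated hypotheses designed precisely to annihilate all the explicit boundary correction terms. First I would reduce to unit step size by the substitution $g(t) = f(a+ht)$: since $g^{(n)}(t) = h^{n} f^{(n)}(a+ht)$, the quantity $E_h$ equals $h$ times the trapezoidal error for $g$ sampled at the integers $t = 0,1,2,\dots$, and tracking the change of variables produces the overall factor $h^{2k+1}$ together with the identity $\int_0^\infty |g^{(2k+1)}(t)|\,dt = h^{2k} M$.

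Next I would apply the Euler--Maclaurin formula with $p = k$ correction terms on a finite truncation $t \in \{0,1,\dots,N\}$, namely
\[
h\Bigl[\tfrac12 g(0) + g(1) + \cdots + \tfrac12 g(N)\Bigr] - \int_0^N g(t)\,dt = \sum_{j=1}^{k}\frac{B_{2j}}{(2j)!}\bigl[g^{(2j-1)}(N) - g^{(2j-1)}(0)\bigr] + R_k,
\]
where the remainder is $R_k = -\int_0^N \frac{\bar B_{2k+1}(t)}{(2k+1)!}\, g^{(2k+1)}(t)\,dt$ and $\bar B_{2k+1}$ denotes the periodic Bernoulli function. The boundary terms at $t=0$ involve the odd derivatives $g^{(2j-1)}(0)$ for $j = 1,\dots,k$, i.e. $f'(a), f^{(3)}(a), \dots, f^{(2k-1)}(a)$, which vanish by hypothesis; passing to the limit $N\to\infty$, the corresponding terms at the right endpoint vanish by the limit hypotheses, while convergence of $\int_0^\infty g$ and of the infinite trapezoidal sum is guaranteed by the existence of $\int_a^\infty f$ and by $M < \infty$. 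What survives is only the remainder term.

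Finally I would bound $|R_k|$ using $\int_0^\infty |g^{(2k+1)}| = h^{2k}M$ together with the sup-norm of $\bar B_{2k+1}$. The key computation is the Fourier expansion $\bar B_{2k+1}(t) = -\frac{2(-1)^{k}(2k+1)!}{(2\pi)^{2k+1}}\sum_{m=1}^\infty \frac{\sin(2\pi m t)}{m^{2k+1}}$, which gives $\sup_t |\bar B_{2k+1}(t)| \leq \frac{2(2k+1)!}{(2\pi)^{2k+1}}\zeta(2k+1)$. Dividing by $(2k+1)!$ yields exactly $\frac{\zeta(2k+1)}{2^{2k}\pi^{2k+1}}$, and restoring the outer factor $h$ from the rescaling produces the claimed bound $E_h \leq h^{2k+1}\frac{M\zeta(2k+1)}{2^{2k}\pi^{2k+1}}$. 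I expect the main obstacle to be the careful justification of the $N\to\infty$ limit: showing that the finite-interval identity survives truncation, that the endpoint correction terms genuinely tend to zero rather than merely remain bounded, and that the remainder integral converges absolutely — all of which hinge on combining the decay hypotheses on the odd derivatives with the integrability assumption $M<\infty$. The Fourier-series sup-norm bound on the Bernoulli function, while the source of the explicit constant, is a standard and self-contained computation.
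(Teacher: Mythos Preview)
Your approach via Euler--Maclaurin is correct and is in fact the standard derivation; the paper does not prove this theorem at all but quotes it from Davis and Rabinowitz, \textit{Methods of Numerical Integration} (p.~208), so there is no ``paper's own proof'' to compare against. The rescaling to unit step, the vanishing of the Bernoulli boundary corrections under the odd-derivative hypotheses, and the Fourier-series bound $\sup_t|\bar B_{2k+1}(t)|\le 2(2k+1)!\,\zeta(2k+1)/(2\pi)^{2k+1}$ are exactly the ingredients that produce the stated constant.

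One small slip: in your displayed finite-$N$ Euler--Maclaurin identity you wrote an extraneous factor $h$ in front of the trapezoidal sum for $g$. After the substitution $g(t)=f(a+ht)$ the step size is $1$, so the left side should read $\bigl[\tfrac12 g(0)+g(1)+\cdots+\tfrac12 g(N)\bigr]-\int_0^N g(t)\,dt$; the factor $h$ enters only when you convert the unit-step error for $g$ back into $E_h$. Your bookkeeping elsewhere (``$E_h$ equals $h$ times the trapezoidal error for $g$'') shows you have this straight, so it is a typo rather than a gap.
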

The conditions of this theorem are satisfied by the integrand in (\ref{integraleq2}) 
for any $k \geq 0$. The function $F(u;\beta,\rho)$ and all its derivatives
vanish at $u=0$, as seen from equation \eqref{upperEp}. The right tail asymptotics
proves that all the derivatives vanish at $x\to \infty$ as well. We will use
this theorem to estimate an upper bound on the quadrature error by numerical
evaluation of the constant $M$.

We will solve the equation (\ref{integraleq2}) by 
iteration, starting with an initial function $f_0(x;\beta,\rho)$ on the
right hand side, and using  the result $f_1(x;\beta,\rho)$ as integrand
in the next step. We stop when convergence is reached, to a prescribed
degree of accuracy, as measured by the $L_\infty$ norm of the difference
between successive iterations
\begin{equation}
\Delta_n = \Vert f_n(x) - f_{n-1}(x)\Vert_{L_\infty}.
\end{equation}
As an illustration of the rate of convergence of the iteration we show in 
Figure~\ref{Fig:1} (left plot) plots of $\log_{10}|\Delta_n|$ vs $n$ 
for the iteration of equation (\ref{integraleq2}) with $\rho = -0.1$ and
several values of $\beta =1, 0.5, 0.1, 0.01$, for an initial
condition given by the inverse Gamma distribution with the appropriate $\beta,\rho$
parameters. For these cases the error approaches about $10^{-8}$ after about 
$\sim 100$ iterations.  We checked also that the normalization
of the density function $f(x;\beta,\rho)$ is correctly preserved to 1 
during the iteration. The quadrature error was estimated using 
Theorem~\ref{Thm:Eh} with $k=1$, and numerical evaluation of the constant $M$.
For $\beta=1,\rho=0$ this gives $E_h \leq 0.41 h^3$ and for $\beta=0.1,\rho=-0.1$
we have $E_h \leq 0.058 h^3$. We used $h=0.01$ such that the quadrature error 
is below $10^{-6}$ in all cases considered.

As starting function for the iteration we used two choices:

i) $f_0(x) = \phi_\infty(x;\rho,\beta)$. This is the inverse Gamma 
distribution, giving the distribution of $\tau X_\infty$ in the small $\tau$
limit.

ii) $f_0(x) = \frac{1}{\sqrt{2\pi\beta}x}e^{-\frac{1}{2\beta} 
(\log x + \frac12\beta-\rho)^2}$ the log-normal distribution of the
multiplier $\mathcal{A}$.
We checked that the iteration converges to the same distribution for both 
initial distributions.

In Figure~\ref{Fig:2} we show the density of $X_\infty$ given by 
$F(u;\beta,\rho)$ for $\beta=1$
and $\beta=0.1$, comparing the solution of the integral equation (solid curves)
with the continuous time approximation given by 
Proposition~\ref{prop:smalltau} (dashed curves). As 
expected from (\ref{limf2}), the density function approaches the 
inverse Gamma distribution $F(u;\beta,\rho)\to \phi_\infty(e^u-1;\beta,\rho)$ 
as $\beta,\rho \to 0$. The discrete time distribution $F(u;\beta,\rho)$ is more 
concentrated near the origin, and the right tail is more suppressed than the 
inverse Gamma distribution which is the continuous time limit. 

\begin{figure}[t]
\centering
\includegraphics[width=3in]{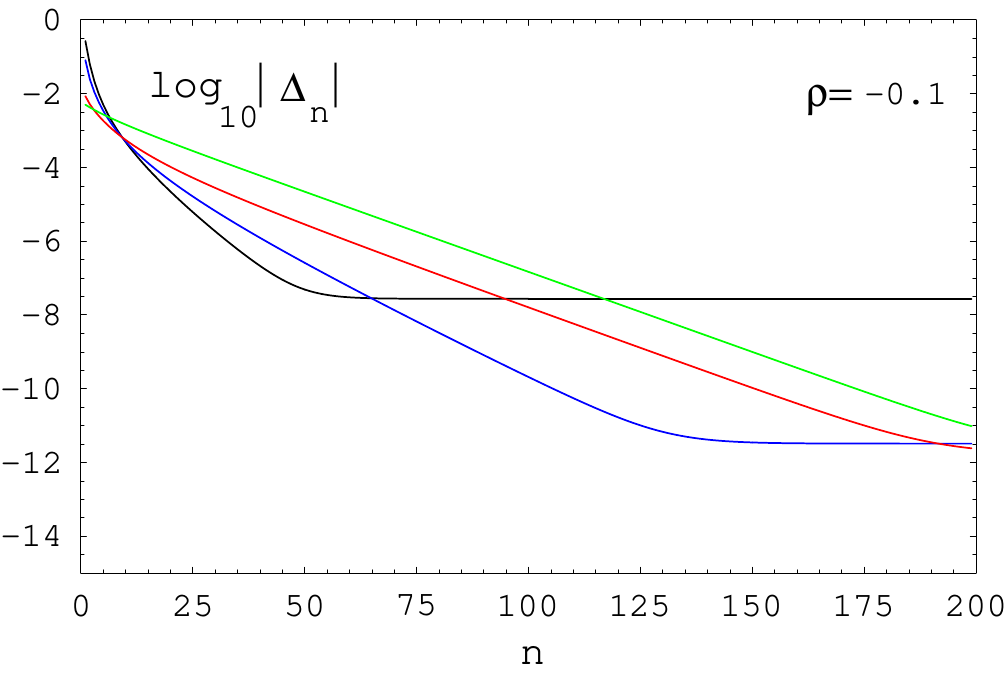}
\includegraphics[width=3in]{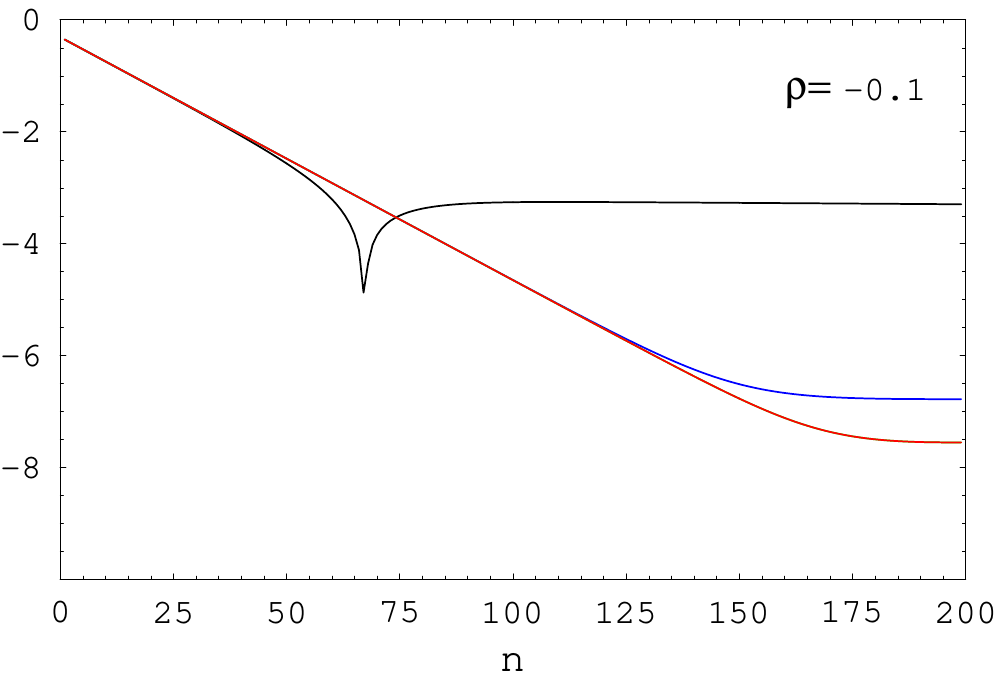}
\caption{
Left: Plot of $\log_{10}|\Delta_n|$ vs $n$ for the
iteration of equation (\ref{integraleq2}) with parameters $\rho = -0.1$
and $\beta=1$ (black), 0.5 (blue), 0.1 (red), 0.01 (green), and initial
condition $F_0$ given by the inverse Gamma distribution.
Right: plot of $\log_{10}|\mu_1(n) - \frac{e^\rho}{1-e^\rho}|$
vs $n$, with $\mu_1(n)$ the estimate of the first moment of $X_\infty$ 
after $n$ iterations. Same parameters as for the left plot.}
\label{Fig:1}
\end{figure}

These plots show also the impact of the $\rho$ parameter on the shape of the 
density function $F(u;\beta,\rho)$.
As expected, negative values of $\rho$ increase the density at small 
values of $X_\infty$, while positive values depress it, and increase the
contribution of the right tail.

As a test for the quality of the numerical solution we computed the
first moment $\mathbb{E}[X_\infty]$ using the density function 
$f(x;\beta,\rho)$. As discussed in Section~\ref{Sec:PositiveMoments}, 
this moment is finite for $\rho < 0$ and is given by 
$\mu_1=\mathbb{E}[X_\infty] = \frac{e^\rho}{1-e^\rho}$.
For $\rho=-0.1$, the convergence of the first moment to the theoretical
value is shown in Figure~\ref{Fig:1} (right) which shows plots of
$\log_{10}|\mu_1(n) - \frac{e^\rho}{1-e^\rho}|$ vs $n$ for several
values of $\beta$, with $\mu_1(n) = \int_0^\infty x f_n(x;\beta,\rho) dx$.
The spike in the $\beta=1$ plot is due to a change of sign of  the difference
$\mu_1(n) - \frac{e^\rho}{1-e^\rho}$.

The numerical values of the $(\beta,\rho)$ parameters used for these 
simulations cover the range of realistic values corresponding to practical
applications.
Typical values used in simulations for the equities model are
$\mu = 0.06, \sigma =0.2$ \cite{DufresneAnnuity}. Assuming
$r=0.01-0.05$ for the deterministic discount rate, gives that 
$m = -\mu - r$ takes values between -0.07 and -0.11. 

Assuming a annuity with monthly payments $\tau = 1/12$ the parameters 
determining the shape of the annuity density distribution are $\beta = 0.0033, 
-\rho = 0.006 - 0.009$. For a yearly annuity $\tau=1$ the parameters are 
$\beta = 0.04, -\rho=0.07-0.11$. As seen from Figure~\ref{Fig:1} the 
convergence properties of the iterative procedure for solving the equation
(\ref{integraleq2}) for these parameter values are good, and the iteration
converges with a moderate number of iterations of the order $n\sim 100$.

\begin{figure}[t]
\centering
\includegraphics[width=4in]{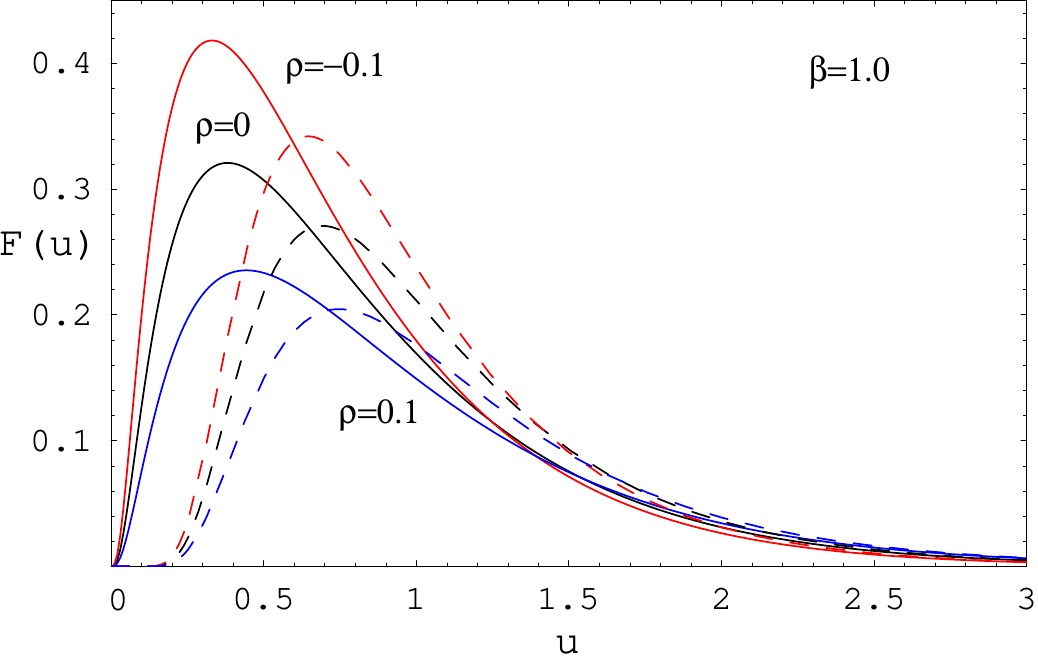}
\includegraphics[width=4in]{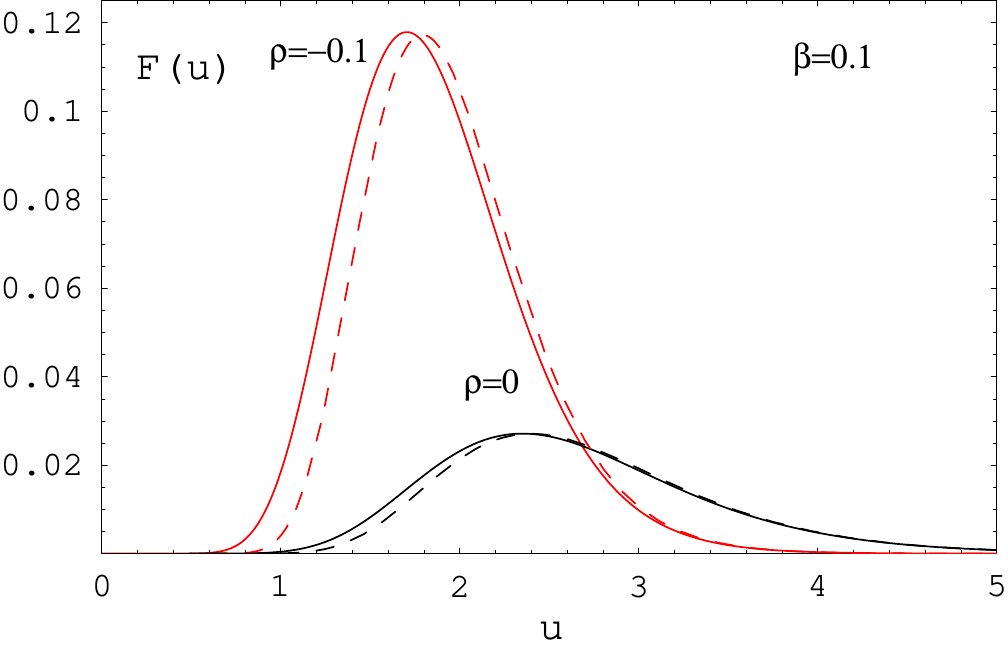}
\caption{Plots of $F(u;\beta,\rho)$ (solid curves), the density function of $X_\infty$,
for several values of $(\beta,\rho)$.
The dashed curves show the continuous time approximation
given by Proposition~\ref{prop:smalltau}.
Above: $\beta=1$, $\rho =-0.1$ (red), $\rho=0$ (black)
and $\rho=0.1$ (blue). 
Below: $\beta=0.1$, $\rho =-0.1$ (red), $\rho=0$ (black).}
\label{Fig:2}
\end{figure}

\begin{figure}[t]
\centering
\includegraphics[width=3in]{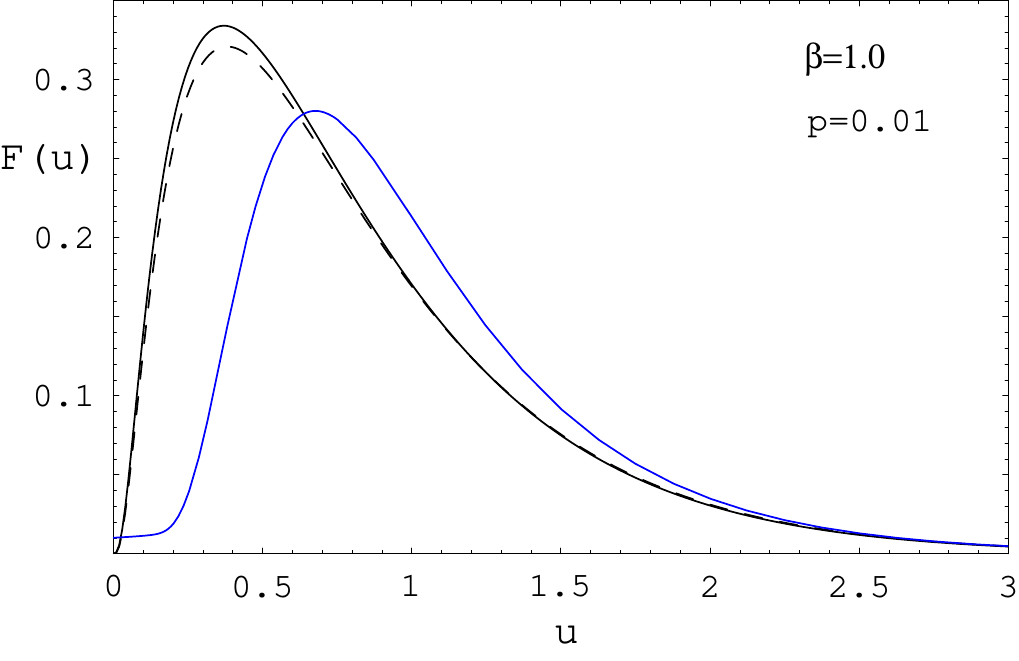}
\includegraphics[width=3in]{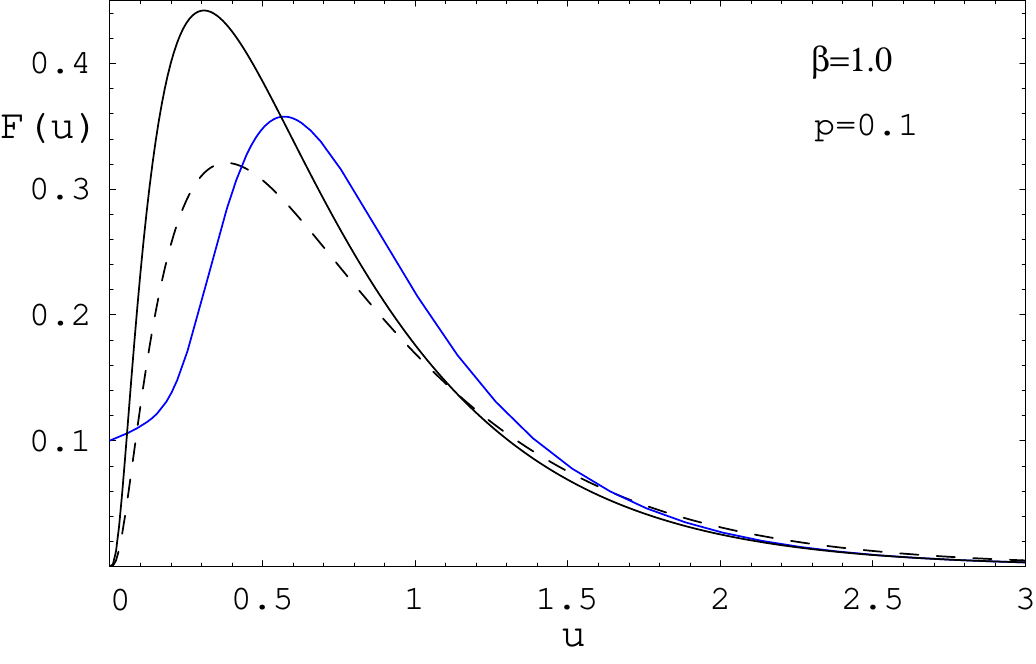}
\includegraphics[width=3in]{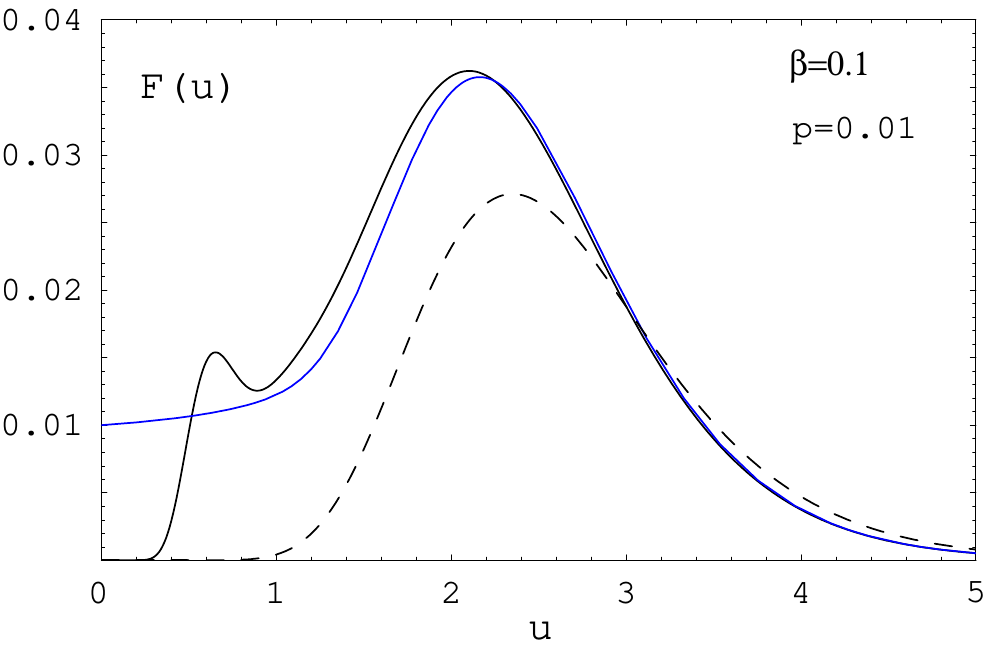}
\includegraphics[width=3in]{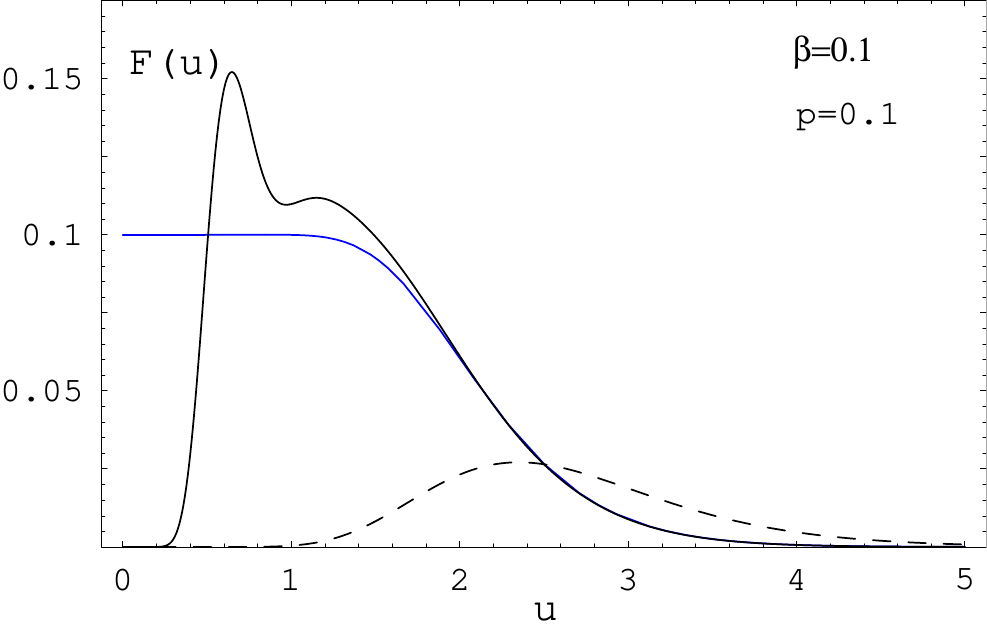}
\caption{
Plots of $F(u;\beta,\rho,p)$ (black curves), the density of $X_N$, for several
choices of the parameters.
Above: $\beta=1,\rho=0$, below: $\beta = 0.1, \rho=0$.
Left: $p= 0.01$, right: $p=0.1$.
The dashed curves correspond to $p=0$, and the blue curves show the 
continuous time approximation given by (\ref{fNsmalltau}).}
\label{Fig:4}
\end{figure}

\subsection{Annuities with Geometric Mortality}

We consider here the distributional properties of the annuity with
geometric mortality. 
Under the geometric mortality model, the density function 
$f(x;\beta,\rho,p)$ of the annuity $X_N$ is found by solving the integral 
equation in Proposition~\ref{prop:14}. 
After the change of variable $u=\log(x+1)$ this gives an integral
equation for the function $F(u;\beta,\rho,p)=f(e^u-1;\beta,\rho,p)$.
\begin{align}\label{integraleq3}
F(u;\beta,\rho,p) &= p \frac{1}{\sqrt{2\pi\beta}(e^u-1)}
e^{-\frac{1}{2\beta}(\log(e^u-1)+\frac12\beta-\rho)^2} 
\\
&\qquad\qquad
+(1-p)e^{\beta-\rho}
\int_0^\infty \frac{dw}{\sqrt{2\pi\beta}}
e^{-\frac{1}{2\beta}(w - w_0(u))^2} F(w;\beta,\rho,p),
\nonumber
\end{align}
with $w_0(u) = \log(e^u-1) + \frac32 \beta - \rho$.
We solve this integral equation 
by iteration, as in the case of the perpetuity $X_\infty$. 

What is a reasonable range for the parameter $p$ of the geometric
mortality model? We will constrain this parameter by matching to the
Makeham model, which is one of the popular models for mortality 
rates \cite{Bowers}. Under this model the yearly mortality rate $\mu(a)$ 
is specified by the functional form \cite{DufresneAnnuity}
\begin{equation}
\mu(a) = A + B e^{\beta x}.
\end{equation}
The model parameters are \cite{Bowers}
\begin{equation}
A = 0.0007\,, \qquad B = 5\cdot 10^{-5}\,,\qquad \beta = 0.0921\,.
\end{equation}

We will match the geometric mortality model parameter $p$ to this 
realistic model. We consider two possible approaches.

i) Matching the expected life average, conditional on survival at age
$a_0$. This gives the equation
\begin{equation}
\mathbb{E}[a|a_0] = \int_{a_0}^\infty  t p_s(t) \mu(t) dt = 
a_0 + \frac{1}{p},
\end{equation}
where the conditional survival probability $p_s(a|a_0)$ is given by
\begin{equation}
p_s(a|a_0) = \exp\left( - \int_{a_0}^a ds \mu(s) \right).
\end{equation}
This method gives $p=0.06443$ for $a_0=65$.

ii) Matching the mortality rates at age $a_0$.  This gives the
relation
\begin{equation}
\mu(a_0) = p(1-p)^{a_0}.
\end{equation}
This method gives $p=0.02132$ for $a_0=65$. 

The two values of $p$ determined as above
span a range of realistic values, with i) being on the upper
side of the realistic estimates for the $p$ parameter, while ii) is
on the lower side of the range. A more precise matching is possible,
using the result noted in \cite{GSY} that any positive-definite
discrete distribution can be approximated arbitrarily close by an 
appropriate linear combination of geometric distributions. 
A similar result holds in continuos time \cite{Dufresne2007}, and 
states that any positive definite continuous
distribution can be approximated arbitrarily close by an appropriate
linear combination of exponential distributions. 

In order to study the effect of the geometric mortality on the
shape of the density of $X_N$, we show in Figure~\ref{Fig:4}
the results with $p=0.01$ and $p=0.1$ (black curves), comparing with the
results for the density of $X_\infty$ ($p=0$) (dashed curves).
The remaining parameters are $\beta=1$ (above) and $\beta=0.1$ 
(lower) and $\rho=0$.
As expected, the effect of turning on a non-zero 
mortality rate $p$ is to increase the density at low values of $X_N$ 
and decrease the contribution of larger values of $X_N$. 

In the small $\tau$ limit the distribution of $\tau X_N$ is expected
to approach the distribution of the time integral of the GBM up to
an exponentially distributed time $Y_{T_\lambda}$, as stated in (\ref{fNsmalltau}).
The asymptotic continuous time distribution is shown as the blue curves in
Figure~\ref{Fig:4}. The agreement with the continuous time result is seen to 
become better for smaller values of $\beta,p$, as predicted by equation 
(\ref{fNsmalltau}).
The plots in Figure~\ref{Fig:4} show that the approach to the
continuous time limit is qualitatively different in the right and left
tails of the distribution. While the right tail has a similar shape to that
of the continuous time distribution (shown as the blue curve), the
left tail displays very different qualitative behavior in the continuous
and discrete time. We note two main differences in the left tail behavior:

i) the discrete time density of $X_N$ vanishes as $X_N\to 0$, while 
the continuous time density of $Y_{T_\lambda}$ approaches a finite value 
at origin $\phi_\lambda(0;\beta,\rho,p)=p$, as shown in 
Proposition~\ref{prop:2app}.

ii) for certain parameter values, the discrete time density of $X_N$  
has a bimodal distribution, with a peak visible near small values of $X_N$.
This is due to the contribution to the density from events with killing at 
the first time of the annuity. In contrast, the density of the asymptotic
continuous time distribution of $Y_{T_\lambda}$ has a unimodal shape.

The average value of the annuity is known exactly from 
Section~\ref{Sec:PositiveMoments}
\begin{equation}
\mathbb{E}[X_N] = \frac{e^\rho}{1-(1-p)e^\rho}\,.
\end{equation}
We checked the quality of the numerical solution by computing this 
expectation by numerical quadrature and confirming good agreement
with the theoretical result. For $p=0.1, \rho=0$ with $n=200$ iterations
we get $\mathbb{E}[X_N]= 9.9998$ which is close to the theoretical result 
($\mathbb{E}[X_N]=10$).

Following \cite{DufresneAnnuity} we compute the shortfall probability
\begin{equation}
\mathbb{P}_{qK} = \mathbb{P}(X_N > (1+q)K) \,,
\end{equation}
with $K=\mathbb{E}[X_N]$ and $q=0, 0.5$. 
This measures the probability that an initial investment in the amount of
100\% and 150\% of the average annuity value, respectively, will not be 
sufficient to make the required payments on the annuity. The results are 
given in Table~\ref{Table:1}, for the same parameters $(\beta,p)$ as in 
Figure~\ref{Fig:4}.
In the last column we show the continuous time result for the shortfall 
probability given by 
\begin{eqnarray}\label{PqKcont}
\mathbb{P}_{qK}^{\tau=0} = \bar \Phi_\lambda\left(\frac12\sigma^2 K(1+q);\alpha,\beta\right),
\end{eqnarray}
with $\bar\Phi_\lambda$ given in (\ref{Philambdabar}).
The difference with the discrete time result shows the effect of the 
discrete time nature of the annuity $X_N$ on its distribution and tail
behavior. We note that, for the parameters considered, the discrete time
and continuous time shortfall probabilities are very similar, the difference
between them being below 1\%. The agreement improves for smaller $\beta,p$, as
expected from the asymptotic result of Proposition~\ref{prop:XNsmalltau}. 
We conclude that
the continuous time approximation is reasonably good for practical purposes
in the right tail  of the annuity distribution, although the left tail can have
very different behavior even for comparatively small values of $\beta,p$.


\begin{table}[t!]
\caption{\label{Table:1} 
Shortfall probabilities $\mathbb{P}_{qK}$ for a single-life annuity with 
discrete time payouts, for several choices of the model parameters. The last 
column shows the continuous time result given by equation (\ref{PqKcont}).}
\begin{center}
\begin{tabular}{cccccccc}
\hline
$\beta$ & $\rho$ & $p$ & $\mathbb{E}[X_N]$ & $q$ & $(1+q)K$ & 
          $\mathbb{P}_{qK}$ & $\mathbb{P}_{qK}^{\tau=0}$ \\
\hline
\hline
1 & 0 & 0.1  & 10  & 0   & 10  & 0.10852 & 0.10658 \\
1 & 0 & 0.1  & 10  & 0.5 & 15  & 0.07122 & 0.06849 \\
1 & 0 & 0.01 & 100 & 0   & 100 & 0.01781 & 0.01783 \\
1 & 0 & 0.01 & 100 & 0.5 & 150 & 0.01187 & 0.01183 \\
\hline
0.1 & 0 & 0.1  & 10  & 0   & 10  & 0.26821 & 0.27067 \\
0.1 & 0 & 0.1  & 10  & 0.5 & 15  & 0.15846 & 0.15899 \\
0.1 & 0 & 0.01 & 100 & 0   & 100 & 0.10625 & 0.10658 \\
0.1 & 0 & 0.01 & 100 & 0.5 & 150 & 0.06853 & 0.06849 \\
\hline
1.0 & -0.1 & 0.1  & 4.87398  & 0   & 4.87398  & 0.16415 & 0.17072 \\
1.0 & -0.1 & 0.1  & 4.87398  & 0.5 & 7.31098  & 0.10509 & 0.10605 \\
1.0 & -0.1 & 0.01 & 8.68275  & 0   & 8.68275  & 0.12334 & 0.13083 \\
1.0 & -0.1 & 0.01 & 8.68275 & 0.5  & 13.02412 & 0.08018 & 0.08321 \\
\hline
0.1 & -0.1 & 0.1  & 4.87398 & 0   & 4.87398  & 0.34969 & 0.37558 \\
0.1 & -0.1 & 0.1  & 4.87398 & 0.5 & 7.31098  & 0.17592 & 0.19044 \\
0.1 & -0.1 & 0.01 & 8.68275 & 0   & 8.68275  & 0.32828 & 0.35577 \\
0.1 & -0.1 & 0.01 & 8.68275 & 0.5 & 13.02412 & 0.15949 & 0.17039 \\
\hline
\end{tabular}
\end{center}
\end{table}

\subsection{Application to Pricing Asian Options}

We illustrate in this section the application of the methods discussed
in Section~\ref{ApplicationSection} for pricing Asian options in the 
Black-Scholes model with discrete time averaging. We also discuss the
conditions of applicability of the methods of approximating the distribution
of the finite sum of GBM with the infinite sum $X_\infty$ or integral
$Y_\infty$, as proposed in \cite{MP}.

Theorem~\ref{Thm22} gives a method for computing recursively the probability 
density function of $X_n$, the finite sum of GBM. This can be used to 
price Asian options with discrete time sampling in the BS model.
For a more efficient calculation of the integral we change variables as 
in (\ref{integraleq2}). 

Using this method we priced Asian options with discrete sampling under
the scenarios considered in Vecer's paper \cite{Vecer}: 
$\sigma=0.4, r=0.1, T=1, K=100$,
spot price $S_0=95,100,105$ and several choices for the number of time 
steps $n = T/\tau$ from 10 to 1000. 
Table~\ref{Table:2} shows the results of the calculation 
for the discounted call Asian option price $C(K,T) = e^{-rT} 
\mathbb{E}[(X_n/n - K)_+]$. The results are in good agreement with those
obtained by Vecer (Table B in \cite{Vecer}) using a PDE method proposed in 
\cite{VecerJCF}, 
by Curran \cite{Curran} using a method based on conditioning on the geometric 
average, and by Tavella-Randall \cite{TR} using the PDE method of Rogers
and Shi \cite{RogersShi}.

The quality of the numerical integration was checked by computing also the 
expectation $\mathbb{E}[X_n]$ and checking that it agrees with the exact 
theoretical result, given by
\begin{equation}
\mathbb{E}[X_{n}] = S_0 \sum_{k=1}^n e^{kr \tau} = 
S_0\frac{e^{nr\tau}-1}{1 - e^{-r\tau}},
\end{equation}
up to four significant decimal points. We also priced put options and 
checked that put-call parity is satisfied to four significant decimal 
points.
\begin{equation}
C(K,T) - P(K,T) = e^{-rT} \left(\frac{S_0}{rT}(e^{rT}-1) - K\right)\,.
\end{equation}

\begin{table}
\caption{\label{Table:2} 
Numerical results for Asian options with discrete sampling in the Black-Scholes
model using the method described in this paper, under the scenarios discussed
in text. The results are compared with those of Vecer \cite{Vecer}, Curran
\cite{Curran} and Tavella-Randall \cite{TR}.}
\begin{center}
\begin{tabular}{cccccc}
\hline
$n$ & $S_0$ & $C(K,T)$ & Vecer 
            & Curran   & Tavella-Randall \\
\hline
\hline
10 &  95 & 9.2239  & 9.2228   &  9.2197 &  9.2149 \\
10 & 100 & 12.0424  & 12.042   & 12.0390 & 12.0348 \\
10 & 105 & 15.2243  & 15.2234  & 15.2202 & 15.2168 \\
\hline
25 &  95 &  8.7086  & 8.708   & 8.7053 &  8.6974 \\
25 & 100 & 11.4910  & 11.4906 & 11.4881 & 11.4803 \\
25 & 105 & 14.6510  & 14.651  & 14.6483 & 14.6415 \\
\hline
50 &  95 &  8.5371  & 8.5367  & 8.5340 & 8.5383 \\
50 & 100 & 11.3070  & 11.3068 & 11.3043 & 11.2982 \\
50 & 105 & 14.4611  & 14.4601 & 14.4575 & 14.4519 \\
\hline
125 &  95 & 8.4347  & 8.4339 &  8.4314  & 8.4304 \\
125 & 100 & 11.1974  & 11.1967 & 11.1940 & 11.1929 \\
125 & 105 & 14.3459  & 14.3455 & 14.3430 & 14.3424 \\
\hline
250 &  95 &  8.4006 &  8.4001  &  8.3972 & 8.3972 \\
250 & 100 & 11.1607  & 11.1600  & 11.1572 & 11.1573 \\
250 & 105 & 14.3081   & 14.3073  & 14.3048 & 14.3054 \\
\hline
500 &  95 &  8.3831  & 8.3826  &   8.3801  & 8.3804 \\
500 & 100 & 11.1422  & 11.1416 &  11.1388  & 11.1392 \\
500 & 105 & 14.2887  & 14.2881 &  14.2857  & 14.2866 \\
\hline
1000 &  95 &  8.3718  & 8.3741 &   8.3715  & 8.3719 \\
1000 & 100 & 11.1301  & 11.1322 & 11.1296  & 11.1300 \\
1000 & 105 & 14.2754  & 14.2786 & 14.2762  & 14.2771 \\
\hline
\end{tabular}
\end{center}
\end{table}

Next we discuss the pricing of Asian options under the following 
approximations:
\begin{enumerate}

\item Approximating the distribution of $\tau X_n$ with that of 
$Y_\infty$, the infinite time integral of the GBM. This is the 
theoretical basis of the approximation 
of Milevsky and Posner \cite{MP}, who propose a moment
matching approach for modeling the distribution of $X_n$ with an
inverse Gamma distribution.

\item Approximating the distribution of $X_n$ with that of $X_\infty$,
the infinite sum of GBM sampled on discrete times.

\end{enumerate}

These approximations are expected to be most precise in the limit of a
very large number of sampling times $n\to \infty$. 
We note that these approximations can only be used if $r< 0$,
which is the condition for the existence of the limiting distributions for
$Y_\infty$ (Theorem~\ref{Thm:Dufresne}) and for $X_\infty$ 
(Proposition~\ref{prop:2}), and for the finiteness of the expectations
$\mathbb{E}[Y_\infty] < \infty, \mathbb{E}[X_\infty] < \infty$.

Allowing also for a continuously paid dividend yield $q$ this condition reads
$r-q<0$. In usual applications to equity options the difference $r-q$ is
positive, although it could become negative in low rates environments.
A natural case where negative drifts appear is for Asian options on FX rates. 
Denote $X_t$ a currency exchange rate, defined as the number of 
units of domestic currency corresponding to one unit of foreign currency. 
One of the simplest models for the dynamics of exchange rates is to assume 
that $X_t$ satisfies the diffusion $dX_t/X_t = (r_d - r_f) dt + \sigma_X dW_t$ 
in the domestic currency risk-neutral measure. The drift of $X_t$
is given by the difference between the domestic and foreign
interest rates, such that the condition of applicability of the approximations
mentioned above is $r_d - r_f < 0$.

Another condition for the applicability of the approximations (1),(2)
is that the expectations of $\tau X_n$ and $Y_\infty$ ($X_n$ and $X_\infty$) should
be sufficiently close. For (1) this condition reads $\tau \frac{e^{nr\tau}-1}
{1 - e^{-r\tau}} \sim - \frac{1}{r}$ which requires in addition 
$n \tau \gg 1$ and $|r\tau | \ll 1$.
In practice this requires that the option maturity $T$ be sufficiently long
such that $|rT| \gg 1$, and that the time step is sufficiently small.
On the other hand, the corresponding condition for the approximation (2) is
$\frac{e^{nr\tau}-1}{1 - e^{-r\tau}}
\sim  \frac{1}{e^{-r\tau}-1}$ which requires only $n \gg 1$ and $n\tau \gg 1$
but does not impose any constraints on the size of the time step $\tau$.

\section*{Acknowledgements}
We would like to thank an anonymous referee and the editor for helpful
suggestions and comments. Lingjiong Zhu gratefully acknowledges support 
from the National Science Foundation via the award NSF-DMS-1613164.


\appendix
\section{The Law of the Time Integral of the GBM Up to an Exponentially 
Distributed Time}
\label{app}

We summarize in this Appendix for the convenience of the reader
a few known results concerning the distribution
of the time integral of GBM up to an exponentially distributed random time.
These results have been derived in \cite{Yor0}. See also \cite{DufresneReview} 
for a survey of related results.

Let us define
\begin{equation}
Y_{T_\lambda} := \int_0^{T_\lambda} dt
e^{\sigma W_t + (m - \frac12\sigma^2) t}.
\end{equation}
the time integral of the GBM up to an exponentially distributed time 
$T_\lambda \sim \mathbf{Exp}(\lambda)$. 
The probability density function of $Y_{T_\lambda}$ is given by the 
following result.

\begin{proposition}\label{prop:1app}
The probability density function of $Y_{T_\lambda}$ is given by
\begin{equation}\label{Phiz}
\phi_\lambda(z;\sigma,m,\lambda) = 
\frac12 \sigma^2 \varphi\left(\frac12 \sigma^2 z; \alpha,\beta\right),
\end{equation}
where 
\begin{equation}\label{varphiy}
\varphi(y;\alpha,\beta)= 
\frac{\alpha \beta \Gamma(\alpha)}{\Gamma(\alpha+\beta+1)} y^{-\beta-1}
{}_1 F_1\left(\beta+1,\alpha+\beta+1;-\frac{1}{y}\right),
\end{equation}
which is the density of the random variable $Y:=B_{1,\alpha}/G_{\beta}$ with 
$B_{1,\alpha}$ and $G_\beta$ being independent random variables 
distributed as Beta and Gamma distributions. Their densities are
\begin{eqnarray}\label{BetaPDF}
\varphi_B(b) &=& \alpha (1-b)^{\alpha-1}\,, \quad b \in [0,1], \\
\label{GammaPDF}
\varphi_G(g) &=& \frac{1}{\Gamma(\beta)} g^{\beta-1} e^{-g}\,, \quad
g \in [0,\infty) \,.
\end{eqnarray}

The constants $\alpha,\beta$ are
\begin{eqnarray}\label{alphadef}
\alpha &=& \frac{1}{2\sigma^2}
\left( 2m - \sigma^2 + \sqrt{(2m-\sigma^2)^2+8\lambda\sigma^2} \right), 
\\
\label{betadef}
\beta &=& \frac{1}{2\sigma^2}
\left( - 2m + \sigma^2 + \sqrt{(2m-\sigma^2)^2+8\lambda\sigma^2} \right).
\end{eqnarray}
\end{proposition}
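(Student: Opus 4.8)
The plan is to reduce the general integral $Y_{T_\lambda}$ to the standardized time integral $A^{(\mu)}_t = \int_0^t e^{2\mu s + 2 W_s}\,ds$, invoke the Yor identity \eqref{YorFormula}, and then compute explicitly the density of the resulting ratio of a Beta and a Gamma variable. Since \eqref{YorFormula} is quoted from \cite{Yor0}, all the probabilistic content is already available; the remaining work is a change of variables together with a single hypergeometric integral.

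First I would carry out the Brownian scaling. Writing $\hat W_s := \frac{\sigma}{2} W_{4s/\sigma^2}$, which is again a standard Brownian motion, the substitution $r = 4s/\sigma^2$ in $A^{(\mu)}_t(\hat W)$ gives $A^{(\mu)}_t = \frac{\sigma^2}{4}\int_0^{4t/\sigma^2} e^{\sigma W_r + \frac12\mu\sigma^2 r}\,dr$. Choosing $\mu = (2m-\sigma^2)/\sigma^2$ so that $\frac12\mu\sigma^2 = m - \frac12\sigma^2$, this identifies $Y_T = \frac{4}{\sigma^2} A^{(\mu)}_{\sigma^2 T/4}$ in law. Replacing $T$ by $T_\lambda\sim\mathbf{Exp}(\lambda)$ and using that $\frac14\sigma^2 T_\lambda\sim\mathbf{Exp}(4\lambda/\sigma^2)$, I obtain $Y_{T_\lambda} = \frac{4}{\sigma^2}A^{(\mu)}_{T_{\lambda'}}$ with $\lambda' = 4\lambda/\sigma^2$.

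Next I would substitute $\mu$ and $\lambda'$ into the parameters of \eqref{YorFormula}. A direct computation gives $\sqrt{2\lambda'+\mu^2} = \sigma^{-2}\sqrt{(2m-\sigma^2)^2 + 8\lambda\sigma^2}$, so the Yor parameters $\frac{\mu}{2}\pm\frac12\sqrt{2\lambda'+\mu^2}$ collapse exactly to the $\alpha,\beta$ of \eqref{alphadef}--\eqref{betadef}. Combining with $2A^{(\mu)}_{T_{\lambda'}} = B_{1,\alpha}/G_\beta$ then yields $Y_{T_\lambda} = \frac{2}{\sigma^2}\,B_{1,\alpha}/G_\beta$ in distribution, and the density scaling $\phi_\lambda(z;\sigma,m,\lambda) = \frac12\sigma^2\,\varphi(\frac12\sigma^2 z;\alpha,\beta)$ of \eqref{Phiz} follows at once, with $\varphi$ the density of $Y := B_{1,\alpha}/G_\beta$.

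It remains to compute $\varphi$ in closed form. Applying the ratio-density formula to $Y = B_{1,\alpha}/G_\beta$ with the densities \eqref{BetaPDF}--\eqref{GammaPDF}, and noting that the Beta support forces $g \le 1/y$, gives $\varphi(y) = \frac{\alpha}{\Gamma(\beta)}\int_0^{1/y}(1-yg)^{\alpha-1} g^\beta e^{-g}\,dg$; the substitution $g = u/y$ turns this into $\frac{\alpha}{\Gamma(\beta)} y^{-\beta-1}\int_0^1 u^\beta (1-u)^{\alpha-1} e^{-u/y}\,du$. Kummer's integral representation ${}_1F_1(a,c;x) = \frac{\Gamma(c)}{\Gamma(a)\Gamma(c-a)}\int_0^1 e^{xt} t^{a-1}(1-t)^{c-a-1}\,dt$ with $a = \beta+1$, $c = \alpha+\beta+1$ and $x = -1/y$ evaluates the integral, and using $\Gamma(\beta+1)/\Gamma(\beta) = \beta$ reproduces exactly \eqref{varphiy}. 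I expect the only real obstacle to be the bookkeeping: tracking the scaling constants through the time change and confirming that the Yor parameters reduce to \eqref{alphadef}--\eqref{betadef}; the hypergeometric step is routine once the integral representation is recognized.
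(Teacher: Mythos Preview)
Your proof is correct and follows essentially the same two-step approach as the paper: a Brownian time change to reduce $Y_{T_\lambda}$ to Yor's standardized integral $A^{(\mu)}_{T_{\lambda'}}$ with $\mu=\frac{2m}{\sigma^2}-1$ and $\lambda'=\frac{4\lambda}{\sigma^2}$, followed by reading off the ratio density of $B_{1,\alpha}/G_\beta$. You actually supply more detail than the paper does, since the paper leaves the hypergeometric evaluation of the ratio density as ``easy to show by explicit calculation,'' whereas you carry out the Kummer-integral step in full.
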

The result (\ref{varphiy}) agrees with the explicit result for this density 
in \cite{DufresneReview} (fourth equation from bottom on p.~12).

We note that for $\lambda>0$, the constant $\alpha$ is always strictly 
positive. This is a necessary condition for $B_{1,\alpha}$ to exist.

\begin{proposition}[Right tail behavior]
The right tail of the distribution $Y_{T_{\lambda}}$ is given by:
\begin{equation}\label{Philambdabar}
\mathbb{P}(Y_{T_\lambda} > z) = \int_z^\infty dx \phi_\lambda(x;\sigma,m,\lambda) 
= \int_{\frac12 \sigma^2 z}^\infty dy \varphi(y;\alpha,\beta) = 
\bar \Phi_\lambda\left(\frac12 \sigma^2 z;\alpha,\beta\right),
\end{equation}
where $\bar\Phi_\lambda(y;\alpha,\beta)$ is the complementary cumulative 
distribution of $Y:=B_{1,\alpha}/G_{\beta}$ with 
$B_{1,\alpha}$ and $G_\beta$ being independent random variables 
distributed as Beta and Gamma distributions,
which has the right tail asymptotics as $y\rightarrow\infty$:
\begin{equation}
\bar\Phi_\lambda(y;\alpha,\beta) = \int_y^\infty dw \varphi(w;\alpha,\beta) 
=\frac{\alpha \Gamma(\alpha)}{\Gamma(\alpha+\beta+1)} y^{-\beta} ( 1+ O(y^{-1})).
\end{equation}
\end{proposition}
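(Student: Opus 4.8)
The plan is to split the proof into two essentially independent pieces: first the chain of equalities in \eqref{Philambdabar}, which is purely a change of variables, and then the large-$y$ asymptotics of $\bar\Phi_\lambda$, which follows from the Kummer series of the confluent hypergeometric function appearing in \eqref{varphiy}.

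For the chain of equalities, I would start from the definition $\mathbb{P}(Y_{T_\lambda} > z) = \int_z^\infty \phi_\lambda(x;\sigma,m,\lambda)\,dx$ and substitute the density supplied by Proposition~\ref{prop:1app}, namely $\phi_\lambda(x;\sigma,m,\lambda) = \frac12\sigma^2\,\varphi(\frac12\sigma^2 x;\alpha,\beta)$ from \eqref{Phiz}. The substitution $y = \frac12\sigma^2 x$ converts $\frac12\sigma^2\,dx$ into $dy$ and sends the lower limit $z$ to $\frac12\sigma^2 z$, producing the middle integral $\int_{\frac12\sigma^2 z}^\infty \varphi(y;\alpha,\beta)\,dy$. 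Recognizing this as the complementary cumulative distribution of $Y = B_{1,\alpha}/G_\beta$ evaluated at $\frac12\sigma^2 z$ gives the final equality by the very definition of $\bar\Phi_\lambda$.

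For the tail asymptotics, I would use the power series ${}_1F_1(a,b;x) = \sum_{n\geq 0}\frac{(a)_n}{(b)_n}\frac{x^n}{n!}$ with $a = \beta+1$, $b = \alpha+\beta+1$ and $x = -1/y$. Since $\alpha,\beta>0$ (both are strictly positive for $\lambda>0$, as already noted after Proposition~\ref{prop:1app} and as follows directly from \eqref{betadef}), the parameter $b$ is positive and ${}_1F_1$ is entire in its argument; because $-1/y\to 0$ as $y\to\infty$, one gets the uniform expansion ${}_1F_1(\beta+1,\alpha+\beta+1;-1/y) = 1 - \frac{\beta+1}{\alpha+\beta+1}\frac{1}{y} + O(y^{-2})$. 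Substituting into \eqref{varphiy} yields
\[
\varphi(y;\alpha,\beta) = \frac{\alpha\beta\Gamma(\alpha)}{\Gamma(\alpha+\beta+1)}\,y^{-\beta-1}\bigl(1 + O(y^{-1})\bigr).
\]
Integrating from $y$ to $\infty$ and using $\int_y^\infty w^{-\beta-1}\,dw = \frac{1}{\beta}y^{-\beta}$ (convergent precisely because $\beta>0$) then gives
\[
\bar\Phi_\lambda(y;\alpha,\beta) = \frac{\alpha\beta\Gamma(\alpha)}{\Gamma(\alpha+\beta+1)}\cdot\frac{1}{\beta}\,y^{-\beta}\bigl(1 + O(y^{-1})\bigr) = \frac{\alpha\Gamma(\alpha)}{\Gamma(\alpha+\beta+1)}\,y^{-\beta}\bigl(1 + O(y^{-1})\bigr),
\]
which is the claimed behavior.

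The only step requiring genuine care is the passage from the pointwise expansion of $\varphi$ to the integrated expansion of $\bar\Phi_\lambda$: one must verify that the $O(y^{-1})$ remainder in the integrand integrates to an honest $O(y^{-1})$ relative correction rather than accumulating. This is immediate here, since the remainder is bounded by $C\,w^{-\beta-2}$ uniformly for large $w$, and $\int_y^\infty w^{-\beta-2}\,dw = \frac{1}{\beta+1}y^{-\beta-1}$ is exactly one power of $y$ smaller than the leading $y^{-\beta}$ term. Consequently no real obstacle arises; once the Kummer series is invoked the argument reduces to a routine termwise integration.
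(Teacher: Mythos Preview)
Your proof is correct and complete. The paper actually states this proposition without proof, treating both the change-of-variables identity and the tail asymptotics as direct consequences of the explicit density in Proposition~\ref{prop:1app}; your argument fills in precisely the details one would expect, and your care in checking that the $O(y^{-1})$ remainder survives the integration is appropriate.
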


We note that the exponent is the same as for the right tail asymptotics of 
the discrete time sum of GBM with geometric mortality derived in 
(\ref{mugeom}).  This can be seen by replacing $p=\lambda\tau$ in 
(\ref{mugeom})  and approximating $\log(1-p) \sim -p$. This gives 
for the exponent
\begin{equation}
\mu = \frac{1}{2\sigma^2} 
(-2m + \sigma^2 + \sqrt{(2m-\sigma^2)^2 + 8\sigma^2 \lambda}),
\end{equation}
which agrees precisely with $\beta$ defined in (\ref{betadef}).

\begin{proposition}[Left tail behavior]\label{prop:2app}
The left tail behavior of the density $\varphi(y;\alpha,\beta)$ is
\begin{equation}
\lim_{y\to 0} \varphi(y;\alpha,\beta) = \alpha\beta = \frac{2}{\sigma^2} \lambda \,.
\end{equation}
The probability density $\varphi(y;\alpha,\beta)$ approaches a 
non-vanishing constant near the origin $y\to 0$.
\end{proposition}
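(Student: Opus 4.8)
The plan is to read off the $y\to 0$ limit directly from the closed form \eqref{varphiy}, which already isolates all the $y$-dependence in the factors $y^{-\beta-1}$ and ${}_1F_1(\beta+1,\alpha+\beta+1;-1/y)$. Since $y\to 0^{+}$ sends the argument of the confluent hypergeometric function to $-\infty$ along the negative real axis, the key input is the large-negative-argument asymptotics of Kummer's function. Writing $M(a,b,z)={}_1F_1(a,b;z)$, one has for real $z\to-\infty$ the standard expansion
\[
M(a,b,z)=\frac{\Gamma(b)}{\Gamma(b-a)}(-z)^{-a}\bigl(1+O(z^{-1})\bigr)+\frac{\Gamma(b)}{\Gamma(a)}e^{z}z^{a-b}\bigl(1+O(z^{-1})\bigr),
\]
in which the second (companion) term is exponentially small because $e^{z}\to 0$, so only the algebraic term survives.

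With this in hand the computation is a bookkeeping of powers and Gamma factors. Setting $a=\beta+1$, $b=\alpha+\beta+1$ and $z=-1/y$ gives $(-z)^{-a}=y^{\beta+1}$ and $b-a=\alpha$, hence
\[
{}_1F_1\!\left(\beta+1,\alpha+\beta+1;-\tfrac{1}{y}\right)\sim\frac{\Gamma(\alpha+\beta+1)}{\Gamma(\alpha)}\,y^{\beta+1},\qquad y\to 0^{+}.
\]
Multiplying by the prefactor $\tfrac{\alpha\beta\Gamma(\alpha)}{\Gamma(\alpha+\beta+1)}y^{-\beta-1}$ from \eqref{varphiy}, the powers $y^{\pm(\beta+1)}$ and the Gamma factors cancel completely, leaving $\lim_{y\to0}\varphi(y;\alpha,\beta)=\alpha\beta$. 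To finish, I would verify the stated value by evaluating the product of the two roots: with $D=\sqrt{(2m-\sigma^{2})^{2}+8\lambda\sigma^{2}}$, the definitions \eqref{alphadef}--\eqref{betadef} give $\alpha\beta=\tfrac{(D+(2m-\sigma^{2}))(D-(2m-\sigma^{2}))}{4\sigma^{4}}=\tfrac{D^{2}-(2m-\sigma^{2})^{2}}{4\sigma^{4}}=\tfrac{8\lambda\sigma^{2}}{4\sigma^{4}}=\tfrac{2}{\sigma^{2}}\lambda$, as claimed.

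The only genuinely delicate point is justifying the discarded companion term and controlling the error uniformly, so I would give an alternative self-contained route through the probabilistic representation $Y=B_{1,\alpha}/G_{\beta}$ of \eqref{BetaPDF}--\eqref{GammaPDF}, which avoids quoting special-function asymptotics. Using the ratio density formula together with the constraint $yg\in[0,1]$, one writes
\[
\varphi(y;\alpha,\beta)=\frac{\alpha}{\Gamma(\beta)}\int_{0}^{1/y}(1-yg)^{\alpha-1}g^{\beta}e^{-g}\,dg.
\]
For $\alpha\ge 1$ the integrand is dominated by $g^{\beta}e^{-g}$ and dominated convergence gives the limit $\tfrac{\alpha}{\Gamma(\beta)}\Gamma(\beta+1)=\alpha\beta$ immediately. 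The harder case $0<\alpha<1$ is where the factor $(1-yg)^{\alpha-1}$ blows up near $g=1/y$; here I would substitute $s=yg$ to obtain $y^{-\beta-1}\int_{0}^{1}(1-s)^{\alpha-1}s^{\beta}e^{-s/y}\,ds$ and apply Watson's lemma, noting that the endpoint singularity at $s=1$ is exponentially suppressed by $e^{-1/y}$ while the small-$s$ behavior $(1-s)^{\alpha-1}s^{\beta}=s^{\beta}(1+O(s))$ produces the leading term $\Gamma(\beta+1)y^{\beta+1}$. This again yields $\alpha\beta$, uniformly in the value of $\alpha>0$, and confirms that $\varphi$ approaches a nonzero constant rather than vanishing or diverging at the origin.
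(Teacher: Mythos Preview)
Your primary argument is essentially the same as the paper's: both extract the $y\to 0$ limit from the large-negative-argument asymptotics of ${}_1F_1$ applied to \eqref{varphiy}, with the Gamma factors and powers of $y$ cancelling to leave $\alpha\beta$. The only cosmetic difference is that the paper obtains the relevant asymptotic by first applying the Kummer transformation ${}_1F_1(a,b,z)=e^{z}\,{}_1F_1(b-a,b,-z)$ and then invoking the large-\emph{positive}-argument expansion, whereas you quote the full two-term large-negative-argument expansion directly and discard the exponentially small companion term; these are equivalent. Your explicit verification that $\alpha\beta=2\lambda/\sigma^{2}$ via the difference-of-squares identity is a small addition the paper leaves implicit.

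Your alternative route through the ratio-density integral $\varphi(y)=\tfrac{\alpha}{\Gamma(\beta)}\int_{0}^{1/y}(1-yg)^{\alpha-1}g^{\beta}e^{-g}\,dg$ is genuinely different from anything in the paper and is a nice self-contained argument: it replaces the quoted special-function asymptotics by dominated convergence (for $\alpha\ge 1$) or a Watson-lemma estimate (for $0<\alpha<1$). This buys you a proof that does not rely on handbook formulas, at the cost of the extra case split; the paper's approach is shorter but leans on the Kummer asymptotics as a black box.
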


This tail behavior is different from the discrete time case, where we find 
that the density of $X_N$ always vanishes near the zero point. This behavior 
implies that all negative moments $\mathbb{E}[Y_{T_\lambda}^\theta]$
with $\theta \leq -1$ do not exist. On the other hand, for the discrete 
time case all the negative moments of $X_N$ are finite.

\textbf{Limiting case $\lambda=0$}. This corresponds to $T_\lambda \to \infty$, 
since the expectation of $T_\lambda$ under the exponential distribution is
\begin{equation}
\mathbb{E}[T_\lambda] = \frac{1}{\lambda}.
\end{equation}

We distinguish two cases:

i) $m < \frac12\sigma^2$. We have $(\alpha,\beta) = (0,1-\frac{2m}{\sigma^2})$;

ii) $ m > \frac12 \sigma^2$. This gives a negative $\alpha$, which is 
meaningless since the $\mbox{Beta}(1,\alpha)$ distribution is defined only for 
$\alpha \geq 0$. 

The confluent hypergeometric function in case (i) can be expressed in closed 
form using the identity ${}_1 F_1(b,b,z) = e^z$, which holds for any positive 
integer $b$. We have
\begin{equation}
\varphi\left(y;0,1-\frac{2m}{\sigma^2}\right)=  \frac{1}{\Gamma(1+\frac{2m}{\sigma^2})}
y^{-2+\frac{2m}{\sigma^2}} e^{-1/y}.
\end{equation}
Using (\ref{Phiz}) we get the density function of the infinite time 
integral of the GBM 
\begin{eqnarray}
\lim_{\lambda\to 0}\phi_\lambda(z;\sigma,m,0) &=& 
\frac12\sigma^2 \frac{1}{\Gamma(1+\frac{2m}{\sigma^2})}
\left(\frac12 \sigma^2 z\right)^{-2+\frac{2m}{\sigma^2}} \exp\left(- \frac{2}{\sigma^2 z} \right) \\
&=& (2/\sigma^2)^{1 - \frac{2m}{\sigma^2}} \frac{1}{(z^2)^{1-\frac{m}{\sigma^2}}} \frac{1}{\Gamma(1+\frac{2m}{\sigma^2})}
\exp\left(- \frac{2}{\sigma^2 z} \right).\nonumber
\end{eqnarray}
This agrees with $\phi_\infty(z;\sigma,m)$ in equation (\ref{invGamma}).

\begin{proof}[Proof of Proposition \ref{prop:1app}]

\textbf{Step 1}. Use a time change to relate $Y_{T_\lambda}$ to a certain 
integral for which we know the distribution from Yor's paper \cite{Yor0}. 
This is 
\begin{equation}
Y_{t_\lambda}^{(\mu)} = \int_0^{t_\lambda} e^{2\mu s + 2W_s} ds,
\end{equation}
where $t_\lambda \sim \mathbf{Exp}(\lambda)$. It is known \cite{Yor0} 
that the distribution of this integral is
\begin{equation}
2Y_{t_\lambda}^{(\mu)} = \frac{B_{1,\hat \alpha}}{G_{\hat\beta}},
\end{equation}
with $\hat\alpha = \frac12\mu +\frac12 \sqrt{\mu^2+2\lambda}$ and 
$\hat\beta = -\frac12\mu + \frac12 \sqrt{\mu^2+2\lambda}$.

It is easy to see, using a time change, that we have
\begin{equation}
Y_{T_\lambda} = \frac{4}{\sigma^2} 
   Y_{\frac14 \sigma^2 T_\lambda}^{(\frac{2m}{\sigma^2}-1)}.
\end{equation}
We have $\frac14 \sigma^2 T_\lambda \sim 
\mathbf{Exp}(\frac{4}{\sigma^2}\lambda)$. 
Substituting $\lambda \to \frac{4}{\sigma^2}\lambda$ and 
$\mu \to \frac{2m}{\sigma^2}-1$ into the expressions for 
$\hat\alpha,\hat\beta$ we get the results (\ref{alphadef}) and (\ref{betadef}) 
for the parameters $\alpha,\beta$.

\textbf{Step 2.} 
Use the result \cite{Yor0} 
\begin{equation}
Y_{T_\lambda} = \frac{2}{\sigma^2} \frac{B_{1,\alpha}}{G_\beta}
\end{equation}
in distribution. Denoting $y=b/g$ the ratio of the two independent random 
variables, it is easy to show by explicit calculation that its pdf is given 
by (\ref{varphiy}). Taking into account also the factor $2/\sigma^2$ we get 
the final result (\ref{Phiz}).
\end{proof}

\begin{proof}[Proof of Proposition \ref{prop:2app}]
We prove here the left tail asymptotics.
This follows from the asymptotic expansion for the confluent 
hypergeometric function of large negative argument:
\begin{equation}
{}_1 F_1\left(\beta+1,\alpha+\beta+1,-\frac{1}{y}\right) \sim 
\frac{\Gamma(\alpha+\beta+1)}{\Gamma(\alpha)} y^{\beta+1}\,, \quad 
\text{as $y \to 0$}.
\end{equation}
This is obtained from the asymptotics for large positive argument
\begin{equation}
{}_1 F_1(a,b,z) \sim \frac{\Gamma(b)}{\Gamma(a)} e^{z} z^{a-b}\,,\qquad
\text{as $z\to \infty$},
\end{equation}
together with the Kummer transformation relation
\begin{equation}
{}_1 F_1(a,b,z) = e^z {}_1 F_1(b-a,b,-z) \,.
\end{equation}
\end{proof}


\section{Proofs}

\label{App:proofs}

\begin{proof}[Proof of Theorem~\ref{InftyThm}]
(i) We have
\begin{align}
&\left|\tau\sum_{i=1}^{N}
e^{\sigma W_{t_{i-1}}+(m-\frac{1}{2}\sigma^{2})t_{i-1}}
-\int_{0}^{T}e^{\sigma W_{t}+(m-\frac{1}{2}\sigma^{2})t}dt\right|
\\
&\leq\sum_{i=1}^{N}
\left|\tau e^{\sigma W_{t_{i-1}}+(m-\frac{1}{2}\sigma^{2})t_{i-1}} -
\int_{t_{i-1}}^{t_{i}} e^{\sigma W_{t}+(m-\frac{1}{2}\sigma^{2})t}dt\right|
\nonumber
\\
&\leq\tau\sum_{i=1}^{N}  \max_{t_{i-1}\leq t\leq t_{i}}
\left|e^{\sigma W_{t}+(m-\frac{1}{2}\sigma^{2})t}
-e^{\sigma W_{t_{i-1}}+(m-\frac{1}{2}\sigma^{2}) t_{i-1}}\right|.
\nonumber
\end{align}

Add and subtract in each term $e^{\sigma W_t + (m-\frac12 \sigma^2) t_{i-1}}$
and use the triangle inequality. Each term in the sum is bounded from above
as
\begin{equation}
\tau \max_{t_{i-1}\leq t\leq t_{i}}
\left|e^{\sigma W_{t}+(m-\frac{1}{2}\sigma^{2})t}
-e^{\sigma W_{t_{i-1}}+(m-\frac{1}{2}\sigma^{2}) t_{i-1}}\right|
\leq T_1^{(i)} + T_2^{(i)},
\end{equation}
where we denoted
\begin{eqnarray}
T_1^{(i)} &=& 
\tau \max_{t_{i-1}\leq t\leq t_{i}}
\left| e^{\sigma W_{t}} - e^{\sigma W_{t_{i-1}}} \right|
e^{(m-\frac{1}{2}\sigma^{2}) t_{i-1}}, \\
T_2^{(i)} &=& 
\tau \max_{t_{i-1}\leq t\leq t_{i}}
e^{\sigma W_{t}}
\left| e^{(m-\frac{1}{2}\sigma^{2}) t} -  
       e^{(m-\frac{1}{2}\sigma^{2}) t_{i-1}} \right|.
\end{eqnarray}

We would like to show that the following sum converges to zero as $\tau \to 0$
\begin{equation}
\lim_{\tau \to 0, \tau N=T}
\sum_{i=1}^N \mathbb{E}\left[T_1^{(i)} + T_2^{(i)} \right] \to 0.
\end{equation}

We bound each term in turn. 
\begin{align}
\mathbb{E}\left[T_1^{(i)}\right] 
&=
\tau \mathbb{E}\left[ e^{\sigma W_{t_{i-1}} +(m-\frac12\sigma^2) t_{i-1}}
\max_{t_{i-1} \leq t \leq t_{i}} \left| e^{\sigma (W_{t}-W_{t_{i-1}})} - 1 \right|\right] 
\\
&=\tau e^{mt_{i-1}}\mathbb{E}\left[\max_{t_{i-1} \leq t \leq t_{i}}\left| e^{\sigma (W_{t}-W_{t_{i-1}})} - 1 \right|\right] 
\nonumber
\\
&\leq
\tau e^{m t_{i-1}} \mathbb{E}\left[\max_{0\leq t \leq \tau}
|e^{\sigma W_t} - 1 |\right] \nonumber 
\\
&\leq\tau e^{m t_{i-1}} \mathbb{E}\left[e^{\sigma\max_{0\leq t\leq\tau}W_{t}}-e^{\sigma\min_{0\leq t\leq\tau}W_{t}}\right] \nonumber 
\\
&= \tau e^{m t_{i-1}} \left(\mathbb{E}\left[e^{\sigma|W_{\tau}|}\right]
-\mathbb{E}\left[e^{-\sigma|W_{\tau}|}\right]\right), \nonumber 
\end{align}
where we used the reflection principle for Brownian motions.
Summing over $i$ we have
\begin{align}\label{205}
\sum_{i=1}^N \mathbb{E}\left[T_1^{(i)}\right] 
&\leq \tau 
\frac{e^{m\tau N} - 1}{e^{m\tau} - 1}
\left(\mathbb{E}\left[e^{\sigma|W_{\tau}|}\right]-\mathbb{E}\left[e^{-\sigma|W_{\tau}|}\right]\right)
\\
&=
\tau 
 \frac{e^{m T} - 1}{e^{m\tau} - 1} e^{\frac{1}{2}\sigma^{2}\tau}\cdot 2\left[\Phi(\sigma\sqrt{\tau})-\Phi(-\sigma\sqrt{\tau})\right]
\rightarrow 0 ,
\nonumber
\end{align}
as $\tau \to 0$, where $\Phi(x):=\int_{-\infty}^{x}\frac{1}{\sqrt{2\pi}}e^{-y^{2}/2}dy$ is the cumulative
distribution function of a standard normal random variable.

The second term is bounded in a similar way. 
\begin{align}
\mathbb{E}\left[T_2^{(i)}\right] 
& 
\leq \tau \mathbb{E}\left[\max_{t_{i-1} < t < t_i}
e^{\sigma (W_t - W_{t_{i-1}})}
e^{\sigma W_{t_{i-1}} + (m-\frac12\sigma^2) t_{i-1}} \right]
\\
&\qquad\qquad\cdot 
\max_{t_{i-1} < t < t_i} \left|e^{(m-\frac12\sigma^2)(t - t_{i-1})} - 1 \right| \,. \nonumber
\end{align}
The two factors in the expectation are again independent, so the expectation
factors into their expectations. The supremum over the last factor depends
on the sign of $m-\frac12\sigma^2$. For $m-\frac12 \sigma^2>0$ this is 
$e^{(m-\frac12\sigma^2)\tau}-1$ and for $m-\frac12 \sigma^2<0$ this is 
$1-e^{(m-\frac12\sigma^2)\tau}$. 
This gives
\begin{align}\label{207}
\mathbb{E}\left[T_2^{(i)}\right] 
&\leq \tau \mathbb{E}\left[\max_{t_{i-1} < t < t_i}
e^{\sigma (W_t - W_{t_{i-1}})} \right] e^{m t_{i-1}}
\mbox{sgn}(m-\frac12\sigma^2)
\left(e^{(m-\frac12\sigma^2)\tau} - 1 \right) 
\\
&\leq 2 \Phi(\sigma\sqrt{\tau}) \tau 
e^{\frac12 \sigma^2\tau} e^{m t_{i-1}}
\mbox{sgn}(m-\frac12\sigma^2)\left(e^{(m-\frac12\sigma^2)\tau} - 1 \right).
\nonumber \\
& = 2 \Phi(\sigma\sqrt{\tau}) \tau  e^{m t_{i-1}}
\left| e^{m\tau} - e^{\frac12\sigma^2\tau} \right|.
\nonumber
\end{align}
This is summed over $i=1,N$ as previously, and the result is finite and 
goes to zero as $\tau \to 0$.

(ii) The case of the infinite maturity $T\to \infty$ is treated analogously,
except that one requires $m<0$ in order to ensure the convergence of the sums
$\sum_{i=1}^\infty T_{1}^{(i)}$ and $\sum_{i=1}^\infty T_{2}^{(i)}$. 
\end{proof}


\begin{proof}[Proof of Theorem~\ref{ThmExponentialT}]
{\bf Step 1.} We start by proving the limit
\begin{equation}\label{eq1}
\lim_{\tau \to 0} \left|
\tau\sum_{i=1}^{N}e^{\sigma W_{t_{i-1}}+(m-\frac{1}{2}\sigma^{2})t_{i-1}}
- \int_{0}^{N\tau}e^{\sigma W_{t}+(m-\frac{1}{2}\sigma^{2})t}dt \right|
= 0
\end{equation}
in probability as $\tau\rightarrow 0$. Here $N$ is a geometrically 
distributed random variable with parameter $p$.

Consider the expectation
\begin{eqnarray}
\Delta_\tau := \mathbb{E}\left[ \left| \tau\sum_{i=1}^{N}
   e^{\sigma W_{t_{i-1}}+(m-\frac{1}{2}\sigma^{2})t_{i-1}}
- \int_{0}^{N\tau}e^{\sigma W_{t}+(m-\frac{1}{2}\sigma^{2})t}dt
\right| \right] \,.
\end{eqnarray}
In order to prove (\ref{eq1}) it is sufficient to show $\lim_{\tau\to 0}
\Delta_\tau = 0$.

This expectation is written as
\begin{eqnarray}\label{eq5}
\Delta_\tau  = \sum_{k=1}^\infty p(1-p)^{k-1} 
\mathbb{E}\left[\left|\tau\sum_{i=1}^k
   e^{\sigma W_{t_{i-1}}+(m-\frac{1}{2}\sigma^{2})t_{i-1}}
- \int_{0}^{k\tau}e^{\sigma W_{t}+(m-\frac{1}{2}\sigma^{2})t}dt\right|\right] \,.
\end{eqnarray}

The expectations with fixed $k\in \mathbb{N}$ are bounded from above 
as shown in the proof of Theorem~\ref{InftyThm}. We have
\begin{eqnarray}\label{Deltafixedk}
&& \mathbb{E}\left[ \left| \tau\sum_{i=1}^k
   e^{\sigma W_{t_{i-1}}+(m-\frac{1}{2}\sigma^{2})t_{i-1}}
- \int_{0}^{k\tau}e^{\sigma W_{t}+(m-\frac{1}{2}\sigma^{2})t}dt\right|\right]  
    \leq \sum_{i=1}^k \mathbb{E}[T_1^{(i)} + T_2^{(i)}] \\
&& \leq (e^{m k\tau}-1) R_\tau \nonumber
\end{eqnarray}
where we have from equation (\ref{205})
\begin{eqnarray}
\sum_{i=1}^k \mathbb{E}[T_1^{(i)}] \leq 
\tau \frac{e^{mk\tau}-1}{e^{m\tau}-1} e^{\frac12\sigma^2\tau} 2[\Phi(\sigma\sqrt{\tau}) - \Phi(-\sigma\sqrt{\tau})]
\end{eqnarray}
and from equation (\ref{207})
\begin{eqnarray}
\sum_{i=1}^k \mathbb{E}[T_2^{(i)}] \leq 
\tau \frac{e^{mk\tau}-1}{e^{m\tau}-1} 
\left| e^{m\tau}- e^{\frac12\sigma^2\tau} \right| 2 \Phi(\sigma\sqrt{\tau})
\end{eqnarray}

We combined these two inequalities in the last line of (\ref{Deltafixedk}) 
by introducing
\begin{eqnarray}
R_\tau &:=& \tau \frac{1}{e^{m\tau}-1} e^{\frac12 \sigma^2\tau} 2[\Phi(\sigma\sqrt{\tau})-\Phi(-\sigma\sqrt{\tau})] \\
&+& \tau \frac{1}{e^{m\tau}-1} 
\left| e^{m\tau}-e^{\frac12 \sigma^2\tau}\right|
2\Phi(\sigma\sqrt{\tau}) \,.\nonumber
\end{eqnarray}

Substituting (\ref{Deltafixedk}) into (\ref{eq5}) we have
\begin{eqnarray}
 \Delta_\tau  \leq R_\tau \sum_{k=1}^\infty p(1-p)^{k-1} (e^{km\tau}-1)
 = \left( \frac{p e^{m\tau}}{1-e^{m\tau}(1-p)} - 1\right) R_\tau \,.
\end{eqnarray}
The sum over $k$ converges provided that $(1-p) e^{m\tau} < 1$. A sufficient
condition for this to hold for any $\tau >0$ is $m < \lambda$.

As $\tau\to 0$, we have
\begin{eqnarray}
\lim_{\tau\to 0} \left(\frac{p e^{m\tau}}{1-e^{m\tau}(1-p)} - 1 \right) = \frac{m}{\lambda-m}
\end{eqnarray}
and 
\begin{eqnarray}
\lim_{\tau \to 0} R_\tau = 0
\end{eqnarray}
Combining these results gives $\lim_{\tau \to 0} \Delta_\tau = 0$
which completes the proof of the result stated.

{\bf Step 2.}
In the next step we prove
\begin{equation}
\int_{0}^{N\tau}e^{\sigma W_{t}+(m-\frac{1}{2}\sigma^{2})t}dt
\rightarrow\int_{0}^{T_{\lambda}}e^{\sigma W_{t}+(m-\frac{1}{2}\sigma^{2})t}dt,
\end{equation}
in distribution as $\tau\rightarrow 0$.
For any $x>0$, 
\begin{align}
&\mathbb{P}\left(\int_{0}^{N\tau}e^{\sigma W_{t}+(m-\frac{1}{2}\sigma^{2})t}dt\leq x\right)
\\
&=\sum_{k=1}^{\infty}\mathbb{P}\left(\int_{0}^{k\tau}e^{\sigma W_{t}+(m-\frac{1}{2}\sigma^{2})t}dt\leq x\right)
(1-\lambda\tau)^{k-1}\lambda\tau,
\nonumber
\\
&=\frac{1}{1-\lambda\tau}\sum_{k=1}^{\infty}\mathbb{P}\left(\int_{0}^{k\tau}e^{\sigma W_{t}+(m-\frac{1}{2}\sigma^{2})t}dt\leq x\right)
\left((1-\lambda\tau)^{\frac{1}{\tau}}\right)^{k\tau}\lambda\tau
\nonumber
\\
&\rightarrow\int_{0}^{\infty}\mathbb{P}\left(\int_{0}^{u}e^{\sigma W_{t}+(m-\frac{1}{2}\sigma^{2})t}dt\leq x\right)
\lambda e^{-\lambda u}du
\nonumber
\\
&=\mathbb{P}\left(\int_{0}^{T_{\lambda}}
  e^{\sigma W_{t}+(m-\frac{1}{2}\sigma^{2})t}dt\leq x\right),
\nonumber
\end{align}
as $\tau\rightarrow 0$. Hence, we proved the desired result.
\end{proof}


\end{document}